\newcommand\COMP{\hbox{C\kern -.58em {\raise .54ex \hbox{$\scriptscriptstyle |$}}
\kern-.55em {\raise .53ex \hbox{$\scriptscriptstyle |$}} }}
\newcommand\NN{\hbox{I\kern-.2em\hbox{N}}}
\newcommand\RR{\hbox{I\kern-.2em\hbox{R}}}
\newcommand\sRR{{\it \hbox{I\kern-.2em\hbox{R}}}}
\newcommand\QQ{\hbox{I\kern-.53em\hbox{Q}}}
\newcommand\PP{\hbox{I\kern-.53em\hbox{P}}}
\newcommand\EE{\hbox{I\kern-.53em\hbox{E}}}
\newcommand\ZZ{{{\rm Z}\kern-.28em{\rm Z}}}
\newcommand\be{\begin{equation}}
\newcommand\ee{\end{equation}}
\newtheorem{theorem}{Theorem}[section]
\newtheorem{proposition}[theorem]{Proposition}
\newtheorem{remark}[theorem]{Remark}
\newtheorem{example}[theorem]{Example}
\newtheorem{lemma}[theorem]{Lemma}
\newtheorem{corollary}[theorem]{Corollary}
\newtheorem{definition}[theorem]{Definition}
\newcommand{\E}{\mathbb E}
\newcommand*\bigcdot{\mathpalette\bigcdot@{.5}}
\newcommand*\bigcdot@[2]{\mathbin{\vcenter{\hbox{\scalebox{#2}{$\m@th#1\bullet$}}}}}
\newcommand{\is}{\bigcdot }
\def \Lbrack {[\![}
\def \Rbrack {]\!]}
\numberwithin{equation}{section}
\newcommand{\overbar}[1]{\mkern 3.5mu\overline{\mkern-3.5mu#1\mkern-1.5mu}\mkern 1.5mu}
\DeclareMathOperator*{\loc}{loc}
\DeclareMathOperator*{\Var}{Var}
\begin{document}
\title{A martingale representation theorem and valuation of defaultable securities}

\author{Tahir Choulli\\ University of Alberta \and Catherine Daveloose and Mich\`ele Vanmaele\\Ghent University}



\maketitle\unmarkedfntext{
This research is supported by NSERC (through grant NSERC RGPIN04987)\\
The authors wish to thank A. Aksamit, J. Deng, V. Galvani, M. Jeanblanc, M. Rutkowski, Th. Schmidt, M. Schweizer, and the participants of the 11th Bachelier (Januray 2017) for helpful advices, discussions, comments and/or suggestions.\\
Tahir Choulli is thankful to Mich\`ele Vanmaele and Department of Applied Mathematics, Computer Science, and Statistics (Ghent University),  for their invitation and  hospitality, where this work started.\\
The authors are very grateful to two anonymous referees and an anonymous AE for their pertinent suggestions and comments that shaped this final version. Possible errors are our sole responsibility.\\
Address correspondence to Tahir Choulli, Department of Mathematical and Statistical Sciences, University of Alberta, 632 Central Academic Building,
Edmonton, Canada; e-mail: tchoulli@ualberta.ca}

\begin{abstract}
We consider a market model where there are two levels of information. The public information generated by the financial assets,
and a larger flow of information that contains additional knowledge about a random time. This random time can represent many economic and financial settings, such as the default time of a firm for credit risk, and the death time of an insured  for life insurance.
By using the expansion of filtration, the random time uncertainty and its entailed risk are fully considered without any mathematical restriction.
In this context with no model's specification for the random time, the main challenge lies in finding the dynamics and the structures for the value processes of defaultable or mortality and/or longevity securities which are vital for the insurance securitization. To overcome this obstacle, we elaborate our optional martingale representation
results, which state that any martingale in the large filtration stopped at the random time can be decomposed into precise and unique orthogonal local martingales (i.e. local martingales whose product remains a local martingale).
This constitutes our first and probably the principal contribution. Even though the driving motivation for this representation
resides in credit risk theory, our results are applicable to several other financial and economics contexts, such as life insurance and financial markets with random horizon. Thanks to this optional representation, we decompose any defaultable or mortality and/or longevity liability into the sum of ``non-correlated" risks using a risk basis. This constitutes our second contribution.  \\

\noindent{\bf Keywords:} Time of death/random horizon/default, Progressively enlarged filtration, Optional martingale representation, Risk decomposition, Defaultable securities, Valuation of securities.

\end{abstract}

\section{Introduction}
This paper considers a financial framework that is defined by an initial market, which is characterized by its flow of information $\mathbb F$ and its underlying traded assets $S$, and a random time that might not be seen through $\mathbb F$ when it occurs. In this setting, our principal goals reside in classifying the risk, defining a risk basis, and then representing any general and complex risk on this basis afterwards. Our results are clearly motivated and are applicable to credit risk theory, life insurance, and random horizon in finance.  While the literature about random horizon is timid and mostly in economics with discrete market models, the literature about credit risk, for which $\tau$ represents the default time, is very extensive. Among this literature, we refer the reader to  \cite{bieleckirutkowski02, blanchetjeanblanc04, guimlichscmidt, jiaoli}, and the references therein to cite few. The role and the importance of our martingale classification and representation sounds clear from this literature. For life insurance, the role of our results  becomes important due to the newly direction in mortality and/or longevity risk management called securitization. However, this role does not seem clear nor straightforward, and requires more details. Thus,  for the reader's convenience, the rest of this paragraph elaborates about this chapter of projected applications for our results. \\ 

Life insurance companies and pension funds face two main type of risks: financial risk and mortality or longevity risk.
 The financial risk is related to the investment in risky assets, while the mortality risk follows from the uncertainty of
death time and can be split into a systematic and an unsystematic part. For more details about this we refer to \cite{dhalmelchiormoller08, dhalmoller06} and the references therein.  Longevity risk refers to the risk that the realized future mortality
 trend exceeds current assumptions. This risk beside the systematic mortality risk cannot be diversified by
  increasing the size of the portfolio. Recently there has been an upsurge interest in transferring such illiquid risks into
  financial markets allowing risk pooling and risk transfer for many retail products. This process is known as securitization,
  and it started for the pure insurance risk in mid-1990s through insurance linked securitization and the catastrophe bond market.
  The initial risk securitization was the Swiss Re Vita Capital issue in December 2003. In \cite{blakeburrows01}, see also  \cite{barrieuetal09,blakeetal08b}  and the references therein, the authors were
  the first to advocate the use of mortality-linked securities to transfer longevity risk to capital market. One of the key challenge in this mortality securitization lies in finding the prices and their dynamics
   for the death securities that will be used  in this securitization such as longevity bonds. These prices obviously depend heavily on the mortality model used and the method used to price those
   securities. Since the Lee-Carter model of \cite{leecarter92}, there were many suggestions for mortality modelling. These models  can be classified into two main groups, depending whether the obtained model was inspired from credit risk modelling,
  or interest rate modelling. Many models assume that the paths of the conditional survival probability is decreasing in time. This was severally criticized by \cite{barbarin09},
  where the author proposes to model longevity bonds {\it \`a la} Heath-Jarrow-Morton. Recently, in \cite{friedbergwebb07} (see also \cite{cairnsetal06, bauer10} for related discussion),  the authors use the CAPM and the CCPAM to price longevity bonds, and concluded that this pricing is not accurate with the reality and suggest that there might be a kind of ``mortality premium puzzle" {\it \`a la } Mehra and Prescott \cite{Mehra1985}. While this mortality premium puzzle might exist, the ``poor and/or bad" specification
  of the model for the mortality plays an important role in getting those wrong prices for longevity bonds. Thus, naturally, one can ask whether the dynamics of the longevity bond's price process can be described without mortality specification. Or in other words, one ask the following
 
  \begin{equation}\label{Question1}
\mbox{What are the dynamics of a defaultable security's value process for an arbitrary $\tau$?}
\end{equation}

\subsection{Our main objectives and the related literature}
 To describe precisely and mathematically our main aims in this paper, we need some notation.  Throughout the whole paper, we consider given the financial market model described mathematically by the tuple $\left(\Omega, {\cal G}, {\mathbb F},S, P\right)$.  Herein, the filtered probability space $\left(\Omega,{\cal G},  {\mathbb  F}=({\cal F}_t)_{t\geq 0}, P\right)$ satisfies the usual condition (i.e. filtration is complete and  right continuous) with ${\cal F}_t\subset{\cal G}$, and $S$ is an $\mathbb F$-semimartingale representing the discounted price process of $d$ risky assets. The random time (default or mortality) is modelled with $\tau$, which is mathematically an arbitrary $[0,+\infty]$-valued random variable.  The flow of information generated by the public flow $\mathbb F$ and the random time $\tau$ will be denoted by $\mathbb G$, where the relationship between the three components $\mathbb F$, $\tau$ and $\mathbb G$ will be specified in the next section. \\
 
 Thus, our goals can be summarized into two main objectives that are intimately related. The first objective resides in giving a precise and rigorous answer to (\ref{Question1}). Up to our knowledge, all the existing literature about default or mortality and/or longevity assumes a specific model for $\tau$  and derives the dynamics of the value processes for the securities under consideration accordingly.  Our second  objective is the classification of risk into three categories with their mathematical modelling, and the elaboration of the following relationship.  
\begin{equation}\label{RiskDecomposition}
{\mathbb  G}\mbox{-Risk up to}\ \tau={\cal R}\Bigl(\mbox{PFR},\mbox{PDR}^1,...,\mbox{PDR}^k,\mbox{CR}^1,...,\mbox{CR}^l\Bigr). \end{equation}

 In this equation, the dummies PFR, PD and CR refer to ``pure" financial risk, pure default risk, and correlation risk --intrinsic to the correlation between the financial market and the default-- respectively. The function ${\cal R}$ is the functional that connects all three types of risk to the risk in $\mathbb G$ up to $\tau$. Thanks to arbitrage theory, a risk can be mathematically assimilated to a martingale. Thus, in this spirit, the equation (\ref{RiskDecomposition}) can be re-written using martingale theory as follows. For any martingale under $\mathbb G$ stopped at $\tau$, $M^{\mathbb G}$, we aim the following 
 \begin{equation}\label{GmartingaleRepresentation}
M^{\mathbb G}=M^{\mbox{(pf)}}+M^{\mbox{(pd)}}_1+...+M^{\mbox{(pd)}}_k+M^{\mbox{(cr)}}_1+...+M^{\mbox{(cr)}}_l. \end{equation}
All the terms in the RHS of the above equation are $\mathbb G$-local martingales, preferably mutually orthogonal\footnote{two local martingales are orthogonal if their product is also a local martingale}, representing pure financial risk, pure default risks and correlated risks respectively.  This representation goes back to \cite{azemaetal93}, where the authors established a similar representation in the Brownian setting and when  $\tau$ is the end of an $\mathbb F$-predictable set avoiding $\mathbb F$-stopping times. These two conditions on the pair $(\mathbb F, \tau)$ are vital in their analysis and proofs. Motivated by credit risk theory, \cite{blanchetjeanblanc04} extended \cite{azemaetal93} to the case where the triplet $(\mathbb F, \tau, M^{\mathbb G})$ satisfies the following two assumptions:\\
\begin{equation}\label{assumption1}
\mbox{Either}\ \tau\ \mbox{avoids}\ \mathbb F\mbox{-stopping times or all}\ \mathbb F\mbox{-martingales are continuous,}
\end{equation}
 and 
\begin{equation}\label{assumption2} 
M^{\mathbb G}\  \mbox{is given by}\ M^{\mathbb G}_t:=E(h_{\tau}\  \big|\ {\cal G}_t)\  \mbox{where}\ h\ \mbox{is}\ \mathbb F\mbox{-predictable with suitable integrability.}
\end{equation}
 It is worth mentioning, as the authors themselves realized it,  that the representation of \cite{blanchetjeanblanc04} fails as soon as  (\ref{assumption1}) or  (\ref{assumption2})  is violated. It is clear that for the popular  and simple discrete time market models the assumption (\ref{assumption1}) fails. Furthermore for most models in insurance (if not all), Poisson process is an important component in the modelling,  and hence for these models the second part of assumption (\ref{assumption1}) fails, while its first part can be viewed as a kind of ``independence" assumption between the random time $\tau$ and the financial market (i.e. the pair $(\mathbb F, S)$). In \cite{gerbershiu13} and the references therein, the authors treat many death-related claims and liabilities in (life) insurance whose payoff process $h$ fails (\ref{assumption2}).  Our representation (\ref{GmartingaleRepresentation}) is elaborated under no assumption of any kind, and hence leading to {\it new martingales} and an innovative mathematical modelling for risk. 

\subsection{Our financial  and mathematical achievements}
In our view, it is highly important to mention that {\it our results }--even though they are motivated by (and applied to)  credit risk theory -- {\it are quite universal in the sense that they are applicable to more broader financial and economics domains}. Among these, we cite life insurance in general and in particular mortality/longevity risk and its securitization, and  markets with random horizon,..., etcetera. Our main contributions can be summarized into two blocks that are intimately related to each other and are our answers to the aforementioned two main objectives of the previous subsection. First of all, we mathematically define the pure default risk by introducing the pure default (local) martingales, and we classify them into two types that are orthogonal to each other. Then we represent any $\mathbb G$-martingale, stopped at $\tau$, as the sum of three orthogonal local martingales. Two of these are of the first type and the second type of pure default local martingales, while the third local martingale is the sum of two local martingales representing the ``pure" financial risk and the correlation risk between the financial market and the default.  This innovative contribution answers fully and explicitly (\ref{RiskDecomposition}). For a chosen martingale measure in the large filtration, we describe the dynamics of the discounted price processes of some defaultable securities. This answers (\ref{Question1}), and lays down --in our view-- the main ``philosophical" idea behind the stochastic structures of defaultable (or mortality) securities' valuation.   \\

\noindent This paper contains three sections, including the current section, and an appendix. The aim of Sect. \ref{section2}  lies in introducing and developing pure default (local) martingales, and elaborating the complete and general optional martingale representation as well.  This section represents the principal innovation of the paper.  The third section addresses the dynamics of the value processes of some defaultable securities. For the sake of easy exposition, the proof of some results are relegated to the  appendix.
\section{Decomposition of $\mathbb G$-martingales stopped at $\tau$}\label{section2}
This section provides the complete, explicit, and general form for (\ref{GmartingaleRepresentation}) or equivalently (\ref{RiskDecomposition}). To this end, we need to define the relationship between $(\mathbb F, \tau)$ and $\mathbb G$, and recall notation that will be used throughout the rest of the paper. 
Throughout the paper, we denote
\begin{equation}\label{ProcessDandG}
D := I_{\Lbrack\tau,+\infty\Lbrack},\ \ \ \mathbb G:=({\cal G}_t)_{t\geq 0},\ \ \ {\cal G}_t=
\cap_{s>0}\left({\cal F}_{s+t}\vee\sigma\left(D_{u},\ u\leq s+t\right)\right).\end{equation}

For any filtration  $\mathbb H\in \{\mathbb F,\mathbb G\}$, we denote by ${\cal A}(\mathbb H)$ (respectively ${\cal M}(\mathbb H)$) the set
of $\mathbb H$-adapted processes with $\mathbb H$-integrable variation (respectively that are $\mathbb H$-uniformly integrable martingales).
For any process $X$,  $^{o,\mathbb H}X$  (respectively $^{p,\mathbb H}X$)  is the
$\mathbb H$-optional (respectively $\mathbb H$-predictable) projection of $X$ when it exists. For a process with finite variation $V$,  the process $V^{o,\mathbb H}$ (respectively $V^{p,\mathbb H}$) represents its dual $\mathbb H$-optional (respectively $\mathbb H$-predictable) projection when it exists. For a filtration $\mathbb H$, ${\cal O}(\mathbb H)$, ${\cal P}(\mathbb H)$ and  Prog$(\mathbb H)$ denote the $\mathbb H$-optional, the $\mathbb H$-predictable and the $\mathbb H$-progressive $\sigma$-fields  respectively on $\Omega\times[0,+\infty[$. For an $\mathbb H$-semimartingale $X$, we denote by $L(X,\mathbb H)$ the set of all $X$-integrable processes in Ito's sense. Furthermore, when $H\in L(X,\mathbb H)$, the resulting integral is a one dimensional $\mathbb H$-semimartingale denoted by $H\is X:=\int_0^{\cdot} H_u dX_u$. If ${\cal C}(\mathbb H)$ 
is a set of processes that are $\mathbb H$-adapted,
then ${\cal C}_{\loc}(\mathbb H)$ --{\bf except when it is stated otherwise}-- is the set of processes, $X$,
for which there exists a sequence of $\mathbb H$-stopping times,
$(T_n)_{n\geq 1}$, that increases to infinity and $X^{T_n}$ belongs to ${\cal C}(\mathbb H)$, for each $n\geq 1$.
 Throughout the paper, we consider the following processes
\begin{equation}\label{GGtildem}
G_t :=\  ^{o,\mathbb F}(I_{\Lbrack0,\tau\Lbrack})_t=P(\tau > t | {\cal F}_t),\ \ \ \ \ \ \ \ \ \widetilde{G}_t := \ ^{o,\mathbb F}(I_{\Lbrack0,\tau\Rbrack})_t=P(\tau \ge t | {\cal F}_t),
\quad \mbox{ and } \quad \ m := G + D^{o,\mathbb F}.
\end{equation}
 Both $G$ and $\widetilde G$ are known as Az\'ema supermartingales ($G$ is right-continuous with left limits, while in general $\widetilde G$ has right and left limits only), and $m$ is a BMO $\mathbb F$-martingale. For more details about these, we refer the reader to  \cite[paragraph 74, Chapitre XX]{dellacheriemeyer92}. We ends this preliminary paragraph by recalling the definition of orthogonality between local martingales.
 
 \begin{definition}\label{Orthogonality} Let $M$ and $N$ be two $\mathbb H$-local martingales. Then $M$ is  said to be orthogonal to $N$ whenever $MN$ is also an $\mathbb H$-local martingale, or equivalently $[M,N]$ is an $\mathbb H$-local martingale. 
 \end{definition}
 \subsection{Compensated martingales}\label{subsection2.1}
This subsection introduces and analyzes a new class of compensated martingales, and mathematically defines the {\it pure default (local) martingales}.
\begin{definition}\label{PureMortalityMartingales}
We call a {\it pure default martingale} (respectively pure default local martingale) any $\mathbb G$-martingale
(respectively  $\mathbb G$-local martingale) $M^{\mathbb G}$ satisfying the following.\\
{\rm{(a)}} $M^{\mathbb G}$ is stopped at $\tau$ (i.e. $ M^{\mathbb G}=\left(M^{\mathbb G}\right)^{\tau}$).\\
{\rm{(b)}} $M^{\mathbb G}$  is orthogonal to any $\mathbb F$-locally bounded local martingale
(i.e. $[M^{\mathbb G}, M]$ is a $\mathbb G$-local martingale for any $\mathbb F$-locally bounded local martingale $M$).
\end{definition}

As we mentioned in the introduction, the results of this section (Section \ref{section2}) are more general and applicable to various financial and economics areas such as random horizon in finance, and life insurance (mortality and longevity risks) which has strong similarities with credit risk theory. Thus, similarly as in the definition above, one can  define the pure mortality, or pure horizon, (local) martingales.\\

In virtue of this definition, a pure default martingale is a martingale that is intimately related to the uncertainty in $\tau$ that can not be seen through $\mathbb F$.
Thus, naturally, one can ask whether the $\mathbb G$-martingale in the Doob-Meyer decomposition of $D$ defined in (\ref{ProcessDandG}), which is given by
 \begin{equation} \label{processNGbar}
\overbar{N}^{\mathbb G} :=D -G_-^{-1} I_{\Rbrack 0,\tau\Rbrack} \is D^{p,\mathbb F},
\end{equation}
is a pure default martingale or not? In general, the answer is negative.
This follows from the fact that for any bounded $\mathbb F$-local martingale $M$, the process
 $[M,\overline{N}^{\mathbb G}]=\Delta M\is D -(\Delta M)G_-^{-1} I_{\Rbrack 0,\tau\Rbrack} \is D^{p,\mathbb F}$
 might not be a martingale for some pair $(M,\tau)$. Thus, the following challenging question arises.
\begin{eqnarray*}\label{QPMM1}
 \mbox{How can we construct and/or identify pure default (local) martingales ?}
\end{eqnarray*}
It is clear, from this short discussion, that the compensation {\it \`a la} Doob-Meyer does not produce pure default martingales from $\tau$. Thus, we propose below an other compensation procedure, which leads to a {\it new class} of $\mathbb G$-martingales.

\begin{theorem}\label{NGmartingaleproperties} The following process,
\begin{equation} \label{processNG}
N^{\mathbb G}:=D - \widetilde{G}^{-1} I_{\Rbrack 0,\tau\Rbrack} \is D^{o,\mathbb  F},
\end{equation}
 is a $\mathbb G$-martingale with integrable variation. 
\end{theorem}

\begin{proof} It is clear that $N^{\mathbb G}$ is a RCLL and $\mathbb G$-adapted process satisfying
\[
	\max\left(\E [\Var( N^{\mathbb G})_{\infty}],\E\Big[\sup_{t\ge 0}|N_t^{\mathbb G}|\Big]\right)\leq
 \E[  D_\infty] + \E\Big[\widetilde{G}^{-1} \ ^{o,{\mathbb F}} (I_{\Lbrack 0,\tau\Rbrack}) I_{\{\widetilde{G}>0\}}\is D_\infty\Big]
 = 2 P(\tau <+\infty)\leq 2.
\]
Thus, $N^{\mathbb G}$ has an integrable variation. For any $\mathbb F$-stopping time $\sigma$, we derive
\begin{eqnarray}
\E[ N^{\mathbb G}_{\sigma}]& = \E\Big[ D_{\sigma} - \widetilde{G}^{-1} I_{\Lbrack 0,\tau\Rbrack} \is D_{\sigma}^{o,\mathbb F}\Big]
 = \E[ D_{\sigma}] - \E\Big[\widetilde{G}^{-1}I_{\{\widetilde{G}>0\}} \ ^{o,\mathbb F}(I_{\Lbrack 0,\tau\Rbrack})
 \is D_{\sigma}\Big]\nonumber\\
&=\E[D_{\sigma}] - \E\Big[ I_{\{\widetilde{G}>0\}}  \is D_{\sigma}\Big]	=0 .\label{mgG4NG}
\end{eqnarray}
The last equality follows from $I_{\{\widetilde{G}>0\}}  \is D\equiv D$  since $\widetilde G_{\tau}>0$ \(P\)-a.s.
on $\{\tau<+\infty\}$, which follows directly from \cite[Chapitre IV, Lemma 4.3]{jeulin80}. Therefore, the proof follows immediately from a
 combination of (\ref{mgG4NG}) and the fact that for any $\mathbb G$-stopping time, $\sigma^{\mathbb G}$, there exists
  an $\mathbb F$-stopping time $\sigma^{\mathbb F}$ such that 
  \begin{equation}\label{tauGtauF}
\sigma^{\mathbb G}\wedge \tau=\sigma^{\mathbb F}\wedge \tau,\ \ \ P\text{-a.s}.\end{equation}
For this fact, we refer to \cite[ Chapter XX, paragraph 75, assertion (b)]{dellacheriemeyer92} and  \cite[Proposition B.2-(b)]{aksamitetal15}. This ends the proof of the theorem.\end{proof}

The following characterizes, in general, the situation 
where the Doob-Meyer's compensation generates pure default (local) martingales.

\begin{proposition}\label{NbarG-NG} Consider $\overline{N}^{\mathbb G}$ and $N^{\mathbb G}$ defined in
(\ref{processNGbar}) and (\ref{processNG}).
 Then the following are equivalent.\\
 {\rm {(a)}} $\overline{N}^{\mathbb G}$ is a pure default martingale.\\
 {\rm {(b)}} $\overline{N}^{\mathbb G}$ and $N^{\mathbb G}$ coincide.\\
 {\rm {(c)}} The two processes $^{p,\mathbb F}(G){\widetilde G}$ and $G_{-}G$ are indistinguishable.
 \end{proposition}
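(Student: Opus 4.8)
The plan is to prove (b)$\Rightarrow$(a), then the equivalence (b)$\Leftrightarrow$(c), and finally the substantial implication (a)$\Rightarrow$(b); everything is organized around the difference process $V:=\overline{N}^{\mathbb G}-N^{\mathbb G}$. By Theorem~\ref{NGmartingaleproperties}(a) and the $\mathbb G$-Doob--Meyer decomposition of $D$ (whose $\mathbb G$-compensator is $G_-^{-1}I_{\Rbrack 0,\tau\Rbrack}\centerdot D^{p,\mathbb F}$), both $N^{\mathbb G}$ and $\overline{N}^{\mathbb G}$ are $\mathbb G$-martingales of integrable variation, so $V$ is a $\mathbb G$-martingale of integrable variation with $V=V^{\tau}$. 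I would first record the standard facts that $D^{o,\mathbb F}$ and $D^{p,\mathbb F}$ share the same continuous part $C$, that $\Delta D^{o,\mathbb F}=\widetilde G-G$ and $\Delta D^{p,\mathbb F}={}^{p,\mathbb F}(\Delta D^{o,\mathbb F})=G_--{}^{p,\mathbb F}G$ (using ${}^{p,\mathbb F}\widetilde G=G_-$, which follows from $\widetilde G=G_-+\Delta m$ and ${}^{p,\mathbb F}(\Delta m)=0$), together with $0\le G\le\widetilde G$ and the evanescence of $\Lbrack 0,\tau\Rbrack\setminus(\{\widetilde G>0\}\cap\{G_->0\})$.

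For (b)$\Leftrightarrow$(c) I would compute $V$ explicitly. Its continuous part is $(\widetilde G^{-1}-G_-^{-1})I_{\Lbrack 0,\tau\Rbrack}\centerdot C$; but the continuous increasing process $C$ charges neither the jump set of $D^{o,\mathbb F}$ nor that of $G$, so $\widetilde G=G=G_-$ holds $dC$-a.e.\ and \emph{the continuous part of $V$ vanishes identically}, irrespective of (c). For the jumps, the identities above give $\Delta V=I_{\Lbrack 0,\tau\Rbrack}\,\bigl(\widetilde G\,{}^{p,\mathbb F}G-G_-G\bigr)(G_-\widetilde G)^{-1}$, so that $V\equiv0$ iff $\widetilde G\,{}^{p,\mathbb F}G=G_-G$ on $\Lbrack 0,\tau\Rbrack$. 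It remains to upgrade this to all of $\Omega\times[0,+\infty[$: the process $Y:={}^{p,\mathbb F}(G)\widetilde G-G_-G$ is $\mathbb F$-optional, and $Y=0$ on $\Lbrack 0,\tau\Rbrack$ gives, for every $\mathbb F$-stopping time $\sigma$, $\E[|Y_\sigma|\widetilde G_\sigma I_{\{\sigma<+\infty\}}]=\E[|Y_\sigma|I_{\{\sigma\le\tau\}}I_{\{\sigma<+\infty\}}]=0$, hence $Y=0$ on $\{\widetilde G>0\}$ by the optional section theorem, while $Y=0$ on $\{\widetilde G=0\}$ is immediate from $G\le\widetilde G$. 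The implication (b)$\Rightarrow$(a) is then a one-liner: $K\equiv1$ lies in $\mathcal{I}^{o}(N^{\mathbb G},\mathbb G)$ because $\E[G\widetilde G^{-1}I_{\{\widetilde G>0\}}\centerdot D_\infty]\le\E[D_\infty]\le1$ by $G\le\widetilde G$, so Theorem~\ref{NGmartingaleproperties}(c) makes $N^{\mathbb G}$ a pure mortality martingale, a property $\overline{N}^{\mathbb G}$ inherits once $\overline{N}^{\mathbb G}=N^{\mathbb G}$.

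The real content is (a)$\Rightarrow$(b). Assume $\overline{N}^{\mathbb G}$ is a pure mortality martingale; since $N^{\mathbb G}$ is one too, $V$ is a $\mathbb G$-martingale, stopped at $\tau$, orthogonal to every $\mathbb F$-locally bounded local martingale. Its continuous part being already zero and $\{\Delta V\neq0\}$ being contained in the $\mathbb F$-thin set $\{\Delta D^{o,\mathbb F}\neq0\}\cup\{\Delta D^{p,\mathbb F}\neq0\}=\bigcup_n\Lbrack T_n\Rbrack$ (with $\mathbb F$-stopping times $T_n$), it suffices to show $\Delta V_T I_{\{T<+\infty\}}=0$ for every $\mathbb F$-stopping time $T$, split into its totally inaccessible and its accessible (hence predictable) parts. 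Write $\Delta V=\phi\,I_{\Lbrack 0,\tau\Rbrack}$ with $\phi$ the $\mathbb F$-optional process $\bigl(\widetilde G\,{}^{p,\mathbb F}G-G_-G\bigr)(G_-\widetilde G)^{-1}$. For $T$ totally inaccessible and $\eta$ bounded and $\mathcal F_T$-measurable, $M^\eta:=\eta I_{\Lbrack T,+\infty\Lbrack}-\bigl(\eta I_{\Lbrack T,+\infty\Lbrack}\bigr)^{p,\mathbb F}$ is a bounded $\mathbb F$-martingale with $\Delta M^\eta=\eta I_{\Lbrack T\Rbrack}$, so $[M^\eta,V]$ (a $\mathbb G$-local martingale by orthogonality) equals $\eta\,\Delta V_T I_{\Lbrack T,+\infty\Lbrack}$, which is of class $(D)$ by the integrable variation of $V$; hence $\E[\eta\,\phi_T\widetilde G_T I_{\{T<+\infty\}}]=\E[\eta\,\Delta V_T I_{\{T<+\infty\}}]=0$, and varying $\eta$ yields $\phi_T\widetilde G_T=0$, whence $\Delta V_T=0$ on $\{T\le\tau\}$ because $\widetilde G_T>0$ there. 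For $T$ predictable, the only admissible $\mathbb F$-martingale jumps are $\xi\,I_{\Lbrack T\Rbrack}$ with $\E[\xi\mid\mathcal F_{T-}]=0$, and the same computation now only gives that $\phi_T\widetilde G_T I_{\{T<+\infty\}}$ is $\mathcal F_{T-}$-measurable; to finish one adds that $V$ itself is a $\mathbb G$-martingale, so $\E[\Delta V_T\mid\mathcal G_{T-}]=0$, which by the predictable form of the enlargement key lemma ($\E[\,\cdot\mid\mathcal G_{T-}]=G_{T-}^{-1}\E[\,\cdot\,I_{\{T\le\tau\}}\mid\mathcal F_{T-}]$ on $\{T\le\tau\}$) forces $\E[\phi_T\widetilde G_T\mid\mathcal F_{T-}]=0$ on $\{G_{T-}>0\}$; combining with the previous measurability gives $\phi_T\widetilde G_T=0$ on $\{G_{T-}>0\}\supseteq\{T\le\tau\}$, hence $\Delta V_T=0$ once more. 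Thus $\Delta V\equiv0$, $V\equiv0$, i.e.\ $\overline{N}^{\mathbb G}=N^{\mathbb G}$.

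I expect the main obstacle to be precisely the predictable jump times in (a)$\Rightarrow$(b): orthogonality to $\mathbb F$-local martingales is genuinely insufficient there --- this is exactly the mechanism by which the $\mathbb G$-optional and $\mathbb G$-predictable compensators of $D$ may differ --- and one must feed in the extra information that $\overline{N}^{\mathbb G}$ (equivalently $V$) is a $\mathbb G$-martingale, through the predictable version of the key lemma of enlargement. A secondary nuisance throughout is keeping track of the degenerate sets $\{\widetilde G=0\}$ and $\{G_-=0\}$, handled via $0\le G\le\widetilde G$, Jeulin's lemma $\Lbrack 0,\tau\Rbrack\subseteq\{\widetilde G>0\}$ (already invoked in the proof of Theorem~\ref{NGmartingaleproperties}) and its analogue $\Lbrack 0,\tau\Rbrack\subseteq\{G_->0\}$.
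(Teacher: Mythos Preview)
Your proof is correct, and for (b)$\Leftrightarrow$(c) it essentially matches the paper's argument (the paper reaches the identity $G_-\Delta D^{o,\mathbb F}=\widetilde G\Delta D^{p,\mathbb F}$ by taking the $\mathbb F$-optional projection rather than invoking the section theorem, but these are equivalent moves, and both handle the continuous part the same way).

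For (a)$\Rightarrow$(b), however, you take a genuinely different and considerably longer route. The paper bypasses your case split on totally inaccessible versus predictable times---and in particular avoids invoking the enlargement key lemma altogether---by a single orthogonality test against a well-chosen $\mathbb F$-local martingale. Namely, one checks that
\[
M:=G_-\centerdot D^{o,\mathbb F}-\widetilde G\centerdot D^{p,\mathbb F}
\]
is an $\mathbb F$-local martingale with bounded jumps (both $G_-\centerdot(D^{o,\mathbb F}-D^{p,\mathbb F})$ and $(\widetilde G-G_-)\centerdot D^{p,\mathbb F}=\Delta m\centerdot D^{p,\mathbb F}$ are), and that $V=\overline N^{\mathbb G}-N^{\mathbb G}=(G_-\widetilde G)^{-1}I_{\Rbrack0,\tau\Rbrack}\centerdot M$. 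If $V$ is a pure mortality local martingale, testing orthogonality against $M$ itself gives that
\[
[V,M]=(G_-\widetilde G)^{-1}I_{\Rbrack0,\tau\Rbrack}\centerdot[M,M]
\]
is a nonnegative $\mathbb G$-local martingale, hence identically zero; this forces $M^\tau\equiv0$ and thus $V\equiv0$. That is the shortest path. Your argument, while heavier, has the merit of making explicit \emph{why} the predictable-jump case is the delicate one and exactly where the $\mathbb G$-martingale property of $V$ (as opposed to mere orthogonality to $\mathbb F$-local martingales) enters.
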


\begin{proof} This proof is divided into two steps where we deal with (b)$\Longleftrightarrow$(c) and (a)$\Longleftrightarrow$(b) respectively. \\
{\bf Step 1:} Here we prove (b)$\Longleftrightarrow$(c). Remark that $\overline{N}^{\mathbb G}\equiv N^{\mathbb G}$ iff 
${\widetilde G}^{-1}I_{\Rbrack0,\tau\Rbrack}\is D^{o,\mathbb F}= G^{-1}_{-}I_{\Rbrack0,\tau\Rbrack}\is D^{p,\mathbb F},$ and this equality is equivalent to 
${\widetilde G}^{-1}I_{\Rbrack0,\tau\Rbrack}\Delta D^{o,\mathbb F}= G^{-1}_{-}I_{\Rbrack0,\tau\Rbrack}\Delta D^{p,\mathbb F}$. Since ${\Rbrack0,\tau\Rbrack}\subset\{G_{-}>0\}$, by taking the $\mathbb F$-optional projection, it is easy to conclude that the latter equality is equivalent to 
${G}_{-}\Delta D^{o,\mathbb F}={\widetilde  G}\Delta D^{p,\mathbb F}$. It is clear that, in turn, this obtained equality is equivalent to assertion (c), due to 
$\Delta D^{o,\mathbb F}=\widetilde G-G$ and $\Delta D^{p,\mathbb F}=G_{-}-\ ^{p,\mathbb F}(G)$. This ends the proof of (b)$\Longleftrightarrow$(c).\\
{\bf Step 2:} This part proves (a)$\Longleftrightarrow$(b). To this end, we start proving that $N^{\mathbb G}$ is in fact a pure default martingale. Let $M$ be locally bounded $\mathbb F$-local martingale, and $\sigma$ is an $\mathbb F$-stopping time. Then we get 
\begin{align*}
\E[M, N^{\mathbb G}]_{\sigma}&=\E[(\Delta M\is N^{\mathbb G})_{\sigma}]=\E\Bigl[(\Delta M\is D)_{\sigma}-{{\Delta M}\over{\widetilde G}} I_{\Lbrack 0,\tau\Rbrack} \is D_{\sigma}^{o,\mathbb F}\Big]\\
&=\E\Bigl[(\Delta M\is D^{o,\mathbb F})_{\sigma}-{{\Delta M}\over{\widetilde G}}I_{\{\widetilde G>0\}} \ ^{o,\mathbb F}(I_{\Lbrack 0,\tau\Rbrack}) \is D_{\sigma}^{o,\mathbb F}\Big]=\E\Bigl[(\Delta M I_{\{\widetilde G=0\}}\is D^{o,\mathbb F})_{\sigma}\Big]=0,
\end{align*}
where the last equality follows from the fact that $I_{\{\widetilde G=0\}}\is D^{o,\mathbb F}\equiv 0$ which is equivalent to $I_{\{\widetilde G=0\}}\is D\equiv 0$, or equivalently $\widetilde G_{\tau}>0$ $P$-a.s. on $\{\tau<+\infty\}$. The above equality combined with (\ref{tauGtauF}) prove that $[M, N^{\mathbb G}]$ is a $\mathbb G$-martingale. Thus, we conclude that $N^{\mathbb G}$ is a pure default martingale, and the proof of (b)$\Longrightarrow$(a) follows. The rest of the proof focuses on the converse. Remark that $M:=G_{-}\is D^{o,\mathbb F}-{\widetilde G}\is D^{p,\mathbb F}$ is an $\mathbb F$-local martingale with bounded jumps (hence it is locally bounded), and
$${\overline N}^{\mathbb G}-N^{\mathbb G}=(G_{-}\widetilde G)^{-1}I_{\Rbrack0,\tau\Rbrack}\is M.$$
Thus, ${\overline N}^{\mathbb G}$ is a pure default martingale iff ${\overline N}^{\mathbb G}-N^{\mathbb G}$ is also a pure default martingale. Hence, 
$$G_{-}\widetilde G\is [{\overline N}^{\mathbb G}-N^{\mathbb G}, {\overline N}^{\mathbb G}-N^{\mathbb G}]=[{\overline N}^{\mathbb G}-N^{\mathbb G}, M]\in {\cal M}_{loc}(\mathbb G),$$ or equivalently ${\overline N}^{\mathbb G}\equiv N^{\mathbb G}$ due to ${\Rbrack0,\tau\Rbrack}\subset\{G_{-}>0\}\cap\{\widetilde G>0\}$.  This proves the proposition.
\end{proof}

\noindent  Below, we discuss more particular and practical cases where we compare $N^{\mathbb G}$ and ${\overline{N}^{\mathbb G}}$.

\begin{corollary}\label{corollary4theoremM2} Consider $N^{\mathbb G}$ and ${\overbar{N}^{\mathbb G}}$ defined in (\ref{processNG}) and
(\ref{processNGbar}). Then the following assertions hold.\\
{\rm {(a)}} Suppose that $\tau$ is an $\mathbb F$-stopping time. Then $ N^{\mathbb G}\equiv 0$,
 while $\overline{N}^{\mathbb G}=I_{\Lbrack\tau,+\infty\Lbrack}-\Bigl((I_{\Lbrack\tau,+\infty\Lbrack})^{p,\mathbb F}\Bigr)^{\tau}$. As a result, in this case, 
 $\overline{N}^{\mathbb G}$ coincides with $N^{\mathbb G}$ if and only if $\tau$ is predictable.\\
{\rm{(b)}} The following conditions are all sufficient for $N^{\mathbb G}$ to coincide with ${\overline{N}^{\mathbb G}}$.\\
{\rm {(b.1)}} $\tau$ avoids \(\mathbb F\)-stopping times (i.e.\ for any \(\mathbb F\)-stopping time \(\theta\) it holds that
 \(P(\tau=\theta<+\infty)=0\)),\\
{\rm {(b.2)}} all \(\mathbb F\)-martingales are continuous,\\
{\rm {(b.3)}} $\tau$ is independent of ${\cal F}_{\infty}:=\sigma(\cup_{t\geq 0}{\cal F}_t)$.
 \end{corollary}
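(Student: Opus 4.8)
The plan is to obtain (a) by substituting directly into the definitions of $N^{\mathbb G}$ and $\overbar{N}^{\mathbb G}$, and to derive (b) from Proposition~\ref{NbarG-NG}.

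For part (a): since $\tau$ is an $\mathbb F$-stopping time, $D=I_{\Lbrack\tau,+\infty\Lbrack}$ is $\mathbb F$-optional, so $D^{o,\mathbb F}=D$, while $G={}^{o,\mathbb F}(I_{\Lbrack 0,\tau\Lbrack})=I_{\Lbrack 0,\tau\Lbrack}$ and $\widetilde G={}^{o,\mathbb F}(I_{\Lbrack 0,\tau\Rbrack})=I_{\Lbrack 0,\tau\Rbrack}$. As $\widetilde G\equiv 1$ on $\Lbrack 0,\tau\Rbrack$, which carries $D$, one gets $\widetilde G^{-1}I_{\Lbrack 0,\tau\Rbrack}\centerdot D^{o,\mathbb F}=I_{\Lbrack 0,\tau\Rbrack}\centerdot D=D$, hence $N^{\mathbb G}\equiv 0$. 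For $\overbar{N}^{\mathbb G}$, I would use that $G_{-}\equiv 1$ on $\Rbrack 0,\tau\Rbrack$ and that, because $D=D^{\tau}$, its $\mathbb F$-compensator satisfies $D^{p,\mathbb F}=(D^{p,\mathbb F})^{\tau}$ and is null at $0$; hence $G_{-}^{-1}I_{\Rbrack 0,\tau\Rbrack}\centerdot D^{p,\mathbb F}=D^{p,\mathbb F}=(D^{p,\mathbb F})^{\tau}$, which gives the announced formula $\overbar{N}^{\mathbb G}=I_{\Lbrack\tau,+\infty\Lbrack}-\big((I_{\Lbrack\tau,+\infty\Lbrack})^{p,\mathbb F}\big)^{\tau}$. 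Since $N^{\mathbb G}\equiv 0$, the two processes coincide if and only if $\overbar{N}^{\mathbb G}\equiv 0$, i.e.\ $D=D^{p,\mathbb F}$; as $D$ is an increasing integrable process, this holds if and only if $D$ is $\mathbb F$-predictable, equivalently $\Lbrack\tau,+\infty\Lbrack$ is an $\mathbb F$-predictable set, equivalently $\tau$ is an $\mathbb F$-predictable stopping time.

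For part (b): by the equivalence (b)$\Leftrightarrow$(c) in Proposition~\ref{NbarG-NG}, it suffices to check that ${}^{p,\mathbb F}(G)\,\widetilde G$ and $G_{-}\,G$ are indistinguishable under each of (b.1)--(b.3). I would first record the general identities $\widetilde G-G=\Delta D^{o,\mathbb F}$, ${}^{p,\mathbb F}(\widetilde G)=G_{-}$, and ${}^{p,\mathbb F}(\Delta D^{o,\mathbb F})={}^{p,\mathbb F}(\Delta D)=\Delta D^{p,\mathbb F}$, which together yield ${}^{p,\mathbb F}(G)=G_{-}-\Delta D^{p,\mathbb F}$ and thus reduce condition (c) to the single identity $G_{-}\,\Delta D^{o,\mathbb F}=\widetilde G\,\Delta D^{p,\mathbb F}$. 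Under (b.1) the avoidance property gives, for every $\mathbb F$-stopping time $T$, $\Delta D^{o,\mathbb F}_{T}={}^{o,\mathbb F}(\Delta D)_{T}=P(\tau=T\,|\,{\cal F}_{T})=0$ and $\Delta D^{p,\mathbb F}_{T}=P(\tau=T\,|\,{\cal F}_{T-})=0$, so $D^{o,\mathbb F}$ and $D^{p,\mathbb F}$ are continuous and both sides of the identity vanish. Under (b.2), applying the continuity of all $\mathbb F$-martingales to $m=G+D^{o,\mathbb F}$ and to $\widehat m:=G+D^{p,\mathbb F}$ (the martingale in the Doob--Meyer decomposition of the supermartingale $G$) gives $\Delta D^{o,\mathbb F}=-\Delta G=\Delta D^{p,\mathbb F}$, hence $\widetilde G=G+\Delta D^{o,\mathbb F}=G_{-}$, so the identity becomes $G_{-}\Delta D^{o,\mathbb F}=G_{-}\Delta D^{p,\mathbb F}$. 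Under (b.3), independence makes $G_{t}=P(\tau>t)$ and $\widetilde G_{t}=P(\tau\ge t)$ deterministic, so ${}^{p,\mathbb F}(G)=G$ and $G_{t-}=P(\tau\ge t)=\widetilde G_{t}$, whence ${}^{p,\mathbb F}(G)\,\widetilde G=G\,G_{-}=G_{-}\,G$.

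The step I expect to be the most delicate is the reduction of (c) to $G_{-}\,\Delta D^{o,\mathbb F}=\widetilde G\,\Delta D^{p,\mathbb F}$, which hinges on ${}^{p,\mathbb F}(G)=G_{-}-\Delta D^{p,\mathbb F}$; this rests on the classical predictable-projection relation ${}^{p,\mathbb F}(\widetilde G)=G_{-}$ (see \cite{jeulin80,dellacheriemeyer92}), on $D^{p,\mathbb F}=(D^{o,\mathbb F})^{p,\mathbb F}$, and on the jump rules for dual projections. In part (a) the analogous care lies in verifying that $D^{p,\mathbb F}$ charges neither $\{0\}$ nor $\Rbrack\tau,+\infty\Lbrack$ and that $G_{-}>0$ on $\Rbrack 0,\tau\Rbrack$; everything else is routine bookkeeping.
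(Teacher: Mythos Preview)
Your proof is correct. For part (a) you simply supply the details the paper omits as ``obvious''. For part (b) there is a minor organisational difference: the paper verifies $N^{\mathbb G}=\overbar{N}^{\mathbb G}$ directly in each case by showing the two compensator integrals coincide (using $\widetilde G=G$ and continuity of $D^{o,\mathbb F}$ under (b.1), and $\widetilde G=G_{-}$ together with $D^{o,\mathbb F}=D^{p,\mathbb F}$ under (b.2) and (b.3)), whereas you route through Proposition~\ref{NbarG-NG}(c) and reduce it to the jump identity $G_{-}\Delta D^{o,\mathbb F}=\widetilde G\Delta D^{p,\mathbb F}$; but this is exactly the identity the paper itself derives in the appendix when proving Proposition~\ref{NbarG-NG}, and the facts you then invoke in each of (b.1)--(b.3) are identical to the paper's, so the two arguments have the same substance.
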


\begin{proof} Assertion (a) is obvious and will be omitted. Thus the rest of the proof focuses on proving assertion (b) in three parts, where we prove that each of the conditions (b.i), $i=1,2,3$, is sufficient.\\
 {\bf Part 1:} Suppose that $\tau$ avoids $\mathbb F$-stopping times. Then $\widetilde G=G$ which is equivalent to the continuity of
 $D^{o,\mathbb F}$. Thus, $D^{o,\mathbb F}=D^{p,\mathbb F}$ and
 ${\widetilde G}^{-1}I_{\Rbrack0,\tau\Rbrack}\is D^{o,\mathbb F}={G}_{-}^{-1}I_{\Rbrack0,\tau\Rbrack}\is D^{p,\mathbb F}$.
  This proves that $N^{\mathbb G}=\overline{N}^{\mathbb G}$.\\
{\bf Part 2:} Suppose that all \(\mathbb F\)-martingales are continuous. Then  $0=\Delta m=\widetilde G-G_{-}$, and the pure jump $\mathbb F$-martingale
$D^{o,\mathbb F}-D^{p,\mathbb F}$ is null. Hence, $N^{\mathbb G}$ and ${\overline N}^{\mathbb G}$ coincide in this case.\\
{\bf Part 3:} Suppose that $\tau$ is independent of ${\cal F}_{\infty}$. Then $\widetilde G$, $G_{-}$ and $G$ are deterministic.
As a consequence
we get $\widetilde G=G_{-}$ and $D^{o,\mathbb F}=D^{p,\mathbb F}$. Hence $N^{\mathbb G}=\overline{N}^{\mathbb G}$, and the proof of the corollary is completed.
\end{proof}

Even though Corollary \ref{corollary4theoremM2}-(a) is simple, it explains the main difference between the roles of $N^{\mathbb G}$ and ${\overline{N}^{\mathbb G}}$. In fact, it says that $N^{\mathbb G}$ ``measures" the extra randomness in $\tau$ that is not in $\mathbb F$, while  ${\overline{N}^{\mathbb G}}$ cannot tell which randomness comes (or that does not come) from $\mathbb F$. Hence, $N^{\mathbb G}$ is more suitable for singling out the different sources of risk. This is crucial for an efficient risk management. 

\subsection{Examples of random times}
This subsection illustrates the difference between  $N^{\mathbb G}$ and $\overline{N}^{\mathbb G}$ on practical examples of random times. 

\begin{example}\label{example1} Suppose $N$ is a Poisson process with intensity one,
$\mathbb F$ be the right continuous and complete filtration generated by $N$, and $(T_n)_{n\geq 1}$ be the sequence of $\mathbb F$-stopping times
given by
$$T_n:=\inf\{t\geq 0\ \big|\ N_t \geq n\},\ \ n\geq 1.$$
 Let $\alpha\in (0,1)$, and put
$$\tau:=\alpha T_1+(1-\alpha)T_2.$$
Thanks to \cite[Proposition 5.3] {aksamitetal15}, it is clear that $\tau$ fulfills assumption (b.1) of Corollary \ref{corollary4theoremM2}-(b), 
while $\mathbb F$ violates assumption (b.2). Thus, in this case, we conclude that $N^{\mathbb G}=\overline{N}^{\mathbb G}$.\end{example}

\begin{example}\label{example2} Consider the triplet $(N, (T_n)_{n\geq 1},\mathbb F)$ defined in the previous example, and put
$$\tau:=aT_2\wedge T_1,$$
with $a\in (0,1)$. Then, it is clear that the pair $(\tau,\mathbb F)$
violates all the three assumptions (b.1)-(b.3) of Corollary \ref{corollary4theoremM2}-(b). To show this fact, thanks to \cite{aksamitetal15}, we have
$$G_t=e^{-\beta t}(\beta t+1)I_{\{t<T_1\}},\ \ \ {\widetilde G}_t=G_{t-}=e^{-\beta t}(\beta t+1)I_{\{t\leq T_1\}},\ \ \ \  \beta:=a^{-1}-1.$$
Thus, we deduce that $\Delta D^{o,\mathbb F}={\widetilde G}-{G}=e^{-\beta T_1}(\beta T_1+1)I_{\Lbrack T_1\Rbrack}$.
This implies that $\tau$ does not avoid $\mathbb F$-stopping times, and
 $D^{o,\mathbb F}-e^{-\beta T_1}(\beta T_1+1)I_{\Lbrack T_1,+\infty\Lbrack}$ is a continuous process with finite variation.
 Furthermore, the $\mathbb G$-martingale
 $$N^{\mathbb G}-\overline{N}^{\mathbb G}=-I_{\Rbrack0,\tau\Rbrack}\is H^{(1)},\ \ \mbox{where}\ \  
 H^{(1)}:=I_{\Lbrack T_1,+\infty\Lbrack}(t)-t\wedge T_1,$$
 is not null.\end{example}
 
 This example shares the same features as the model considered in \cite{jiaoli}  and the references therein.  These models can be unified into a more general model as follows. 

\begin{proposition}\label{JiaLi}
Let $\sigma$ be an $\mathbb F$-stopping time, and $\tau_1$ be an arbitrary random time. Suppose that 
$$\tau=\sigma\wedge\tau_1.$$
Then the following assertions hold.\\
{\rm{(a)}}  If $\mathbb G_1$ is the smallest filtration that contains $\mathbb F$ and makes $\tau_1$ a stopping time, then $\mathbb G\subset\mathbb G_1$.\\
{\rm{(b)}}  Consider the processes $ D_1:=I_{\Lbrack\tau_1,+\infty\Lbrack}$ and 
$$N_1^{\mathbb G_1}:=D_1-{1\over{^{o,\mathbb F}(I_{\Lbrack0,\tau_1\Rbrack})}}I_{\Rbrack0,\tau_1\Rbrack}\is \left(D_1\right)^{o,\mathbb F},$$ 
i.e., the compensated martingale associated with $(\mathbb F, \tau_1)$ via (\ref{processNG}). Then  we have 
$$N^{\mathbb G}=\left(N_1^{\mathbb G_1}\right)^{\sigma-}=I_{\Rbrack 0,\sigma\Lbrack}\is N_1^{\mathbb G_1}.$$
\end{proposition}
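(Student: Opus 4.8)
\medskip\noindent\textbf{Proof (sketch).}
The plan is to obtain (a) at once from the explicit descriptions of $\mathbb G$ and $\mathbb G^{(1)}$, and to prove (b) by a pathwise rewriting of the defining formula (\ref{processNG}) on the predictable set $\Lbrack 0,\sigma\Lbrack$. For (a): since $\tau=\sigma\wedge\tau_1$ with $\sigma$ an $\mathbb F$-stopping time, $\{\tau\le u\}=\{\sigma\le u\}\cup\{\tau_1\le u\}$ for every $u\ge 0$, where $\{\sigma\le u\}\in{\cal F}_u$ and $\{\tau_1\le u\}=\{D_{1,u}=1\}$; hence $\sigma(D_v,\,v\le r)\subset{\cal F}_r\vee\sigma(D_{1,v},\,v\le r)$ for all $r$, and intersecting over $r=t+s$, $s>0$, while using the right-continuity of $\mathbb G^{(1)}$, yields ${\cal G}_t\subset{\cal G}^{(1)}_t$ for all $t$, i.e.\ $\mathbb G\subset\mathbb G^{(1)}$.

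For (b), put $\widetilde G^{(1)}:={}^{o,\mathbb F}(I_{\Lbrack 0,\tau_1\Rbrack})$, i.e.\ $\widetilde G^{(1)}_t=P(\tau_1\ge t\,|\,{\cal F}_t)$; then, by Theorem \ref{NGmartingaleproperties} applied to the pair $(\mathbb F,\tau_1)$, the process $N_1^{\mathbb G^{(1)}}=D_1-(\widetilde G^{(1)})^{-1}I_{\Lbrack 0,\tau_1\Rbrack}\centerdot(D_1)^{o,\mathbb F}$ is a $\mathbb G^{(1)}$-martingale of integrable variation, null at $0$. The second equality in the claim is merely the description of the pre-$\sigma$-stopped process of a c\`adl\`ag finite-variation process vanishing at $0$, so the content lies in the first equality $N^{\mathbb G}=I_{\Lbrack 0,\sigma\Lbrack}\centerdot N_1^{\mathbb G^{(1)}}$. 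I would derive it from the following facts, each immediate from $\tau=\sigma\wedge\tau_1\le\sigma$ and $\{t<\sigma\}\in{\cal F}_t$: (i)~$I_{\Lbrack 0,\sigma\Lbrack}\centerdot D=I_{\Lbrack 0,\sigma\Lbrack}\centerdot D_1$ (both coincide with $I_{\{\tau_1<\sigma\}}D_1$, as the common jump is $\tau_1$ on $\{\tau_1<\sigma\}$ while the jump of $D$ at $\sigma$, if any, is killed by $I_{\Lbrack 0,\sigma\Lbrack}$); (ii)~$I_{\Lbrack 0,\sigma\Lbrack}I_{\Lbrack 0,\tau\Rbrack}=I_{\Lbrack 0,\sigma\Lbrack}I_{\Lbrack 0,\tau_1\Rbrack}$; (iii)~$I_{\Lbrack 0,\sigma\Lbrack}\widetilde G=I_{\Lbrack 0,\sigma\Lbrack}\widetilde G^{(1)}$, obtained by applying the $\mathbb F$-optional projection to (ii) (which commutes with the $\mathbb F$-optional factor $I_{\Lbrack 0,\sigma\Lbrack}$); and (iv)~$I_{\Lbrack 0,\sigma\Lbrack}\centerdot D^{o,\mathbb F}=I_{\Lbrack 0,\sigma\Lbrack}\centerdot(D_1)^{o,\mathbb F}$, which follows from (i) because $I_{\Lbrack 0,\sigma\Lbrack}$ is $\mathbb F$-optional and the dual $\mathbb F$-optional projection commutes with $\mathbb F$-optional integrands.

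Assembling: both $D$ and $\widetilde G^{-1}I_{\Lbrack 0,\tau\Rbrack}\centerdot D^{o,\mathbb F}$ are constant on $\Rbrack\sigma,+\infty\Lbrack$ (neither moves after $\tau\le\sigma$), so, $N^{\mathbb G}$ being null at $0$, $N^{\mathbb G}=I_{\Lbrack 0,\sigma\Lbrack}\centerdot N^{\mathbb G}+\Delta N^{\mathbb G}_\sigma\,I_{\Lbrack\sigma,+\infty\Lbrack}$. Here $\Delta N^{\mathbb G}_\sigma=I_{\{\tau=\sigma\}}-\widetilde G_\sigma^{-1}I_{\{\sigma\le\tau\}}\,\Delta D^{o,\mathbb F}_\sigma$; using $\{\sigma\le\tau\}=\{\tau=\sigma\}=\{\tau_1\ge\sigma\}$, the standard jump formula $\Delta D^{o,\mathbb F}_\sigma=\E[\Delta D_\sigma\,|\,{\cal F}_\sigma]=P(\tau=\sigma\,|\,{\cal F}_\sigma)=P(\tau_1\ge\sigma\,|\,{\cal F}_\sigma)=\widetilde G^{(1)}_\sigma$, the identity $\widetilde G_\sigma=P(\tau_1\ge\sigma\,|\,{\cal F}_\sigma)=\widetilde G^{(1)}_\sigma$, and $\widetilde G_\sigma>0$ $P$-a.s.\ on $\{\tau=\sigma\}$ (positivity by \cite[Lemma (4,3)]{jeulin80}, as in the proof of Theorem \ref{NGmartingaleproperties}), one gets $\Delta N^{\mathbb G}_\sigma=0$ $P$-a.s. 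Hence $N^{\mathbb G}=I_{\Lbrack 0,\sigma\Lbrack}\centerdot D-(I_{\Lbrack 0,\sigma\Lbrack}\widetilde G^{-1}I_{\Lbrack 0,\tau\Rbrack})\centerdot D^{o,\mathbb F}$, and inserting (i) in the first term and, in the second, (ii)--(iii) to rewrite $I_{\Lbrack 0,\sigma\Lbrack}\widetilde G^{-1}I_{\Lbrack 0,\tau\Rbrack}$ as $I_{\Lbrack 0,\sigma\Lbrack}(\widetilde G^{(1)})^{-1}I_{\Lbrack 0,\tau_1\Rbrack}$ (the convention $1/0:=0$ is harmless, $\widetilde G$ and $\widetilde G^{(1)}$ being positive on $\Lbrack 0,\tau_1\Rbrack$ $P$-a.s.) and then (iv), one obtains $N^{\mathbb G}=I_{\Lbrack 0,\sigma\Lbrack}\centerdot\big(D_1-(\widetilde G^{(1)})^{-1}I_{\Lbrack 0,\tau_1\Rbrack}\centerdot(D_1)^{o,\mathbb F}\big)=I_{\Lbrack 0,\sigma\Lbrack}\centerdot N_1^{\mathbb G^{(1)}}$.

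The step I expect to be most delicate is the vanishing of the jump of $N^{\mathbb G}$ at $\sigma$: one must rule out any mass of $N^{\mathbb G}$ on the graph $\Lbrack\sigma\Rbrack$ even though $\tau$ may coincide with the $\mathbb F$-stopping time $\sigma$ with positive probability, which is precisely the cancellation $\widetilde G_\sigma=\Delta D^{o,\mathbb F}_\sigma$ on $\{\tau=\sigma\}$ together with the positivity of $\widetilde G_\sigma$ there (so that $\widetilde G^{-1}_\sigma$ causes no trouble). All the rest is routine bookkeeping with finite-variation processes and the commutation of (dual) $\mathbb F$-optional projections with $\mathbb F$-optional processes.
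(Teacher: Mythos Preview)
Your proof is correct. It differs from the paper's in organization rather than substance, so a short comparison is in order.

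The paper argues part (b) via a single global identity: writing $D(0):=I_{\Lbrack\sigma,+\infty\Lbrack}$ and $D(1):=D_1$, inclusion--exclusion gives
\[
D=D(0)+D(1)-D(0)D(1)=I_{\Lbrack0,\tau_1\Rbrack}\centerdot D(0)+I_{\Lbrack0,\sigma\Lbrack}\centerdot D(1),
\]
and taking the dual $\mathbb F$-optional projection yields $D^{o,\mathbb F}={\widetilde G}^{(1)}\centerdot D(0)+I_{\Lbrack0,\sigma\Lbrack}\centerdot (D_1)^{o,\mathbb F}$. Combined with $\widetilde G=I_{\Lbrack0,\sigma\Rbrack}\widetilde G^{(1)}$, one substitutes directly into (\ref{processNG}); the two $I_{\Lbrack0,\tau_1\Rbrack}\centerdot D(0)$ contributions cancel and the desired formula drops out in one line.

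You instead localize to $\Lbrack0,\sigma\Lbrack$ through facts (i)--(iv) and then treat the graph $\Lbrack\sigma\Rbrack$ by an explicit jump computation showing $\Delta N^{\mathbb G}_\sigma=0$. What the paper's identity absorbs algebraically (the contribution at $\sigma$ is exactly the term $I_{\Lbrack0,\tau_1\Rbrack}\centerdot D(0)$ that cancels against its compensator counterpart) you isolate and verify by hand via $\Delta D^{o,\mathbb F}_\sigma=\widetilde G_\sigma=\widetilde G^{(1)}_\sigma$ on $\{\tau=\sigma\}$. The paper's route is shorter and slicker; yours has the merit of making the ``delicate'' point---that $N^{\mathbb G}$ carries no mass on $\Lbrack\sigma\Rbrack$ even when $P(\tau=\sigma)>0$---fully transparent. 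Both are valid.
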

\begin{proof} It is clear that $\tau$  is a $\mathbb G_1$-stopping time just like $\tau_1$ and $\sigma$, and assertion (a) follows immediately. To prove assertion (b), we put $D_0:=I_{\Lbrack\sigma,+\infty\Lbrack}$ and derive
$$D:=I_{\Lbrack\tau,+\infty\Lbrack}=1-(1-D_0)(1-D_1)=D_0+D_1-D_0 D_1=I_{\Lbrack0,\tau_1\Rbrack}\is D_0+I_{\Lbrack0,\sigma\Lbrack}\is D_1.$$ 
Thus, by taking the dual $\mathbb F$-optional projection on both sides, we get 
$$D^{o,\mathbb F}={\widetilde G}_1\is D_0+I_{\Lbrack0,\sigma\Lbrack}\is D_1^{o,\mathbb F},$$
where ${\widetilde G}_1:=\ ^{o,\mathbb F}(I_{\Lbrack0,\tau_1\Rbrack})$. Hence, by combining the above equality with $\widetilde G= I_{\Lbrack0,\sigma\Rbrack}{\widetilde G}_1$, the proof of assertion (b) follows. This ends the proof of the proposition.
\end{proof}

For the financial or economic interpretation of the model for $\tau$ of Proposition \ref{JiaLi}, we refer the reader to \cite{jiaoli} and the references therein.

 \begin{remark}
 {\rm {(a)}} For the family of random times of Proposition \ref{JiaLi}-(b), it is clear that $N^{\mathbb G}$ 
 might differ from ${\overline N}^{\mathbb G}$ even in the case where  $N_1^{\mathbb G_1}={\overline N_1}^{\mathbb G_1}$. In fact, for this latter situation, ${\overline N}^{\mathbb G}$  is a pure default martingale if and only if $P(\tau_1=\sigma<+\infty)=0$ (i.e. $\tau_1$ avoids $\sigma$).  This simple fact proves that the correlation between $\mathbb F$ and $\tau$ disturbs tremendously the structure of the risk, and hence one should not neglect this correlation in any sense. \\
{\rm {(b)}} It is easy to see that, in general, $\mathbb G\not=\mathbb G_1$ by taking $\tau_1=\tau_0+\sigma$ and $\tau_0$ is not an $\mathbb F$-stopping time.
\end{remark}
 
 Below we consider another interesting class of models for $\tau$, that one can encounter frequently in practice, and for which we compare the
  two processes $N^{\mathbb G}$ and $\overline{N}^{\mathbb G}$.

\begin{proposition}\label{tauInpredictablestopping}
Suppose that there exists a sequence of $\mathbb F$-stopping times, $(\theta_n)_{n\geq 1}$, satisfying
 \begin{equation}\label{modelthin4tau}
 \Lbrack\tau\Rbrack\subset \bigcup_{n=1}^{+\infty} \Lbrack\theta_n\Rbrack.\end{equation}
 Then the following assertions hold.\\
 {\rm {(a)}} If $(\theta_n)_{n\geq 1}$ are totally inaccessible, then $N^{\mathbb G}$ and $\overbar{N}^{\mathbb G}$ differ.\\
  {\rm {(b)}} Suppose that $(\theta_n)_{n\geq 1}$ are predictable. Then,
   $N^{\mathbb G}$ and $\overline{N}^{\mathbb G}$ coincide if and only if for all $n\geq 1$,
 \begin{eqnarray}\label{GeneralCondition}
 P\left(\tau=\theta_n\ \big| {\cal F}_{\theta_n}\right)P\left(\tau\geq \theta_n\ \big| {\cal F}_{\theta_n-}\right)
 =P\left(\tau=\theta_n\ \big| {\cal F}_{\theta_n-}\right)P\left(\tau\geq\theta_n\ \big| {\cal F}_{\theta_n}\right),\ \ \ \ \ \ P\mbox{-a.s.}.
 \end{eqnarray}
{\rm {(c)}}  Suppose, for all $n\geq 1$, $\theta_n$ is predictable and $(\tau=\theta_n)$ is independent of ${\cal F}_{\theta_n}$.
Then $N^{\mathbb G}=\overline{N}^{\mathbb G}$.
\end{proposition}

\begin{proof}
1) Suppose that $\theta_n$ is totally inaccessible for all $n\geq 1$. Then one can easily calculate
 $$D^{o,\mathbb F}:=\sum_{n=1}^{+\infty} P\left(\tau=\theta_n\ \big| {\cal F}_{\theta_n}\right)I_{\Lbrack\theta_n,+\infty\Lbrack},$$
 and deduce that $D^{p,\mathbb F}=\left(D^{o,\mathbb F}\right)^{p,\mathbb F}$ is continuous. Thus $D^{p,\mathbb F}\not =D^{o,\mathbb F}$, and assertion (a) is proved.\\
2) Suppose that $\theta_n$ is predictable for all $n\geq 1$. If furthermore $(\tau=\theta_n)$ is independent of ${\cal F}_{\theta_n}$
 for all $n\geq 1$, then $(\tau=\theta_n)$ is also independent of ${\cal F}_{\theta_n-}$ (since ${\cal F}_{\theta_n-}\subset {\cal F}_{\theta_n}$) and (\ref{GeneralCondition}) is clearly fulfilled in 
 this case. Thus assertion (c) follows immediately from assertion (b). To prove this latter assertion,
 it is enough to remark that (using the convention $0/0=0$)
  \begin{eqnarray*}
 {1\over{\widetilde G}}\is D^{o,\mathbb F}&:=&\sum_{n=1}^{+\infty} {{P\left(\tau=\theta_n\ \big| {\cal F}_{\theta_n}\right)}\over{
 P\left(\tau\geq\theta_n\ \big| {\cal F}_{\theta_n}\right)}}I_{\Lbrack\theta_n,+\infty\Lbrack}\ \  \ \mbox{and}\ \ \  \
 {1\over{G_{-}}}\is D^{p,\mathbb F}:=\sum_{n=1}^{+\infty}
 {{P\left(\tau=\theta_n\ \big| {\cal F}_{\theta_n-}\right)}\over{P\left(\tau\geq\theta_n\ \big| {\cal F}_{\theta_n-}\right)}}
 I_{\Lbrack\theta_n,+\infty\Lbrack}.\end{eqnarray*}
 This ends the proof of assertion (b) and the proof of the proposition as well.
 \end{proof}

\begin{remark}
 {\rm {(a)}} It is worth mentioning that discrete time market models are covered by Proposition \ref{tauInpredictablestopping}. In fact, it is enough to take  $(\theta_n)_n$ to be deterministic and nondecreasing times say $\theta_i=i$ for $i\geq 0$. Then the necessary and sufficient condition (\ref{GeneralCondition}) for $\overline{N}^{\mathbb G}$ to coincides with $N^{\mathbb G}$ becomes 
 \begin{eqnarray*}
  P\left(\tau=n\ \big| {\cal F}_{n}\right)P\left(\tau\geq n\ \big| {\cal F}_{n-1}\right)
 =P\left(\tau=n\ \big| {\cal F}_{n-1}\right)P\left(\tau\geq n\ \big| {\cal F}_{n}\right),\quad P\mbox{-a.s.}\quad \forall n\geq 1.
\end{eqnarray*} 
 {\rm {(b)}} It is easy to check that, for a model of $\tau$ satisfying (\ref{modelthin4tau}) with predictable $(\theta_n)_n$, (\ref{GeneralCondition}) is equivalent to Proposition \ref{NbarG-NG}-(c).
\end{remark}
 
A practical example for the model (\ref{modelthin4tau}) can be found in \cite{aksamitetal15} that we recall for the reader's convenience.

 \begin{example}\label{example4} Suppose that $\mathbb F$ is generated by a Poisson process $N$ with intensity one.
 Consider two real numbers $a>0$ and $\mu>1$, and set
\begin{eqnarray*}\label{tau}
\tau:=\sup\{t\geq 0:\ Y_t:=\mu t-N_t\leq a\},\ \ \ \ \ \ M_t:=N_t-t.\end{eqnarray*} It can be proved easily, see \cite{aksamitetal15}, that
$$
G=\Psi(Y-a)I_{\{ Y\geq a\}}+I_{\{Y<a\}}\ \ \ \ \mbox{and}\ \ \ \ {\widetilde G}=\Psi(Y-a)I_{\{Y>a\}}+I_{\{Y\leq a\}}.$$
Here $\Psi(u):=P\left(\sup_{t\geq 0}Y_t>u\right)$ is the ruin probability associated with the process $Y$.
This model for $\tau$ falls into the case of Proposition \ref{tauInpredictablestopping}-(c) (see \cite{aksamitetal15}), where $\theta_n$ is given by
$$\theta_n:=\inf\{t>\theta_{n-1}\ :\ Y_t=a\},\ \ \ n\geq 1,\ \ \ \theta_0=0.$$
Thus, for this model of $(\tau,\mathbb F)$, the two $\mathbb G$-martingales $N^{\mathbb G}$ and $\overline{N}^{\mathbb G}$ coincide.
\end{example}

Other practical models for the model (\ref{modelthin4tau}) of Proposition \ref{tauInpredictablestopping} , can be found in \cite{guimlichscmidt}. In the latter paper, the authors suppose that the random time representing the default time is known via $D^{p,\mathbb F}$ instead. Indeed, they assume the existence of a random time $\tau$ such that $D^{p,\mathbb F}_t=\int_0^t \Lambda_s ds+\sum_{k=1}^n \Gamma_iI_{\Lbrack U_i,+\infty\Lbrack}$ holds. Here $\Lambda$ is a nonnegative and $\mathbb F$-adapted process with $\int_0^t \vert \Lambda_s\vert ds<+\infty\ P\mbox{-a.s.}$, $(U_i)_{i=1,...,n}$ is a finite sequence of $\mathbb F$-predictable stopping times, and $\Gamma_i$ is an ${\cal F}_{U_i-}$-measurable random variable with values in $(0,1)$, for all $i=1,...,n$. The main challenging obstacle in this case lies in  proving the existence of $\tau$ associated with the given  $D^{p,\mathbb F}$. While we are not discussing this existence assumption herein, we simply notice that it has a big chance to hold if the space $(\Omega, {\cal G},P)$ is rich enough. Thus, provided the existence of such $\tau$, in virtue of Proposition \ref{tauInpredictablestopping}, we can conjecture that this $\tau$ can take various forms. In fact, one might have the form of $\tau=\tau_1\wedge\tau_2$, where $\tau_1$ satisfies (\ref{modelthin4tau}) with totally inaccessible $(\theta_n)_n$  and $\Lbrack\tau_2\Rbrack\subset\bigcup_{k=1}^n \Lbrack U_k\Rbrack$. A second form could be $\tau=\tau_1\wedge
\tau_2$, where $\tau_1$ is a random time that avoids $\mathbb F$-stopping times (also known as Cox's random time), and $\Lbrack\tau_2\Rbrack\subset\bigcup_{k=1}^n \Lbrack U_k\Rbrack$. A third form for $\tau$ could result from combining both previous forms by putting $\tau=\tau_1\wedge\tau_2\wedge\tau_3$, where $\tau_1$ and $\tau_2$ as in the second form, and $\tau_3$ satisfies (\ref{modelthin4tau}) with totally inaccessible $(\theta_n)_n$.  In virtue of the main idea of  \cite{guimlichscmidt} and the {\it optional spirit} of this current paper, one can think about considering $D^{o,\mathbb F}$ instead. In fact, one can suppose that $\tau$ is given such that  
$$D^{o,\mathbb F}=\alpha\int_0^t \Lambda_s ds+\beta\sum_{k=1}^{\infty} \Gamma_iI_{\Lbrack U_i,+\infty\Lbrack}+\gamma\sum_{k=1}^{\infty} \Delta_iI_{\Lbrack \theta_i,+\infty\Lbrack}.
$$  Here, the first and the second processes are of the same type as in  \cite{guimlichscmidt}, while for the third process $(\theta_i)_{i\geq 1}$ are $\mathbb F$-totally inaccessible stopping times, $\Delta_i$ is ${\cal F}_{\theta_i}$-measurable random variable with values in $(0,1)$, and $\alpha,\beta$ and $\gamma$ are nonnegative real numbers in $[0,1)$. 

\subsection{Classes of pure default martingales}
This subsection proposes two orthogonal classes of pure default (local) martingales  that model the pure default risks. These classes play vital roles in our martingale representation theorems of the last subsection. The following discusses the class of  {\it pure default martingales of type one}.

\begin{theorem}\label{theorem2.13} The following assertions hold.\\
{\rm{(a)}}  Let $K$ be an $\mathbb F$-optional process, which is Lebesgue-Stieltjes integrable with respect to $N^{\mathbb G}$. Then,
\begin{equation}\label{equivalencve4K}
K\is N^{\mathbb G}\in {\cal A}(\mathbb G)\quad \ \mbox{if and only if}\quad \ [K\is N^{\mathbb G}]^{1/2}
\in {\cal A}^+(\mathbb G)\quad \ \mbox{if and only if}\quad \ K\in {\mathcal{I}}^o(N^{\mathbb G},{\mathbb G}),\end{equation}

 where
\begin{eqnarray*} \label{SpaceLNG}
{\mathcal{I}}^o(N^{\mathbb G},\mathbb G) := \Big\{K\in \mathcal{O}(\mathbb F)\ \ \big|\ \
 \ \E\left[\vert{K}\vert G{\widetilde G}^{-1} I_{\{\widetilde{G}>0\}}\is D_{\infty}\right]<+\infty\Big\}.
\end{eqnarray*}
{\rm{(b)}} The following set 
 \begin{eqnarray*}\label{Mortalitytype1}
 {\cal M}^{(1)}(\mathbb G):=\left\{K\is N^{\mathbb G}\ \big|\  K\in{\mathcal{I}}^o_{loc}(N^{\mathbb G},{\mathbb G})\ \mbox{i.e.}\ K\in \mathcal{O}(\mathbb F)\ \mbox{and}\ \vert K\vert\is \Var(N^{\mathbb G})\in {\cal A}^+_{loc}(\mathbb G)\right\},
 \end{eqnarray*}
 where $\Var(N^{\mathbb G})$ is the variation process of $N^{\mathbb G}$, is a space of pure default local martingales that we call the class of pure default martingales of type one.
 \end{theorem}
 
 \begin{proof} Let $K$ be an $\mathbb F$-optional process that is Lebesgue-Stieltjes
 integrable with respect to $N^{\mathbb G}$.
 Then, using $\Delta D^{o,\mathbb F}=\widetilde G-G$, we derive
\begin{align*}
[K\is N^{\mathbb G},K\is N^{\mathbb G}]&=\sum K^2 (\Delta N^{\mathbb G})^2 =
\sum K^2 \Big( (1  - \widetilde{G}^{-1} I_{\{\widetilde{G}>0\}} \Delta D^{o,\mathbb F} ) \Delta D -
 \widetilde{G}^{-1} I_{\Rbrack 0,\tau\Lbrack} \Delta D^{o,\mathbb F} \Big)^2	 \\
& = \sum K^2 \Big( G\widetilde{G}^{-1} I_{\{\widetilde{G}>0\}} \Delta D -
\widetilde{G}^{-1} I_{\Rbrack 0,\tau\Lbrack} \Delta D^{o,\mathbb F} \Big)^2\\
&=  \sum K^2 (G\widetilde{G}^{-1})^2 I_{\{\widetilde{G}>0\}} \Delta D +
\sum K^2 \widetilde{G}^{-2} I_{\Rbrack 0,\tau\Lbrack} (\Delta D^{o,\mathbb F})^2.	\end{align*}

A combination of this together with $\sqrt{\sum \vert x\vert}\leq \sum \sqrt{\vert x\vert}$ implies that, on the one hand,
\begin{equation}\label{inequalities}
\vert K\vert G\widetilde{G}^{-1} I_{\{\widetilde{G}>0\}} \is D\leq \sqrt{K^2\is [N^{\mathbb G}, N^{\mathbb G}]}\leq
 \vert K\vert G\widetilde{G}^{-1} I_{\{\widetilde{G}>0\}} \is D+
 \vert K\vert \widetilde{G}^{-1} I_{\Rbrack 0,\tau\Lbrack}\is D^{o,\mathbb F}.
\end{equation}
On the other hand, since $G=\ ^{o,\mathbb F}(I_{\Lbrack 0,\tau\Lbrack})$, it is easy to check that
\begin{equation}\label{equiv1}
{{\vert K\vert}\over{ \widetilde{G}}} I_{\Rbrack 0,\tau\Lbrack}\is D^{o,\mathbb F}\in
{\cal A}^+(\mathbb G)\ \mbox{(resp.}\ {\cal A}^+_{\loc}(\mathbb G))\quad \mbox{iff}
\quad {{\vert K\vert G}\over{\widetilde{G}}} I_{\{\widetilde{G}>0\}} \is D\in {\cal A}^+(\mathbb G)\
 \mbox{(resp.}\ {\cal A}^+_{\loc}(\mathbb G)).\end{equation}

 Thanks again to $\Delta D^{o,\mathbb F}=\widetilde G-G$ we get $\Var(K\is N^{\mathbb G})
 =\vert K\vert G\widetilde{G}^{-1} I_{\{\widetilde{G}>0\}} \is D+ \vert K\vert \widetilde{G}^{-1}
 I_{\Rbrack 0,\tau\Lbrack}\is D^{o,\mathbb F}$. Hence, the proof of (\ref{equivalencve4K})
 follows immediately from combining this latter equality with (\ref{inequalities}) and (\ref{equiv1}). This proves assertion (a), while the rest of the proof deals with assertion (b).  
For any $\mathbb F$-stopping time \(\sigma\), and $K\in {\cal I}^o(N^{\mathbb G}, \mathbb G)$,
 due to $I_{\{\widetilde G>0\}}\is D\equiv D$, we get
\begin{align*}
\E[(K\is N^{\mathbb G})_{\sigma}] &= \E\Big[ (K \is D)_{\sigma} - (K\widetilde{G}^{-1}
I_{\Rbrack 0,\tau\Rbrack} \is D^{o,\mathbb F})_{\sigma}\Big] \\
&= \E[ (K \is D)_{\sigma}] - \E\Big[(K\widetilde{G}^{-1} \ ^{o,\mathbb F}(I_{\Lbrack 0,\tau\Rbrack})
I_{\{\widetilde{G}>0\}} \is D)_{\sigma}\Big]=0 .
\end{align*}
Thus, in virtue of (\ref{tauGtauF}), the above equality proves that $K\is N^{\mathbb G}\in {\cal M}_0(\mathbb G)$. Similarly, it is clear that $K\is N^{\mathbb G}\in {\cal M}_{0,loc}(\mathbb G)$ as soon as $K\in{\mathcal{I}}^o_{\loc}(N^{\mathbb G},{\mathbb G})$. Furthermore, for any  $\mathbb F$-locally bounded optional process $H$ and any $K\in{\mathcal{I}}^o_{\loc}(N^{\mathbb G},{\mathbb G})$, we have $KH\in {\mathcal{I}}^o_{\loc}(N^{\mathbb G},{\mathbb G})$. As a result, for any $\mathbb F$-locally bounded local martingale $M$, we have
 $[M,K\is N^{\mathbb G}]=(\Delta M)K\is N^{\mathbb G}$ is a $\mathbb G$-local martingale since
  $(\Delta M)K\in{\mathcal{I}}^o_{\loc}(N^{\mathbb G},{\mathbb G})$. This proves assertion (c), and ends the proof of the theorem.
  \end{proof}

The rest of this subsection introduces the pure default (local) martingales {\it of the second type}.
 After posting our first version of the paper on Arxiv, and presenting it in several conferences, some colleagues informed us about the
  existence of this class of martingales in \cite{azemaetal93} for the Brownian framework and honest times (only) that avoids $\mathbb F$-stopping times. To introduce our class that extends \cite{azemaetal93} to the general framework, 
  we start with the following notation. On the set 
  $\left(\Omega\times[0,+\infty), {\cal F}\otimes{\cal B}(\mathbb R^+)\right)$ (where ${\cal B}(\mathbb R^+)$
  is the Borel $\sigma$-field on $\mathbb R^+=[0,+\infty)$), we consider 
$$\mu(d\omega,dt):=P(d\omega)dD_t(\omega), $$ which is a finite measure and hence it can be normalized into a probability measure. Recall that the predictable,
optional, and progressive sub-$\sigma$-fields are denoted by ${\cal P}(\mathbb F)$, ${\cal O}(\mathbb F)$,
and Prog$(\mathbb F)$ respectively. On $(\Omega, {\cal F})$, we consider the sub-$\sigma$-fields
${\cal F}_{\tau-}$, ${\cal F}_{\tau}$, and ${\cal F}_{\tau+}$ obtained as the sigma fields generated by
$\{X_{\tau}\ \big|\ X\ \mbox{is $\mathbb F$-predictable}\}$, $\{X_{\tau}\ \big|\ X\ \mbox{is $\mathbb F$-optional}\}$,
and $\{X_{\tau}\ \big|\ X\ \mbox{is $\mathbb F$-progressively measurable}\}$ respectively.
Furthermore, for any ${\cal H}\in\{ {\cal P}(\mathbb F), {\cal O}(\mathbb F), \mbox {Prog}(\mathbb F)\}$, for any $p\in [1,+\infty)$, we define
\begin{equation}\label{L1(PandD}
L^p\left({\cal H}, P\otimes D\right):=\Bigl\{ X\ {\cal H}\mbox{-measurable}\
\big|\ \ \E[\vert X_{\tau}\vert^p I_{\{\tau<+\infty\}}]=:\E_{P\otimes D}[\vert X\vert^p]<+\infty\Bigr\},\end{equation}
and its localisation
\begin{eqnarray*}\label{L1(PandD)Local}
L^p_{loc}\left({\cal H}, P\otimes D\right):=\left\{ X\ \big|\ X^{T_n}\in L^p\left({\cal H},
P\otimes D\right),\ T_n\ \mbox{$\mathbb F$-stopping time s.t.}\ \sup_n T_n=+\infty\right\}.\end{eqnarray*}

\noindent In the following, we define the  pure default ( local) martingales {\it of the second type}, and specify their relationship to the first type
 of pure default martingales.

\begin{theorem}\label{the3spaces}
The following assertions hold.\\
(a) The class of processes
\begin{align}
{\cal M}^{(2)}_{loc}(\mathbb G)&:= \left\{ k\is D\ \Big|\ k\in L^1_{loc}\left({\rm{Prog}}(\mathbb F), P\otimes D\right)\ \
\mbox{and}\ \ \  \E[k_{\tau}|{\cal F}_{\tau}]I_{\{\tau<+\infty\}}=0\ P\text{-a.s}\  \right\}\nonumber
\end{align}
is a space of pure default local martingales, that we call the class of pure default local martingales of the second type.\\
(b) For any $k\in L^1_{loc}\left(\mbox {Prog}(\mathbb F), P\otimes D\right)$ and any
 $h\in {\cal I}^{o}_{loc}(N^{\mathbb G},\mathbb G)$
we have
\begin{eqnarray*}\label{orthogonality}
[k\is D,h\is N^{\mathbb G}]\in {\cal A}_{loc}(\mathbb G) \ \mbox{if and only if}\ \
[k\is D,h\is N^{\mathbb G}]\in{\cal M}_{loc}(\mathbb G)
\end{eqnarray*}
(i.e. the first type of pure default local martingales are orthogonal to the second type of pure default local martingales provided the local integrability of their product). 
\end{theorem}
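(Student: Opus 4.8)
The plan is to use that every process in the statement jumps only at $\tau$, reducing both assertions to the elementary ``zero mean at every bounded $\mathbb{G}$-stopping time'' criterion, to the $\mathbb{F}$/$\mathbb{G}$ stopping-time transfer (\ref{tauGtauF}), and to the jump formula for $N^{\mathbb{G}}$ obtained in the proof of Theorem~\ref{NGmartingaleproperties}.

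\textbf{Assertion (a).} For $\mathbb{F}$-progressive $k$ the process $k\centerdot D=k_{\tau}I_{\{\tau<+\infty\}}I_{\Lbrack\tau,+\infty\Lbrack}$ is $\mathbb{G}$-optional (as $k$ is $\mathbb{G}$-progressive and $D$ is $\mathbb{G}$-optional) and, by $k\in L^1_{\loc}(\mathcal{P}_{rog}(\mathbb{F}),P\otimes D)$, of $\mathbb{G}$-locally integrable variation. To show it is a $\mathbb{G}$-local martingale I would test against bounded $\mathbb{G}$-stopping times: given such a $\rho$, (\ref{tauGtauF}) provides an $\mathbb{F}$-stopping time $\sigma$ with $(k\centerdot D)_{\rho}=(kI_{\Lbrack 0,\sigma\Rbrack})_{\tau}I_{\{\tau<+\infty\}}$; since $I_{\Lbrack 0,\sigma\Rbrack}$ is $\mathbb{F}$-optional, $(I_{\Lbrack 0,\sigma\Rbrack})_{\tau}$ is $\mathcal{F}_{\tau}$-measurable, so conditioning on $\mathcal{F}_{\tau}$ and invoking $\E[k_{\tau}\mid\mathcal{F}_{\tau}]I_{\{\tau<+\infty\}}=0$ yields $\E[(k\centerdot D)_{\rho}]=0$. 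As in the proof of Theorem~\ref{NGmartingaleproperties}, this (with the local integrable variation) forces $k\centerdot D\in\mathcal{M}_{\loc}(\mathbb{G})$. Property~(a) of Definition~\ref{PureMortalityMartingales} is clear from $D=D^{\tau}$, and for property~(b) I would note that for every $\mathbb{F}$-locally bounded local martingale $M$ one has $[k\centerdot D,M]=(k\,\Delta M)\centerdot D$, with $k\,\Delta M$ again $\mathbb{F}$-progressive ($\Delta M$ being $\mathbb{F}$-optional), in $L^1_{\loc}(\mathcal{P}_{rog}(\mathbb{F}),P\otimes D)$ ($\Delta M$ being locally bounded), and satisfying, as $\Delta M_{\tau}$ is $\mathcal{F}_{\tau}$-measurable, $\E[(k\,\Delta M)_{\tau}\mid\mathcal{F}_{\tau}]I_{\{\tau<+\infty\}}=\Delta M_{\tau}\,\E[k_{\tau}\mid\mathcal{F}_{\tau}]I_{\{\tau<+\infty\}}=0$; hence $[k\centerdot D,M]\in\mathcal{M}^{(2)}_{\loc}(\mathbb{G})$ is a $\mathbb{G}$-local martingale, i.e.\ $k\centerdot D$ is orthogonal to $M$. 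This proves (a).

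\textbf{Assertion (b).} As $k\centerdot D$ is a pure-jump finite-variation process whose only jump is at $\tau$, $[k\centerdot D,h\centerdot N^{\mathbb{G}}]=\Delta(k\centerdot D)_{\tau}\,\Delta(h\centerdot N^{\mathbb{G}})_{\tau}\,I_{\Lbrack\tau,+\infty\Lbrack}$; substituting $\Delta(h\centerdot N^{\mathbb{G}})=h\bigl(G\widetilde{G}^{-1}I_{\{\widetilde{G}>0\}}\Delta D-\widetilde{G}^{-1}I_{\Lbrack 0,\tau\Lbrack}\Delta D^{o,\mathbb{F}}\bigr)$ from the proof of Theorem~\ref{NGmartingaleproperties}, and using $(\Delta D)^2=\Delta D$ together with $\Delta D\,I_{\Lbrack 0,\tau\Lbrack}\equiv 0$, one obtains
\[
[k\centerdot D,h\centerdot N^{\mathbb{G}}]=\bigl(k\,h\,G\,\widetilde{G}^{-1}I_{\{\widetilde{G}>0\}}\bigr)\centerdot D=:\ell\centerdot D .
\]
The implication ``$\in\mathcal{M}_{\loc}(\mathbb{G})\Rightarrow\in\mathcal{A}_{\loc}(\mathbb{G})$'' is immediate: for such a single-jump process, along the localizing sequence $(T_n)$ making $\ell\centerdot D$ a true $\mathbb{G}$-martingale one has $\E[\,|\ell_{\tau}|I_{\{\tau\le T_n,\,\tau<+\infty\}}]<+\infty$, i.e.\ $\ell\centerdot D\in\mathcal{A}_{\loc}(\mathbb{G})$. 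For the converse, assume $\ell\centerdot D\in\mathcal{A}_{\loc}(\mathbb{G})$, i.e.\ $|\ell|\centerdot D\in\mathcal{A}^+_{\loc}(\mathbb{G})$; since $h$, $G$, $\widetilde{G}$ are $\mathbb{F}$-optional (so $h_{\tau}$, $G_{\tau}$, $\widetilde{G}_{\tau}$ are $\mathcal{F}_{\tau}$-measurable) and $\E[k_{\tau}\mid\mathcal{F}_{\tau}]I_{\{\tau<+\infty\}}=0$ (that is, $k\centerdot D\in\mathcal{M}^{(2)}_{\loc}(\mathbb{G})$), we get $\E[\ell_{\tau}\mid\mathcal{F}_{\tau}]I_{\{\tau<+\infty\}}=h_{\tau}G_{\tau}\widetilde{G}_{\tau}^{-1}\,\E[k_{\tau}\mid\mathcal{F}_{\tau}]I_{\{\tau<+\infty\}}=0$. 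Hence $\ell$ satisfies all requirements in the definition of $\mathcal{M}^{(2)}_{\loc}(\mathbb{G})$, and arguing as in~(a) (with $\ell$ for $k$) gives $[k\centerdot D,h\centerdot N^{\mathbb{G}}]=\ell\centerdot D\in\mathcal{M}^{(2)}_{\loc}(\mathbb{G})\subset\mathcal{M}_{\loc}(\mathbb{G})$, which is the assertion, and in particular the claimed orthogonality of the two types of pure mortality local martingales.

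\textbf{Main obstacle.} The difficulty is bookkeeping rather than conceptual. One must get $\Delta N^{\mathbb{G}}$ exactly right and use $\Delta D\,I_{\Lbrack 0,\tau\Lbrack}\equiv 0$ and $\widetilde{G}_{\tau}>0$ a.s.\ on $\{\tau<+\infty\}$, so that the $D^{o,\mathbb{F}}$-term of $\Delta N^{\mathbb{G}}$ disappears from the covariation and only the clean factor $G\widetilde{G}^{-1}$ remains. Moreover, because $k$ is merely $\mathbb{F}$-\emph{progressive}, one controls only $\E[k_{\tau}\mid\mathcal{F}_{\tau}]$, so every factor pulled out of a conditional expectation at $\tau$ --- $I_{\Lbrack 0,\sigma\Rbrack}$, $\Delta M$, $h$, $G$, $\widetilde{G}$ --- must be genuinely $\mathbb{F}$-\emph{optional}, hence $\mathcal{F}_{\tau}$-measurable at $\tau$; this is exactly why $\mathcal{F}_{\tau}$ (and not $\mathcal{F}_{\tau-}$) appears in the definition of $\mathcal{M}^{(2)}_{\loc}(\mathbb{G})$. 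Finally, the transfer (\ref{tauGtauF}) between $\mathbb{G}$- and $\mathbb{F}$-stopping times up to $\tau$ is what legitimizes testing against bounded $\mathbb{G}$-stopping times and reconciles the $\mathbb{F}$- and $\mathbb{G}$-localizations.
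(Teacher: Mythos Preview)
Your proof is correct and follows essentially the same route as the paper. In fact, your argument for assertion~(a) is more detailed: where the paper simply asserts that ``it is clear, from its definition, that ${\cal M}^{(2)}_{\loc}(\mathbb G)$ is a subspace of ${\cal M}_{\loc}(\mathbb G)$'', you actually supply the verification via (\ref{tauGtauF}) and conditioning on $\mathcal{F}_{\tau}$; for assertion~(b), both you and the paper compute $[k\centerdot D,h\centerdot N^{\mathbb G}]=k\widetilde K\centerdot D$ with $\widetilde K:=hG\widetilde G^{-1}I_{\{\widetilde G>0\}}$ and then observe that $\widetilde K_{\tau}$ is $\mathcal{F}_{\tau}$-measurable, so the conditional-mean-zero property of $k$ transfers to $k\widetilde K$---and both proofs tacitly use the hypothesis $\E[k_{\tau}\mid\mathcal{F}_{\tau}]I_{\{\tau<+\infty\}}=0$ (i.e.\ $k\centerdot D\in\mathcal{M}^{(2)}_{\loc}(\mathbb G)$), which is implicit in the parenthetical interpretation of~(b) though not literally written in its hypothesis.
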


\begin{proof} It is clear that ${\cal M}^{(2)}_{loc}(\mathbb G)$ is a subspace of ${\cal M}_{loc}(\mathbb G)$ on the one hand. On
the other hand, 
$$[k\is D, M]=(\Delta M)k\is D\in {\cal M}^{(2)}_{loc}(\mathbb G),$$
for any $k\in L^1_{loc}\left(\mbox{Prog}(\mathbb F), P\otimes D\right)$ satisfying
 $\E(k_{\tau}\ \big|\ {\cal F}_{\tau})=0$ $P$-a.s on $\{\tau<+\infty\}$, and for any bounded $\mathbb F$-local martingale $M$. This proves assertion (a), and
  the remaining part of this proof focuses on proving assertion (b).
 To this end,  let $k\in L^1_{loc}\left(\mbox {Prog}(\mathbb F), P\otimes D\right)$
such that $\E[k_{\tau}|{\cal F}_{\tau}]=0\ P$-a.s on $\{\tau<+\infty\}$, and $K\in {\mathcal{I}}^o_{\loc}(N^{\mathbb G},\mathbb G)$. Then, we have
$$
[K\is N^{\mathbb G}, k\is D]= kK\Delta N^{\mathbb G}\is D=k{\widetilde K}\is D,\ \ \
 \ \  \ \widetilde K:=KG(\widetilde G)^{-1}I_{\{\widetilde G>0\}}.$$
 Thus, $[K\is N^{\mathbb G}, k\is D]\in {\cal A}_{loc}(\mathbb G)$ if and only if
 $k{\widetilde K}\in L^1_{loc}\left(\mbox {Prog}(\mathbb F), P\otimes D\right)$, and in this case we have 
 $$\E(k_{\tau}\widetilde{K}_{\tau}\ \big|\ {\cal F}_{\tau})I_{\{\tau<+\infty\}}=
 \widetilde{K}_{\tau}\E(k_{\tau}\ \big|\ {\cal F}_{\tau})I_{\{\tau<+\infty\}}=0,\ P\mbox{-a.s.}.$$ This ends the proof of the theorem.\end{proof}

\begin{proposition}\label{IntegOptionalProcess} Let $H$ be an $\mathbb F$-optional process.
Then the following hold.\\
{\rm{(a)}}  If both $K$ and $HK$ belong to ${\mathcal{I}}^o_{\loc}(N^{\mathbb G},\mathbb G)$, then $ HK\is N^{\mathbb G}= H\is(K\is N^{\mathbb G})$ is  a $\mathbb G$-local martingale. In particular,
$\left(K\is N^{\mathbb G}\right)^{\sigma-}\in {\cal M}_{loc}(\mathbb G)$, for any $\mathbb F$-stopping time $\sigma$, and any $K\in{\mathcal{I}}^o_{\loc}(N^{\mathbb G},\mathbb G)$.\\
{\rm{(b)}} If both $k$ and $kH$ belong to $L^1_{loc}\left({\rm {Prog}}(\mathbb F), P\otimes D\right)$ 
and $\E(k_{\tau}\ \big|\ {\cal F}_{\tau})I_{\{\tau<+\infty\}}=0,\ P\mbox{-a.s.}$, then
$ Hk\is D=H\is(k\is D)$ belongs to ${\cal M}^{(2)}_{loc}(\mathbb G)$. In particular,
$\left(k\is D\right)^{\sigma-}\in {\cal M}^{(2)}_{loc}(\mathbb G)$, for any $\mathbb F$-stopping time $\sigma$, and any $k\in L^1_{loc}\left({\rm {Prog}}(\mathbb F), P\otimes D\right)$ such that 
$\E(k_{\tau}\ \big|\ {\cal F}_{\tau})I_{\{\tau<+\infty\}}=0,\ P\mbox{-a.s.}$.
\end{proposition}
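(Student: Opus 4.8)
The structural fact driving everything is that $N^{\mathbb G}$ (by Theorem \ref{NGmartingaleproperties}-(a)) and $D$ are processes with locally integrable variation, so every ``$\centerdot$'' occurring in the statement is an ordinary, pathwise Lebesgue--Stieltjes integral. For such integrals the associativity rules $HK\centerdot N^{\mathbb G}=H\centerdot(K\centerdot N^{\mathbb G})$ and $Hk\centerdot D=H\centerdot(k\centerdot D)$ hold as soon as all integrands are integrable against the relevant total variation, so the only real content of the proposition is that the (local) martingale property is preserved when one multiplies by the $\mathbb F$-optional factor $H$. I do not expect a genuine obstacle; the single point that needs care is the well-definedness of the iterated integrals, which is settled by the formulas $\Var(K\centerdot N^{\mathbb G})=\vert K\vert\centerdot\Var(N^{\mathbb G})$ (obtained in the proof of Theorem \ref{NGmartingaleproperties}) and $\Var(k\centerdot D)=\vert k\vert\centerdot D$, together with (\ref{tauGtauF}) to move freely between the $\mathbb F$-localization built into ${\mathcal{I}}^o_{\loc}(N^{\mathbb G},\mathbb G)$ and $L^1_{loc}({\cal P}_{rog}(\mathbb F),P\otimes D)$ and the $\mathbb G$-localization implicit in ${\cal A}^+_{\loc}(\mathbb G)$, ${\cal M}_{loc}(\mathbb G)$ and ${\cal M}^{(2)}_{loc}(\mathbb G)$ --- this being legitimate since every process in sight is stopped at $\tau$.

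To prove (a), I would start from $\vert H\vert\centerdot\Var(K\centerdot N^{\mathbb G})=\vert H\vert\,\vert K\vert\centerdot\Var(N^{\mathbb G})=\vert HK\vert\centerdot\Var(N^{\mathbb G})$, which lies in ${\cal A}^+_{\loc}(\mathbb G)$ by the hypothesis $HK\in{\mathcal{I}}^o_{\loc}(N^{\mathbb G},\mathbb G)$; hence $H\centerdot(K\centerdot N^{\mathbb G})$ is well defined and, by associativity, equals $HK\centerdot N^{\mathbb G}$. Since $H$ and $K$ are $\mathbb F$-optional so is $HK$, and as $HK\in{\mathcal{I}}^o_{\loc}(N^{\mathbb G},\mathbb G)$, Theorem \ref{NGmartingaleproperties}-(c) immediately gives that $HK\centerdot N^{\mathbb G}$ is a $\mathbb G$-local martingale. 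For the ``in particular'' statement I would take $H:=I_{\Lbrack 0,\sigma\Lbrack}$, which is $\mathbb F$-optional with $\vert H\vert\le1$; then $\vert HK\vert\centerdot\Var(N^{\mathbb G})\le\vert K\vert\centerdot\Var(N^{\mathbb G})\in{\cal A}^+_{\loc}(\mathbb G)$, i.e.\ $HK\in{\mathcal{I}}^o_{\loc}(N^{\mathbb G},\mathbb G)$, so the first part applies; since $(K\centerdot N^{\mathbb G})^{\sigma-}=I_{\Lbrack 0,\sigma\Lbrack}\centerdot(K\centerdot N^{\mathbb G})=HK\centerdot N^{\mathbb G}$, one concludes $(K\centerdot N^{\mathbb G})^{\sigma-}\in{\cal M}_{loc}(\mathbb G)$.

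To prove (b), I would use the elementary identity $\vert X\vert\centerdot D_\infty=\vert X_\tau\vert I_{\{\tau<+\infty\}}$ to note that, by the definition (\ref{L1(PandD}), membership of a progressively measurable $X$ in $L^1_{loc}({\cal P}_{rog}(\mathbb F),P\otimes D)$ entails $\vert X\vert\centerdot D\in{\cal A}^+_{\loc}(\mathbb G)$. Now $Hk$ is $\mathbb F$-progressively measurable (as a product of the $\mathbb F$-optional $H$ and the progressively measurable $k$) and lies in $L^1_{loc}({\cal P}_{rog}(\mathbb F),P\otimes D)$ by hypothesis, so $H\centerdot(k\centerdot D)$ is well defined and, by associativity, equals $Hk\centerdot D$. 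As $H_\tau$ is ${\cal F}_\tau$-measurable for an $\mathbb F$-optional $H$, one has $\E[(Hk)_\tau\mid{\cal F}_\tau]I_{\{\tau<+\infty\}}=H_\tau\,\E[k_\tau\mid{\cal F}_\tau]I_{\{\tau<+\infty\}}=0$, and with the definition of ${\cal M}^{(2)}_{loc}(\mathbb G)$ from Theorem \ref{the3spaces} this yields $Hk\centerdot D\in{\cal M}^{(2)}_{loc}(\mathbb G)$. Finally, for the ``in particular'' statement I would again choose $H:=I_{\Lbrack 0,\sigma\Lbrack}$: dominating the $\mathbb F$-stopped process $(I_{\Lbrack 0,\sigma\Lbrack}k)^{T_n}$ by $k^{T_n}$ along the localizing sequence $(T_n)_n$ of $k$ shows $I_{\Lbrack 0,\sigma\Lbrack}k\in L^1_{loc}({\cal P}_{rog}(\mathbb F),P\otimes D)$; moreover $(I_{\Lbrack 0,\sigma\Lbrack})_\tau\,\E[k_\tau\mid{\cal F}_\tau]I_{\{\tau<+\infty\}}=0$; and $(k\centerdot D)^{\sigma-}=I_{\Lbrack 0,\sigma\Lbrack}\centerdot(k\centerdot D)=I_{\Lbrack 0,\sigma\Lbrack}k\centerdot D$, so the main part gives $(k\centerdot D)^{\sigma-}\in{\cal M}^{(2)}_{loc}(\mathbb G)$, completing the argument.
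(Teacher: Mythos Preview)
Your proposal is correct and spells out exactly the kind of routine verification the authors had in mind: the paper itself states that ``the proof of this proposition is obvious and will be omitted,'' so there is no proof to compare against beyond the implicit appeal to Theorem~\ref{NGmartingaleproperties}-(c) and the definition of ${\cal M}^{(2)}_{loc}(\mathbb G)$ in Theorem~\ref{the3spaces}, both of which you invoke correctly.
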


The proof of this proposition is obvious and will be omitted.\\

In \cite{jeulin80}, the author considers
 \begin{equation}\label{Jeulinspace}
 {\cal M}^{(3)}_{loc}(\mathbb G):=\{k\is D\ \big|\ \ k\in L^1_{loc}\left(\mbox {Prog}(\mathbb F), P\otimes D\right)\ \&\
 \ \E\left(k_{\tau}\ \big| {\cal F}_{\tau-}\right)I_{\{\tau<+\infty\}}=0\ P\mbox{-a.s.}\}.\end{equation}
This class contains ${\cal M}^{(2)}_{loc}(\mathbb G)$ of Theorem \ref{the3spaces}, while in general it fails to satisfy Proposition \ref{IntegOptionalProcess}. Hence, the elements of ${\cal M}^{(3)}_{loc}(\mathbb G)$ (Jeulin's space)  cannot be pure default martingales, and are not orthogonal to the pure default martingales of the first type.\\

 \noindent Given that we singled out two orthogonal types of pure default local martingales, one can ask naturally the following.
\begin{equation}\label{QPMM2}
 \mbox{How many types of orthogonal pure default martingales are there?}\end{equation}
The answer to this difficult question, as well as to all the unanswered previous questions, boils down to the complete decomposition of any
$\mathbb G$-martingale stopped at $\tau$ into the sum of orthogonal (local) martingales. This is the aim of the following subsection.

\subsection{The optional martingale representation theorems}\label{subsection2.2}

 This subsection elaborates our complete, rigorous, explicit and general optional representation theorem for any
 $\mathbb G$-martingale stopped at $\tau$. To this end, we start by decomposing a class of $\mathbb G$-martingales that is widely used in insurance
 (mortality/longevity) derivatives and credit risk derivatives. These martingales take the form of
  $(E[h_{\tau}\ \big|\ {\cal G}_t],\ t\geq 0),$ where $h$ represents the payoff process with adequate integrability and measurability condition(s).
  To state our optional martingale representation for these martingales, we recall an interesting result
  of \cite{aksamitchoullijeanblanc15}, and we give a technical lemma afterwards.

\begin{theorem} \cite[Theorem 3]{aksamitchoullijeanblanc15} For any $\mathbb F$-local martingale $M$, the following
\begin{equation} \label{processMhat}
\widehat{M} := M^\tau -{\widetilde{G}}^{-1} I_{\Lbrack 0,\tau\Rbrack} \is [M,m] +
 I_{\Lbrack 0,\tau\Rbrack} \is\Big(\Delta M_{\widetilde R} I_{\Lbrack\widetilde R,+\infty\Lbrack}\Big)^{p,\mathbb F},
\end{equation}
is a $\mathbb G$-local martingale. Here
\begin{equation} \label{timetildeR}
R:= \inf \{t\ge 0: G_t =0 \},\quad\mbox{and}\quad \widetilde R := R_{\{\widetilde{G}_R=0<G_{R-}\}} :=
R I_{\{\widetilde{G}_R=0<G_{R-}\}} + \infty I_{\Omega\setminus\{\widetilde{G}_R=0<G_{R-}\}}.
\end{equation}
\end{theorem}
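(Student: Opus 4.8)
The strategy is to reduce the assertion, by localization and a martingale criterion, to an identity between expectations computed in the small filtration $\mathbb F$.

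First I would localize along a sequence of $\mathbb F$-stopping times so that $M$ becomes a bounded $\mathbb F$-martingale. Since $m=G+D^{o,\mathbb F}$ is already bounded ($0\le G\le 1$, $0\le D^{o,\mathbb F}\le 1$, hence $0\le m\le 2$), $[M,m]$ then has integrable variation; recalling moreover that $\Rbrack 0,\tau\Rbrack\subseteq\{\widetilde G>0\}$ up to an evanescent set (an optional-section argument together with $\widetilde G_\tau>0$ $P$-a.s.\ on $\{\tau<+\infty\}$, cf.\ \cite{jeulin80}), all three terms in the definition (\ref{processMhat}) of $\widehat M$ are well defined, and $\widehat M$ is a bounded, right-continuous, $\mathbb G$-adapted process with $\widehat M=\widehat M^{\tau}$. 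It therefore suffices to show $\E[\widehat M_{\sigma^{\mathbb G}}]=\E[M_0]$ for every bounded $\mathbb G$-stopping time $\sigma^{\mathbb G}$. By (\ref{tauGtauF}) there is an $\mathbb F$-stopping time $\sigma$ with $\sigma^{\mathbb G}\wedge\tau=\sigma\wedge\tau$ $P$-a.s., and since $\widehat M$ does not move after $\tau$, $\widehat M_{\sigma^{\mathbb G}}=\widehat M_{\sigma^{\mathbb G}\wedge\tau}=\widehat M_{\sigma\wedge\tau}=\widehat M_{\sigma}$. Thus the whole statement reduces to
\[
\E\!\left[\widehat M_{\sigma}\right]=\E[M_0]\qquad\text{for every bounded }\mathbb F\text{-stopping time }\sigma .
\]

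Next I would compute the three contributions to $\E[\widehat M_{\sigma}]$ separately, in each case converting the $\mathbb G$-object to an $\mathbb F$-object by conditioning on ${\cal F}_{\sigma}$ and by the defining property of the dual $\mathbb F$-optional (resp.\ predictable) projection. For $\E[M_{\sigma\wedge\tau}]$ one splits $M_{\sigma\wedge\tau}=M_{\sigma}I_{\{\sigma\le\tau\}}+M_{\tau}I_{\{\tau<\sigma\}}$ and uses $\E[I_{\{\sigma\le\tau\}}\mid{\cal F}_{\sigma}]=\widetilde G_{\sigma}$ (valid since $\widetilde G={}^{o,\mathbb F}(I_{\Lbrack 0,\tau\Rbrack})$) together with the dual $\mathbb F$-optional projection of $D$. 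For the second term, $\widetilde G^{-1}I_{\Rbrack 0,\tau\Rbrack}\centerdot[M,m]$, the same optional duality applied to the $\mathbb F$-optional finite-variation process $[M,m]$ produces an $\mathbb F$-expectation built from $G\widetilde G^{-1}\centerdot[M,m]$; here the identities $m=G+D^{o,\mathbb F}$, $\widetilde G-G=\Delta D^{o,\mathbb F}$ and $\widetilde G=G_{-}+\Delta m$ let one rewrite the jump part of $[M,m]$ (see (\ref{GGtildem})). For the third term, $(\Delta M_{\widetilde R}I_{\Lbrack\widetilde R,+\infty\Lbrack})^{p,\mathbb F}$ is already $\mathbb F$-predictable, so integrating it against $I_{\Rbrack 0,\tau\Rbrack}$ yields, by the predictable analogue, an $\mathbb F$-expectation with $G_{-}$ inserted. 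Collecting the three and using that $M$, $m$ and $Mm-[M,m]$ are $\mathbb F$-(local) martingales, so that their values at $\sigma$ have computable $\mathbb F$-expectations, the required equality reduces to a pointwise identity among $G,G_{-},\widetilde G,D^{o,\mathbb F}$, the jump of $M$, and the time $\widetilde R$ of (\ref{timetildeR}).

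The main obstacle is the behaviour at the first zero $R$ of $G$. On $\{\widetilde G_{R}=0<G_{R-}\}$ one has $\Delta m_{R}=\widetilde G_{R}-G_{R-}=-G_{R-}\ne0$, and this produces a hidden predictable mass concentrated at $R$ which is \emph{not} balanced by the ``naive Jeulin--Yor'' candidate $M^{\tau}-\widetilde G^{-1}I_{\Rbrack 0,\tau\Rbrack}\centerdot[M,m]$ --- this is precisely why that candidate fails, in general, to be a $\mathbb G$-local martingale. Showing that the leftover is exactly $I_{\Rbrack 0,\tau\Rbrack}\centerdot(\Delta M_{\widetilde R}I_{\Lbrack\widetilde R,+\infty\Lbrack})^{p,\mathbb F}$, i.e.\ the $\mathbb F$-predictable (dual) projection of the jump of $M$ at $\widetilde R$, and checking that it carries the right projection mass, is the delicate step; the remainder is bookkeeping with the Az\'ema supermartingales $G,\widetilde G$ and the martingale $m$. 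As a sanity check and an alternative route: when $\tau$ avoids $\mathbb F$-stopping times or all $\mathbb F$-martingales are continuous one has $\widetilde G=G$, $D^{o,\mathbb F}=D^{p,\mathbb F}$ and $\widetilde R=+\infty$, so (\ref{processMhat}) collapses to the classical $M^{\tau}-G_{-}^{-1}I_{\Rbrack 0,\tau\Rbrack}\centerdot\langle M,m\rangle$; one could first prove this special case and then recover the correction by reducing, through a Yoeurp-type argument, to $\mathbb F$-local martingales of the form $I_{\Lbrack T,+\infty\Lbrack}-(I_{\Lbrack T,+\infty\Lbrack})^{p,\mathbb F}$ for a single $\mathbb F$-stopping time $T$, where the jump contribution is computed by hand. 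Either way the genuine difficulty stays confined to the atom at $R$.
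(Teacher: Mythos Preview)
The paper does not give its own proof of this theorem: it is quoted verbatim as \cite[Theorem 3]{aksamitetal15} and used as an input. There is therefore nothing in the present paper to compare your argument against.

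That said, your outline is the standard route and is close in spirit to how the result is proved in \cite{aksamitetal15}: localize, use the bounded-stopping-time martingale criterion, reduce $\mathbb G$-stopping times to $\mathbb F$-stopping times via (\ref{tauGtauF}), and then convert each of the three pieces of $\widehat M_{\sigma}$ into an $\mathbb F$-expectation by the duality of optional/predictable projections, with the correction at $\widetilde R$ absorbing exactly the defect of the naive Jeulin--Yor candidate. One technical point to tighten: after localizing $M$ to be bounded, $\widehat M$ is \emph{not} automatically bounded, because $\widetilde G^{-1}$ on $\Rbrack 0,\tau\Rbrack$ need not be; you need a further localization (for instance along $\sigma_n:=\inf\{t:\widetilde G_t<1/n\}$, using $\Rbrack 0,\tau\Rbrack\subset\{\widetilde G>0\}$) before the stopping-time criterion applies cleanly. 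With that adjustment the strategy is sound.
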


\begin{remark}\label{orthogonalityofMhat}
 Due to $\Delta m={\widetilde G}-G_{-}$, on $\Rbrack0,\tau\Rbrack$,  $\Delta \widehat{M}$ coincides with the
$\mathbb F$-optional process 
$$\widetilde K := (\Delta M)G_{-}{\widetilde{G}}^{-1}I_{\{\widetilde G>0\}}+
\ ^{p,\mathbb F}\Big(\Delta M_{\widetilde R}I_{\Lbrack\widetilde R\Rbrack}\Bigr).$$
Thus, we conclude that $\widehat M$ is orthogonal to both types (first type and second type) of pure default local martingales defined in the previous subsection. This fact follows directly from Proposition \ref{IntegOptionalProcess}.
\end{remark}

\begin{lemma}\label{LemmaTech} Let $h\in L^1({\cal O}(\mathbb F),P\otimes D)$. Then both $h$ and
$\left(M^h - h \is D^{o,\mathbb F}\right)G^{-1}I_{\Lbrack 0,R\Lbrack}$ belong to ${\cal I}^o_{\loc}(N^{\mathbb G},\mathbb G)$, where
\begin{equation} \label{processesMhandJ}
M^h_t := \ ^{o,\mathbb F}\Big(\int_0^\infty h_u  dD_u^{o,\mathbb F}\Big)_t=\E\Bigl[\int_0^\infty h_u  dD_u^{o,\mathbb F}\ \big|\ {\cal F}_t\Bigr].
	\end{equation}
\end{lemma}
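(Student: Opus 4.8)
\textbf{Proof proposal for Lemma \ref{LemmaTech}.}

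The plan is to verify, for each of the two processes, the defining integrability condition of $\mathcal{I}^o(N^{\mathbb G},\mathbb G)$ (in its localized form), namely that $|X| G \widetilde G^{-1} I_{\{\widetilde G>0\}}\centerdot D$ is $\mathbb G$-locally integrable. For the first process $h$ itself, I would start from the hypothesis $h\in L^1_{loc}({\cal O}(\mathbb F),P\otimes D)$, which by the definitions (\ref{L1(PandD}) and (\ref{L1(PandD)Local}) precisely says that $|h|\centerdot D$ is $\mathbb G$-locally integrable (after localizing along $\mathbb F$-stopping times; one passes from $\mathbb F$- to $\mathbb G$-localization using (\ref{tauGtauF})). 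Since $0\le G\widetilde G^{-1}I_{\{\widetilde G>0\}}\le 1$ (because $G\le\widetilde G$), domination gives $|h|G\widetilde G^{-1}I_{\{\widetilde G>0\}}\centerdot D \le |h|\centerdot D$, hence $h\in\mathcal{I}^o_{\loc}(N^{\mathbb G},\mathbb G)$ immediately by Theorem \ref{NGmartingaleproperties}(b).

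The substantive part is the second process $Y:=\left(M^h - h\centerdot D^{o,\mathbb F}\right)G^{-1}I_{\Lbrack 0,R\Lbrack}$. First I would record that $M^h$, defined in (\ref{processesMhandJ}), is a genuine uniformly integrable $\mathbb F$-martingale (it is the $\mathbb F$-optional projection of the integrable variable $\int_0^\infty h_u\,dD^{o,\mathbb F}_u$, whose integrability is exactly $h\in L^1_{loc}(\mathcal O(\mathbb F),P\otimes D)$ up to localization), and that $M^h - h\centerdot D^{o,\mathbb F}$ is the $\mathbb F$-martingale obtained by compensating the increasing (finite-variation) process $h\centerdot D^{o,\mathbb F}$. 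Then I would estimate $|Y|G\widetilde G^{-1}I_{\{\widetilde G>0\}}\centerdot D$. The key observation is that on $\Lbrack 0,\tau\Rbrack$ one has $G\le\widetilde G$, and on $\Lbrack 0,R\Lbrack$ one has $G>0$; moreover $G_->0$ on $\Rbrack 0,R\Lbrack$ by definition of $R$. So the factor $G^{-1}I_{\Lbrack 0,R\Lbrack}\cdot G\widetilde G^{-1}I_{\{\widetilde G>0\}}= \widetilde G^{-1}I_{\{\widetilde G>0\}}I_{\Lbrack 0,R\Lbrack}$, and one is reduced to showing that $\bigl|M^h - h\centerdot D^{o,\mathbb F}\bigr|\widetilde G^{-1}I_{\{\widetilde G>0\}}I_{\Lbrack 0,R\Lbrack}\centerdot D$ is $\mathbb G$-locally integrable. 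Here I would use that $\widetilde G_\tau>0$ $P$-a.s.\ on $\{\tau<\infty\}$ (cited in the paper from \cite[Lemma (4,3)]{jeulin80}), so the $I_{\{\widetilde G>0\}}$ does not restrict the support of $D$; and that $\widetilde G^{-1}I_{\{\widetilde G>0\}}I_{\Lbrack 0,\tau\Rbrack}\centerdot D^{o,\mathbb F}$ is a $\mathbb G$-martingale of integrable variation (it is $D$ minus $N^{\mathbb G}$, by (\ref{processNG})), so integrability against $D$ can be transferred to integrability against $D^{o,\mathbb F}$, i.e.\ to an expectation of the form $\E\bigl[\,|M^h_s - (h\centerdot D^{o,\mathbb F})_s|\,\widetilde G_s^{-1}\cdots\bigr]$ controlled by the $\mathbb F$-martingale property of $M^h - h\centerdot D^{o,\mathbb F}$ together with $\widetilde G_{-}$ appearing from the $\mathbb F$-predictable projection. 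Concretely, I expect the bound to come out as: $\E\bigl[|M^h - h\centerdot D^{o,\mathbb F}|\,G^{-1}I_{\Lbrack 0,R\Lbrack}\,G\widetilde G^{-1}I_{\{\widetilde G>0\}}\centerdot D_\infty\bigr] = \E\bigl[\,{}^{p,\mathbb F}\!\bigl(|M^h - h\centerdot D^{o,\mathbb F}|\bigr)\,G_{-}^{-1}I_{\Rbrack 0,R\Rbrack}\centerdot D^{p,\mathbb F}_\infty\bigr]$, and since $G_{-}^{-1}I_{\Rbrack 0,R\Rbrack}\centerdot D^{p,\mathbb F}$ has bounded (by $1$, essentially) increments summing appropriately, the whole quantity is finite, at least after an $\mathbb F$-localization that makes $M^h - h\centerdot D^{o,\mathbb F}$ bounded and $h\centerdot D^{o,\mathbb F}$ of integrable variation.

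The main obstacle I anticipate is handling the behavior near the time $R$ (where $G$ hits $0$): the factor $G^{-1}$ blows up there, and the truncation $I_{\Lbrack 0,R\Lbrack}$ together with $I_{\{\widetilde G>0\}}$ must be shown to tame it. The clean way is the identity just indicated, $G^{-1}I_{\Lbrack 0,R\Lbrack}\cdot G\widetilde G^{-1}I_{\{\widetilde G>0\}} = \widetilde G^{-1}I_{\{\widetilde G>0\}}I_{\Lbrack 0,R\Lbrack}$, which eliminates $G^{-1}$ entirely before any estimation, so that only $\widetilde G^{-1}$ — which is harmless on the support of $D$ by $\widetilde G_\tau>0$ — remains; the residual point $\{\widetilde G_R=0<G_{R-}\}$ (the support of $\widetilde R$ from (\ref{timetildeR})) is then a single predictable time, contributing a finite extra term. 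Once this simplification is in place, the rest is the standard passage between dual optional projection $D^{o,\mathbb F}$, predictable projection, and the $\mathbb F$-martingale property of $M^h - h\centerdot D^{o,\mathbb F}$, all under a suitable localizing sequence, and the two claimed memberships in $\mathcal{I}^o_{\loc}(N^{\mathbb G},\mathbb G)$ follow. \findemo
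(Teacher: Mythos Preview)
Your treatment of the first claim ($h\in\mathcal I^o_{\loc}(N^{\mathbb G},\mathbb G)$ via the domination $G/\widetilde G\le 1$) is correct and is exactly what the paper does.

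For the second claim there is a genuine gap. Your central assertion, that after the cancellation $G^{-1}I_{\Lbrack 0,R\Lbrack}\cdot G\widetilde G^{-1}=\widetilde G^{-1}I_{\Lbrack 0,R\Lbrack}$ the factor $\widetilde G^{-1}$ ``is harmless on the support of $D$ by $\widetilde G_\tau>0$'', is false: $\widetilde G_\tau>0$ a.s.\ does \emph{not} make $\widetilde G_\tau^{-1}$ integrable. Already in the simplest Cox model with unit intensity one has $\widetilde G_t=e^{-t}$ and $\E[\widetilde G_\tau^{-1}]=\E[e^{\tau}]=\int_0^\infty e^t e^{-t}\,dt=+\infty$. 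Consequently your fallback --- localize $M^h-h\centerdot D^{o,\mathbb F}$ to be bounded and handle $\widetilde G^{-1}$ separately --- cannot close, and the projection/compensator manipulation you sketch (passing to ${}^{p,\mathbb F}(\cdot)G_{-}^{-1}\centerdot D^{p,\mathbb F}$ and claiming bounded increments) does not produce a finite bound either.

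What you are missing is the probabilistic identity that ties $Y:=M^h-h\centerdot D^{o,\mathbb F}$ to the survival set: $Y_t=\E\bigl[h_\tau I_{\{\tau>t\}}\,\big|\,{\cal F}_t\bigr]$ (this is (\ref{Jprocess})--(\ref{processY}) in the paper). Because of it, the quotient $J^h:=Y/K$ (with $K=G$ on $\Lbrack 0,R\Lbrack$) is a genuine RCLL $\mathbb F$-adapted process, so one may localize with the $\mathbb F$-stopping times $\sigma_n:=\inf\{t:|J^h_t|>n\}$. On $\Lbrack 0,\sigma_n\Lbrack$ the integrand $|J^h|G\widetilde G^{-1}$ is bounded by $n$; the only delicate term is the possible jump at $\sigma_n$ when $\tau=\sigma_n$, and there the same identity gives $|J^h_{\sigma_n}|G_{\sigma_n}=|Y_{\sigma_n}|\le \E[|h_\tau|I_{\{\tau>\sigma_n\}}\mid{\cal F}_{\sigma_n}]$, whose expectation is bounded by $\E[|h_\tau|I_{\{\tau<\infty\}}]$. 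This is precisely the paper's argument; without exploiting the representation of $Y$ you cannot decouple the blow-up of $\widetilde G^{-1}$ from the decay of $|Y|$, and no purely ``abstract martingale'' estimate will succeed.
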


 The proof of this lemma is relegated to  {Appendix} \ref{SubsectionproofTheorem1}. The following is one of the principal results about our martingale representation. 
\begin{theorem} \label{TheoRepresentation} Let $h\in L^1({\cal O}(\mathbb F),P\otimes D)$, and $M^h$ be given in (\ref{processesMhandJ}).
 Then the following hold.\\
	{\rm{(a)}} The $\mathbb G$-martingale $H_t := \!\! \ ^{o,\mathbb G}(h_\tau)_t=\E[h_\tau | {\cal G}_t]$ admits the following representation.
	\begin{equation}\label{RepresentationofH}
	H - H_0 ={{I_{\Rbrack 0,\tau\Rbrack}}\over{G_-}} \is \widehat{M^h}
-{{M^h_{-} - (h \is D^{o,\mathbb F})_{-}}\over{G_{-}^2}} I_{\Rbrack 0,\tau\Rbrack} \is \widehat{m} +
{{hG-M^h +h \is D^{o,\mathbb F}}\over{G}}I_{\Rbrack 0,R\Lbrack} \is N^{\mathbb G}.
	\end{equation}
	{\rm{(b)}} If $h\in L\log L({\cal O}(\mathbb F),P\otimes D)$
(i.e.\ $\E[\vert h_{\tau}\vert\ln(\vert h_{\tau}\vert)I_{\{\tau<+\infty\}}]=
\E[\int_0^{\infty}\vert h_u\vert \log(\vert h_u\vert)dD_u ]<+\infty$), then both
$\left(hG-M^h +h \is D^{o,\mathbb F}\right)G^{-1}I_{\Rbrack 0,R\Lbrack} \is N^{\mathbb G}$
 and $ G_-^{-1}I_{\Rbrack 0,\tau\Rbrack} \is  \widehat{M^h} -
 \left(M^h_{-} - (h \is D^{o,\mathbb F})_{-}\right)G_{-}^{-2} I_{\Rbrack 0,\tau\Rbrack} \is \widehat{m}$
  are uniformly integrable \(\mathbb G\)-martingales.\\
	{\rm{(c)}} If $h\in L^2({\cal O}(\mathbb F),P\otimes D)$ , then the two \(\mathbb G\)-martingales
 $\left(hG-M^h +h \is D^{o,\mathbb F}\right)G^{-1}I_{\Rbrack 0,R\Lbrack} \is N^{\mathbb G}$ and
 $ G_-^{-1}I_{\Rbrack 0,\tau\Rbrack} \is  \widehat{M^h}
 -\left(M^h_{-} - (h \is D^{o,\mathbb F})_{-}\right)G_{-}^{-2}I_{\Rbrack 0,\tau\Rbrack} \is \widehat{m}$
 are square integrable  and orthogonal martingales. 
\end{theorem}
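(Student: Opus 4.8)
The plan is to establish the representation \eqref{RepresentationofH} first, and then read off parts (b) and (c) by examining the integrands against the integrability spaces already characterized in Theorem \ref{NGmartingaleproperties} and Lemma \ref{LemmaTech}. For part (a), I would start from the classical fact that $H_t = \E[h_\tau\mid\mathcal G_t]$ can be written, on $\Rbrack 0,\tau\Rbrack$, in terms of $\mathbb F$-adapted quantities via the Jeulin--Yor type formula: there is an $\mathbb F$-optional process whose value captures $H$ before $\tau$, built from $M^h$ (the $\mathbb F$-martingale in \eqref{processesMhandJ}), the running integral $h\centerdot D^{o,\mathbb F}$, and $G$. Concretely, I expect $H = H_0 + (\text{something})\centerdot(\text{jump term at }\tau) + (\text{something})\centerdot(\text{an }\mathbb F\text{-martingale part stopped and }\widehat{\ }\text{-transformed})$, and the natural candidate is $H_{t\wedge\tau^-} = (M^h_t - (h\centerdot D^{o,\mathbb F})_t)/G_t$ on $\Rbrack 0, R\Lbrack$, plus the contribution $h_\tau$ that is added exactly at $\tau$. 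Differentiating this ratio by the Itô formula (product/quotient rule), the $dM^h$ term, the $d(h\centerdot D^{o,\mathbb F})$ term, and the $dG$ term appear; one then substitutes the Doob--Meyer/Azéma decomposition $m = G + D^{o,\mathbb F}$ to express $dG$ via $dm$ and $dD^{o,\mathbb F}$, and passes from $\mathbb F$-martingales $M^h, m$ to their $\mathbb G$-counterparts $\widehat{M^h}, \widehat m$ using the transform \eqref{processMhat}. The jump that $D$ contributes at $\tau$ is precisely what produces the $N^{\mathbb G}$-integral, since $N^{\mathbb G} = D - \widetilde G^{-1}I_{\Lbrack 0,\tau\Rbrack}\centerdot D^{o,\mathbb F}$ isolates the ``pure'' jump of $D$ at $\tau$. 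Careful bookkeeping of the coefficients $G_-^{-1}$, $G_-^{-2}$, and $(hG - M^h + h\centerdot D^{o,\mathbb F})/G$ should reproduce \eqref{RepresentationofH} exactly; this coefficient-matching is where I expect the bulk of the (routine but delicate) computation to lie.

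For part (b), once \eqref{RepresentationofH} is in hand the claim is that the two bracketed $\mathbb G$-martingales on the right are genuinely uniformly integrable (not merely local) martingales under the stronger hypothesis $h\in L\log L$. For the $N^{\mathbb G}$-term, by Theorem \ref{NGmartingaleproperties}(b)--(c) it suffices to check that $K := (hG - M^h + h\centerdot D^{o,\mathbb F})G^{-1}I_{\Rbrack 0,R\Lbrack}$ lies in $\mathcal I^o(N^{\mathbb G},\mathbb G)$, i.e. $\E[|K|\,G\widetilde G^{-1}I_{\{\widetilde G>0\}}\centerdot D_\infty]<+\infty$; one splits $K$ into the piece $h$ (which is in $\mathcal I^o(N^{\mathbb G},\mathbb G)$ by Lemma \ref{LemmaTech} already under $L^1$, and a fortiori under $L\log L$) and the piece $-(M^h - h\centerdot D^{o,\mathbb F})G^{-1}I_{\Lbrack 0,R\Lbrack}$ (also in $\mathcal I^o(N^{\mathbb G},\mathbb G)$ by Lemma \ref{LemmaTech}), so the sum is too. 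The $L\log L$ strength is needed to upgrade the $\widehat{M^h}$ and $\widehat m$ integrals from local martingales to uniformly integrable martingales: here I would invoke a Davis/BDG-type estimate together with the classical fact (Jacod, Dellacherie--Meyer) that the $\mathbb G$-optional transform of an $\mathbb F$-martingale, when integrated against $G_-^{-1}I_{\Rbrack 0,\tau\Rbrack}$, has a maximal function controlled in $L^1$ precisely when the original data sits in $L\log L$ --- this is the standard Doob-type bound that turns the $1/G_-$ blow-up near $R$ into an $L\log L$ requirement.

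For part (c), under $h\in L^2(\mathcal O(\mathbb F),P\otimes D)$ I would show the two $\mathbb G$-martingales are square-integrable and orthogonal. Square-integrability of the $N^{\mathbb G}$-term follows from \eqref{equivalencve4K}: $K\centerdot N^{\mathbb G}\in\mathcal A(\mathbb G)$ is equivalent to $[K\centerdot N^{\mathbb G}]^{1/2}\in\mathcal A^+(\mathbb G)$, and from the computation of $[K\centerdot N^{\mathbb G}]$ in the proof of Theorem \ref{NGmartingaleproperties} one sees $\E[K\centerdot N^{\mathbb G}]_\infty^2 \lesssim \E[K^2(G^2\widetilde G^{-2} + \widetilde G^{-2}\text{-terms})\centerdot D_\infty]$, which is finite for the two $L^2$ pieces composing $K$. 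Square-integrability of the $\widehat{M^h},\widehat m$ combination is the $L^2$ analogue of the $L\log L$ estimate in (b), again via BDG. The orthogonality is the conceptual heart: the first martingale is a $N^{\mathbb G}$-integral, hence a pure mortality local martingale of the first type (Theorem \ref{NGmartingaleproperties}(c)), while the second is a $\widehat{(\cdot)}$-type martingale built from $\mathbb F$-local martingales $M^h$ and $m$, which by Remark \ref{orthogonalityofMhat} is orthogonal to both types of pure mortality local martingales. Thus $[\,\cdot\,,\cdot\,]$ of the two is a $\mathbb G$-local martingale, and being the bracket of two square-integrable martingales it is of integrable variation, hence a true $\mathbb G$-martingale, which is exactly orthogonality in the $L^2$ sense. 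The main obstacle I anticipate is not any single one of these verifications but the first one: getting \eqref{RepresentationofH} with the exact coefficients right, because it requires threading the Itô quotient rule through the interplay of $G$, $G_-$, $m$, $D^{o,\mathbb F}$, and the jump of $D$ at $\tau$, and then translating $M^h\mapsto\widehat{M^h}$, $m\mapsto\widehat m$ without sign or normalization errors in the $\widetilde G^{-1}$ and $G_-^{-2}$ factors.
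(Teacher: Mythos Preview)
Your plan for part (a) is essentially the paper's approach: write $H$ as $(h-J^h)\centerdot D + (J^h)^\tau$ with $J^h = (M^h - h\centerdot D^{o,\mathbb F})/G$ on $\Lbrack 0,R\Lbrack$, apply It\^o to the quotient $(J^h)^\tau = Y^\tau/K^\tau$, and collect the terms into $\widehat{M^h}$, $\widehat m$, and $N^{\mathbb G}$ integrals. The paper does exactly this via an auxiliary lemma (giving the dynamics of $1/K^\tau$ and the bracket $[L,1/K^\tau]$), so you have the right architecture; the only caution is that the cancellation producing the clean $N^{\mathbb G}$ coefficient requires a nontrivial identity $\Delta M^h_{\widetilde R} - J^h_{\widetilde R-}\Delta m_{\widetilde R}=0$ on $\{\widetilde R<\infty\}$, which you should not overlook.

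Your argument for (b), however, has the logic reversed. You claim that Lemma \ref{LemmaTech} puts both $h$ and $(M^h - h\centerdot D^{o,\mathbb F})G^{-1}I_{\Lbrack 0,R\Lbrack}$ in $\mathcal I^o(N^{\mathbb G},\mathbb G)$ under $L^1$, so the $N^{\mathbb G}$-integral is already a true martingale and $L\log L$ is needed only for the $\widehat{M^h}$, $\widehat m$ piece. But Lemma \ref{LemmaTech} only delivers $\mathcal I^o_{\loc}$, not $\mathcal I^o$; the piece $J^h G^{-1}$ involves $G^{-1}$ and is \emph{not} controlled in $\mathcal I^o$ under $L^1$ alone. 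The paper uses $L\log L$ precisely here: one writes $\E[\int |J^h|G\widetilde G^{-1}I_{\Lbrack 0,R\Lbrack}\,dD] = \E[\int K^{\mathbb G}_t\,dV^{\mathbb G}_t]$ with $K^{\mathbb G}_t = \E[|h_\tau|\mid\mathcal G_t]$ and $dV^{\mathbb G} = \widetilde G^{-1}I_{\Lbrack 0,\tau\Lbrack}\,dD^{o,\mathbb F}$, observes $\E[V^{\mathbb G}_\infty - V^{\mathbb G}_t\mid\mathcal G_t]\le 1$, and then an integration-by-parts plus Doob's $L\log L$ maximal inequality bounds $\E[\sup_t K^{\mathbb G}_t]$. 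Once the $N^{\mathbb G}$-integral is a true martingale, the $\widehat{M^h}$, $\widehat m$ combination is uniformly integrable \emph{by difference}, since $H$ itself is UI. Your proposed direct BDG/Davis attack on the $\widehat M$ piece would run into the $G_-^{-2}$ blow-up with no obvious control.

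The same issue recurs in (c). You propose separate direct $L^2$ estimates for the $N^{\mathbb G}$ piece and the $\widehat M$ piece. The paper instead localizes via $\Gamma_n = \{\min(G_-,\widetilde G)\ge 1/n,\ |\Delta M|\le n,\ \ldots\}$, computes $I_{\Gamma_n}\centerdot[H,H]$ as the sum of the two diagonal pieces plus a cross term, shows the cross term is a true martingale (Theorem \ref{NGmartingaleproperties}(c), since the $\widehat M$ jump process is $\mathbb F$-optional and bounded on $\Gamma_n$), and then passes to the limit with Fatou to get both diagonal pieces dominated by $\E[H,H]_\infty<\infty$. This sidesteps any need to bound $\E[\int G_-^{-2}\,d[\widehat M]]$ directly. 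Your orthogonality argument via Remark \ref{orthogonalityofMhat} is correct and matches the paper.
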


\noindent For the sake of easy exposition, we postpone the proof of the theorem to {Appendix}  \ref{SubsectionproofTheorem1}.
Theorem \ref{TheoRepresentation} states that the risk with terminal value $h_{\tau}$ for some $h\in L^1({\cal O}(\mathbb F),P\otimes D)$,
can be decomposed into three orthogonal risks: The ``pure" financial risk which is the first term in the RHS of (\ref{RepresentationofH}),
while the second term of the RHS represents the resulting risk from correlation between the market model and $\tau$. The last term in the RHS of
(\ref{RepresentationofH}) models the pure default risk of type one.\\

\noindent Below, we illustrate our  theorem on particular models for the pair $(\tau, \mathbb F)$.

\begin{corollary} \label{CorRepresentation} Let $h\in L^1({\cal O}(\mathbb F),P\otimes D)$,  and consider $\overline{N}^{\mathbb G}$, $N^{\mathbb G}$ and $M^h$ given by
 (\ref{processNGbar}), (\ref{processNG})  and  (\ref{processesMhandJ}) respectively.
Then the optional representation (\ref{RepresentationofH}) takes the following forms.\\	
{\rm {(a)}} If \(\tau\) is an \(\mathbb F\)-pseudo stopping time (i.e. $\E[M_{\tau}]=\E[M_0]$ for any bounded $\mathbb F$-martingale $M$),
 then
\[
H - H_0 = G_-^{-1}I_{\Rbrack 0,\tau\Rbrack} \is  M^h
+\Bigl[Gh-M^h+h \is D^{o,\mathbb F}\Bigr]G^{-1}I_{\Rbrack 0,R\Lbrack} \is N^{\mathbb  G}  .
\]
{\rm {(b)}} If \(\tau\) avoids all \(\mathbb F\)-stopping times, then
\begin{equation}\label{assersionb}
H - H_0 ={{I_{\Rbrack 0,\tau\Rbrack}}\over{ G_-}}\is\widehat{M^h}
-{{M^h_{-}-(h \is D^{o,\mathbb F})_{-}}\over{G_-^{2}}} I_{\Rbrack 0,\tau\Rbrack}  \is\widehat{m}
 + {{G_{-}\ ^{p,\mathbb F}(h)-M^h_{-}+\ ^{p,\mathbb F}(h) \is D^{p,\mathbb F}}\over{G_{-}}}I_{\{G_{-}>0\}}\is
  \overline{N}^{\mathbb G}  .
\end{equation}
{\rm {(c)}} If  all \(\mathbb F\)-martingales are continuous, then it holds that
\begin{equation}\label{assersionc}
H - H_0 = G_-^{-1}I_{\Rbrack 0,\tau\Rbrack}  \is  \overbar{M^h}
-{{M^h_{-}-(h \is D^{o,\mathbb F})_{-}}\over{G_-^{2}}}I_{\Rbrack 0,\tau\Rbrack}  \is\overline{m}
 + {{\ ^{p,\mathbb F}(h)G-M^h_{-}+\ ^{p,\mathbb F}(h) \is D^{p,\mathbb F}}\over{G}}I_{\Rbrack 0,R\Lbrack} \is\overline{N}^{\mathbb G}.
\end{equation}
Here, for any $\mathbb F$-local martingale $M$, $\overline{M}$ is defined by
\begin{equation}\label{Mbarprocess}
\overline{M} :=
 M^\tau - {G}^{-1}_{-}I_{\Rbrack 0,\tau\Rbrack}  \is\langle M,m\rangle^{\mathbb F}.
\end{equation}

\end{corollary}

\begin{proof}
1) Thanks to \cite[Theorem 1]{nikeghbaliyor05}, it holds that \(\tau\) is an \(\mathbb F\)-pseudo stopping time if and only if  \(m\equiv m_0\).
 This leads to $\widehat{m}\equiv m_0$ and \(\widetilde{G}=G_-\). Therefore, we get  \(\{\widetilde{G}=0<G_{-}\}=\Lbrack \widetilde R\Rbrack=\emptyset\),
\begin{eqnarray*}\label{MhatMbar}
\widehat M\equiv M^\tau\quad\mbox{for any}\quad M\in {\cal M}_{\loc}(\mathbb F),\quad\mbox{and}\quad I_{\Rbrack 0,\tau\Rbrack}  \is\widehat{m}\equiv 0.\end{eqnarray*}
Thus, the proof of assertion (a) follows from combining the latter fact with Theorem \ref{TheoRepresentation}. \\
2) Suppose that $\tau$ avoids $\mathbb F$-stopping times. Then $\tau < R$ $P$-a.s. (since $\tau\leq R\ P$-a.s.) and it is easy to check that
 \(\widetilde{G}=G\), $D^{o,\mathbb F}\equiv D^{p,\mathbb F}$ is continuous,
  $I_{\Lbrack 0,R\Lbrack} \is D\equiv I_{\Lbrack 0,R\Rbrack}\is D=D$, $N^{\mathbb G}={\overline{N}}^{\mathbb G}$, and
\[
\{\ ^{p,\mathbb F}h\not=h\ \mbox{or}\ G\not=G_{-}\ \mbox{or}\ M^h\not=M^h_{-}\} \cap \Lbrack \tau \Rbrack =\emptyset .
\]
Therefore, assertion (b) follows from combining all these remarks with Theorem \ref{TheoRepresentation}.\\
3) Suppose that all \(\mathbb F\)-martingales are continuous. Then $M^h$ and $m$ are continuous, and
\begin{eqnarray*}
 \widehat {M^h}=\overline{M^h},\qquad  \widehat m=\overbar m,\qquad \widetilde G=G_{-},\qquad  D^{o,\mathbb F}=
 D^{p,\mathbb F},\qquad \mbox{and}\qquad N^{\mathbb G}={\overline{N}}^{\mathbb G}.\end{eqnarray*}
 Furthermore, all $\mathbb F$-stopping times are predictable. As a result, $R$ is predictable and
  $G_{R-}=0$ on $\{R<+\infty\}$. This implies that $\Lbrack 0,\tau\Rbrack\subset \Lbrack 0,R\Lbrack$.
   Therefore, a combination of these remarks with
\begin{eqnarray*}
{{hG-M^h+h \is D^{o,\mathbb F}}\over{G}}I_{\Rbrack 0,\tau\Rbrack}  &=&
{{hG_{-}-M^h_{-}+(h \is D^{o,\mathbb F})_{-}}\over{G_{-}}}I_{\Rbrack 0,\tau\Rbrack}
- \Delta{{M^h-h \is D^{o,\mathbb F}}\over{G}}I_{\Rbrack 0,\tau\Rbrack}\\
&  =&
  {{hG_{-}-M^h_{-}+(h \is D^{o,\mathbb F})_{-}}\over{G_{-}}}I_{\Rbrack 0,\tau\Rbrack}  -
   {{hG_{-}-M^h_{-}+(h \is D^{o,\mathbb F})_{-}}\over{GG_{-}}} \Delta G I_{\Rbrack 0,\tau\Rbrack}  \\
& =& {{hG_{-}-M^h_{-}+(h \is D^{o,\mathbb F})_{-}}\over{G}} I_{\Rbrack 0,\tau\Rbrack},
\end{eqnarray*}
proves assertion (c) . This ends the proof of the corollary.
 \end{proof}

\noindent It is worth mentioning that the pseudo-stopping time model for $\tau$ covers several cases proposed and/or  treated in the literature.  In fact, the pseudo-stopping time model covers the immersion case (i.e. $M$ is a $\mathbb G$-local martingale for any $\mathbb F$-local martingale $M$) or equivalently the Cox' time given by $\tau:=\inf\{t\geq 0\ \big|\ S_t\geq E\}$ where $E$ being a random variable that is independent of ${\cal F}_{\infty}$, the case when $\tau$ is independent of ${\cal F}_{\infty}$
 (no correlation between the financial
   market and the random time), and the case when $\tau$ is an $\mathbb F$-stopping time (i.e.\  the case when the random time is fully observable through the public flow of information $\mathbb F$). For more details about pseudo-stopping times, we refer the reader to \cite{nikeghbali06,nikeghbaliyor05}.\\

 \noindent Corollary \ref{CorRepresentation} tells us that our representation (\ref{RepresentationofH})
 goes beyond the context of \cite{blanchetjeanblanc04}.
 Indeed, our assertions (b) and (c) above extend \cite{blanchetjeanblanc04} to the case where $h$ is $\mathbb F$-optional (the condition (\ref{assumption2}) is relaxed), 
 as is the case for some examples in \cite{gerbershiu13}, and $G$ might vanish on the one hand. On the other hand, by comparing the RHS terms of
 (\ref{assersionb}) and (\ref{assersionc}), we deduce that their third type of risk
 (the integrands with respect to $\overline{N}^{\mathbb G}$) differ tremendously,
 and they cannot be written in a universal form using ${\overline N}^{\mathbb G}$. This explains
 why the representation of \cite{blanchetjeanblanc04} might fail for general $\mathbb F$-optional $h$. To see how our results extend this latter paper,
  we consider $h\in L^1({\cal P}(\mathbb F),P\otimes D)$ and put
\begin{eqnarray*} \label{processesmhandJ}
m^h := \ ^{o,\mathbb F}\Big(\int_0^\infty h_u  dF_u \Big),\ \ \mbox{where}\ \ \ F:=1-G.\end{eqnarray*}
Then it is not difficult to deduce that $\widehat{M^h}$ defined in (\ref{processesMhandJ}) and $\widehat{m^h}$ are related by  $\widehat{M^h} = \widehat{m^h} + h \is \widehat{m}$. As a result, the decomposition (\ref{RepresentationofH}) takes the form of
$$
H - H_0 ={{I_{\Rbrack 0,\tau\Rbrack}}\over{G_-}} \is  \widehat{m^h}
+{{G_{-}h-m^h_{-} +(h \is F)_{-}}\over{G_-^{2}}} I_{\Rbrack 0,\tau\Rbrack} \is\widehat{m}
+ {{Gh-m^h+ h \is F }\over{G}}I_{\Rbrack 0,R\Lbrack} \is N^{\mathbb G}  .
$$
In particular when either $\tau$ avoids $\mathbb F$-stopping times or all $\mathbb F$-local martingales are continuous,  the representation (\ref{RepresentationofH}) becomes 
$$
H - H_0 = {{I_{\Rbrack 0,\tau\Rbrack}}\over{G_{-}}} \is\widehat{m^h}
+{{hG_{-}-m^h_{-} +(h \is F)_{-}}\over{G_-^2}} I_{\Rbrack 0,\tau\Rbrack} \is\widehat{m}
+{{Gh-m^h+h \is F}\over{G}}{ \is\overline{N}^{\mathbb G} .}
$$
Even this latter formula extends \cite{blanchetjeanblanc04} to the case where $G$ might vanish, as the condition $G>0$ is important in the proofs of this paper.\\

The  rest of this subsection focuses on decomposing an arbitrary $\mathbb G$-martingale stopped at $\tau$.
 To this end, we need the following intermediate result that is simple but important. It explains how the general case can
  always be reduced to the case of Theorem \ref{TheoRepresentation}.

\begin{proposition}\label{ConditionalExpectation}
The following assertions hold.\\
{\rm{(a)}} Let $X$ be a measurable process such that $X\geq 0\ \mu$-a.e.
 (recall that $\mu:=P\otimes D$) or $X$ belongs to $L^1({\cal F}\otimes{\cal B}(\mathbb R^+),P\otimes D)$. Then the following equalities hold $P$-a.s. on $\{\tau<+\infty\}$.
$$
\E_{\mu}\left[X\big|{\cal P}(\mathbb F)\right](\tau)=\E\left[X_{\tau}\big|{\cal F}_{\tau-}\right],
\ \ \E_{\mu}\left[X\big|{\cal O}(\mathbb F)\right](\tau)=\E\left[X_{\tau}\big|{\cal F}_{\tau}\right],
\ \E_{\mu}\left[X\big|{\rm{Prog}}(\mathbb F)\right](\tau)=\E\left[X_{\tau}\big|{\cal F}_{\tau+}\right]. $$
Here $\E_{\mu}[.|.]$ is the conditional expectation under the finite measure $\mu$.\\
{\rm{(b)}} If $k\in L^1\left({\rm{Prog}}(\mathbb F), P\otimes D\right)$, then 
there exists a unique ($\mu$-a.e.) $h\in  L^1\left({\cal O}(\mathbb F), P\otimes D\right)$ satisfying
\begin{eqnarray*}\label{Opt/Progressive}
\E\left[k_{\tau}\ \big|\ {\cal F}_{\tau}\right]=h_{\tau}\ \ \ \ P\text{-a.s.}\ \ \ \mbox{on}\ \ \{\tau<+\infty\}.\end{eqnarray*}
\end{proposition}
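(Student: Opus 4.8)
The plan is to prove assertion (a) first by a monotone class / functional argument, then deduce assertion (b) as a direct consequence combined with the measurable projection theorem. For part (a), I would start from the observation that, since $\mu := P\otimes D$ is a finite measure carried by the graph of $\tau$ (on $\{\tau<+\infty\}$), the conditional expectation $\E_\mu[X\mid{\cal P}(\mathbb F)]$ is characterised by the defining property $\E_\mu[X\,Y] = \E_\mu[\E_\mu[X\mid{\cal P}(\mathbb F)]\,Y]$ for all bounded ${\cal P}(\mathbb F)$-measurable $Y$, which in terms of $\tau$ reads $\E[X_\tau Y_\tau I_{\{\tau<+\infty\}}] = \E[(\E_\mu[X\mid{\cal P}(\mathbb F)])_\tau Y_\tau I_{\{\tau<+\infty\}}]$. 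I would then recall that ${\cal F}_{\tau-}$ is precisely the $\sigma$-field generated by $\{Y_\tau : Y \text{ is } \mathbb F\text{-predictable}\}$ (this is the definition given right before \eqref{L1(PandD}), or equivalently \cite[Lemma (4,3)]{jeulin80}), and likewise ${\cal F}_\tau$, ${\cal F}_{\tau+}$ are generated by $Y_\tau$ with $Y$ optional, resp.\ progressively measurable. Hence the map $Y\mapsto Y_\tau I_{\{\tau<+\infty\}}$ identifies (up to $\mu$-null sets) the bounded ${\cal P}(\mathbb F)$-measurable processes with bounded ${\cal F}_{\tau-}$-measurable random variables, and similarly for the optional/progressive cases. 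Putting the two displayed identities side by side shows that $(\E_\mu[X\mid{\cal P}(\mathbb F)])_\tau$ satisfies exactly the characterising property of $\E[X_\tau\mid{\cal F}_{\tau-}]$ on $\{\tau<+\infty\}$, whence they agree $P$-a.s.\ there; the optional and progressive statements are proved verbatim with ${\cal O}(\mathbb F),{\cal F}_\tau$ and ${\cal P}_{rog}(\mathbb F),{\cal F}_{\tau+}$ respectively.

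The one point requiring care — and the main obstacle — is the surjectivity/measurability direction of that identification: namely that \emph{every} bounded ${\cal F}_{\tau-}$-measurable random variable is of the form $Y_\tau$ for some bounded $\mathbb F$-predictable $Y$, $\mu$-a.e., and that this correspondence is compatible with a.e.\ limits so that a monotone-class argument closes. For indicators $I_A$ with $A = \{Y_\tau \in B\}$ this is immediate, and the class of such random variables is closed under the operations needed; the subtle part is that one must work modulo $\mu$-null sets (so that, e.g., two predictable $Y,Y'$ with $Y_\tau = Y'_\tau$ $P$-a.s.\ on $\{\tau<+\infty\}$ are identified), which is exactly why the statement is phrased "up to a $\mu$-negligible set". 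I would handle this by the standard functional monotone class theorem: the identities hold for $X = U\,V$ with $U$ bounded ${\cal F}$-measurable and $V$ bounded ${\cal P}(\mathbb F)$- (resp.\ optional, progressive) measurable by the characterising property, such products generate the relevant $\sigma$-field, and both sides are closed under bounded monotone limits, giving the result for all bounded $X$ and then, by truncation and monotone convergence, for $X\ge 0$ and for $X\in L^1(\mu)$.

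For assertion (b): given $k\in L^1({\cal P}_{rog}(\mathbb F),P\otimes D)$, apply part (a) with $X = k$ in the optional case to get $\E[k_\tau\mid{\cal F}_\tau] = \E_\mu[k\mid{\cal O}(\mathbb F)](\tau)$ $P$-a.s.\ on $\{\tau<+\infty\}$. Set $h := \E_\mu[k\mid{\cal O}(\mathbb F)]$, which is an $\mathbb F$-optional process (a version of the $\mu$-conditional expectation onto ${\cal O}(\mathbb F)$ can be chosen ${\cal O}(\mathbb F)$-measurable) and lies in $L^1({\cal O}(\mathbb F),P\otimes D)$ by the conditional Jensen inequality; then $h_\tau = \E[k_\tau\mid{\cal F}_\tau]$ on $\{\tau<+\infty\}$, which is \eqref{Opt/Progressive}. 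Uniqueness up to a $\mu$-null set is immediate: if $h,h'$ are two $\mathbb F$-optional processes with $h_\tau = h'_\tau$ $P$-a.s.\ on $\{\tau<+\infty\}$, then $\E_\mu[|h-h'|] = \E[|h_\tau - h'_\tau| I_{\{\tau<+\infty\}}] = 0$, so $h = h'$ $\mu$-a.e. This completes the proof.
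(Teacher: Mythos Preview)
Your proposal is correct and takes essentially the same approach as the paper: the paper declares assertion (a) ``obvious'' and omits its proof entirely, while for (b) it writes exactly what you do---put $h := \E_\mu[k\mid {\cal O}(\mathbb F)]$ and invoke (a). Your monotone-class argument for (a) simply fills in the details the authors chose to suppress.
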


\begin{proof} The proof of assertion (a) is obvious and will be omitted. Assertion (b) follows immediately from assertion (a)
by putting $h=\E_{\mu}[k|{\cal O}(\mathbb F)]$, and the proof of the proposition is completed.
\end{proof}

\noindent The following states our full optional martingale representation result.

\begin{theorem}\label{TheoRepresentation2}
For any $\mathbb G$-martingale, $M^{\mathbb G}$, there exist two processes $h\in L^1\left({\cal O}(\mathbb F),P\otimes D\right)$
 and $k\in L^1\left({\rm{Prog}}(\mathbb F),P\otimes D\right)$, such that $\E[k_{\tau}|{\cal F}_{\tau}]=0$,
 $k_{\tau}+h_{\tau}=M^{\mathbb G}_{\tau}$ $P$-a.s.\ on
  $\{\tau<+\infty\}$, and
\begin{equation}\label{Representation22}
\left( M^{\mathbb G}\right)^{\tau}-M^{\mathbb G}_0={{I_{\Rbrack 0,\tau\Rbrack}}\over{G_-}}  \is\widehat{M^h}
- {{M^h_{-}-(h \is D^{o,\mathbb F})_{-}}\over{G_-^{2}}} I_{\Rbrack 0,\tau\Rbrack} \is \widehat{m}
+ {{hG-M^h+h \is D^{o,\mathbb F}}\over{G}}I_{\Rbrack 0,R\Lbrack}\is N^{\mathbb  G} +k\is D.
\end{equation}
Here $M^h$ and $m$ are defined in (\ref{processesMhandJ}) and (\ref{GGtildem}) respectively.
\end{theorem}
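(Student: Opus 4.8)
\textbf{Proof plan for Theorem \ref{TheoRepresentation2}.}

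The plan is to reduce the general $\mathbb G$-martingale stopped at $\tau$ to the class already treated in Theorem \ref{TheoRepresentation} by isolating its ``pure second type'' component. First I would recall the elementary fact that a $\mathbb G$-martingale $M^{\mathbb G}$ stopped at $\tau$ is determined, up to its value $M^{\mathbb G}_0$, by the random variable $M^{\mathbb G}_{\tau} I_{\{\tau<+\infty\}}$, and that this random variable is ${\cal F}_{\tau}\vee\sigma(\tau)$-measurable — more precisely it is of the form $k_\tau$ for some ${\cal P}_{rog}(\mathbb F)$-measurable process $k$, by the progressive-measurability description of $\mathbb G$-optional processes stopped at $\tau$ (the classical result that any $\mathbb G$-optional process coincides on $\Rbrack 0,\tau\Rbrack$ with an $\mathbb F$-optional one, combined with the structure of ${\cal G}_\tau$). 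The integrability $k\in L^1\left({\cal P}_{rog}(\mathbb F),P\otimes D\right)$ is inherited from $M^{\mathbb G}\in{\cal M}(\mathbb G)$ via $\E[|M^{\mathbb G}_\tau| I_{\{\tau<+\infty\}}]=\E_{P\otimes D}[|k|]<+\infty$.

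Next I would split $k$ orthogonally. Using Proposition \ref{ConditionalExpectation}-(b), choose the $\mathbb F$-optional process $h:=\E_\mu[k\,|\,{\cal O}(\mathbb F)]$, so that $h_\tau=\E[k_\tau\,|\,{\cal F}_\tau]$ $P$-a.s.\ on $\{\tau<+\infty\}$, and set $k':=k-h$ (viewing $h$ as a ${\cal P}_{rog}(\mathbb F)$-measurable process). Then $\E[k'_\tau\,|\,{\cal F}_\tau]=0$ $P$-a.s.\ on $\{\tau<+\infty\}$ and $k'_\tau+h_\tau=M^{\mathbb G}_\tau$ there; relabel $k'$ as $k$. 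One checks $h\in L^1({\cal O}(\mathbb F),P\otimes D)$ and $k\in L^1({\cal P}_{rog}(\mathbb F),P\otimes D)$ (the first because $|h_\tau|\le\E[|M^{\mathbb G}_\tau|\,|\,{\cal F}_\tau]$ by Jensen and $M^h$ absorbs this; the second by difference). By Theorem \ref{the3spaces}-(a) the process $k\centerdot D$ is a (pure mortality) $\mathbb G$-local martingale of the second type, and by construction $\E[(k\centerdot D)_\tau]$-type considerations plus the $L^1$ bound make it a genuine $\mathbb G$-martingale stopped at $\tau$.

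Then I would apply Theorem \ref{TheoRepresentation}-(a) to $h$: the $\mathbb G$-martingale $H_t:=\E[h_\tau\,|\,{\cal G}_t]$ admits representation (\ref{RepresentationofH}), which is exactly the first three terms of (\ref{Representation22}). Finally I would argue that $\left(M^{\mathbb G}\right)^\tau - M^{\mathbb G}_0$ and $(H-H_0)+k\centerdot D$ are both $\mathbb G$-martingales stopped at $\tau$ with the same terminal value at $\tau$ on $\{\tau<+\infty\}$, namely $h_\tau+k_\tau=M^{\mathbb G}_\tau$, and both vanish at $0$ — here I use the uniqueness statement (\ref{tauGtauF})-type reduction, or more directly the fact that a $\mathbb G$-martingale stopped at $\tau$ equals $\E[\,\cdot_\tau\,|\,{\cal G}_t]$ minus its initial value — hence they are indistinguishable, giving (\ref{Representation22}).

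\textbf{The main obstacle.} The delicate point is the very first step: showing that an \emph{arbitrary} $\mathbb G$-martingale stopped at $\tau$ is of the form $k_\tau$ (at $\tau$) for a $P\otimes D$-integrable progressively measurable $k$, \emph{and} that adding the second-type term $k\centerdot D$ back correctly reconstitutes the whole martingale rather than just its jump at $\tau$. One must be careful that no ``continuous-in-$\mathbb G$, concentrated-before-$\tau$'' mass is lost — i.e.\ that the $\mathbb F$-reduction of the $\mathbb G$-martingale plus the genuinely $\mathbb G$-new randomness (captured through $\widehat{M^h}$, $\widehat m$, $N^{\mathbb G}$ and $k\centerdot D$) exhausts all of ${\cal M}(\mathbb G)^\tau$. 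This is where Lemma \ref{LemmaTech}, Remark \ref{orthogonalityofMhat} and the orthogonality in Theorem \ref{the3spaces}-(b) do the real work, ensuring the four pieces are mutually orthogonal and that the decomposition is the honest Galtchouk–Kunita–Watanabe-type splitting; the integrability bookkeeping ($L^1$ versus $L^1_{loc}$, and the passage between $h$ as ${\cal O}(\mathbb F)$- and as ${\cal P}_{rog}(\mathbb F)$-measurable) is routine but must be tracked.
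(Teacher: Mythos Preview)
Your proposal is correct and follows essentially the same route as the paper's proof: represent $M^{\mathbb G}_\tau$ by a progressive $k^{(1)}$, project onto ${\cal O}(\mathbb F)$ via Proposition \ref{ConditionalExpectation}-(b) to get $h$, set $k:=k^{(1)}-h$, and apply Theorem \ref{TheoRepresentation} to $\E[h_\tau\,|\,{\cal G}_t]$. The only cosmetic difference is in the final reassembly: the paper computes directly
\[
M^{\mathbb G}_{t\wedge\tau}=k^{(1)}_\tau I_{\{\tau\le t\}}+\E[k^{(1)}_\tau I_{\{\tau>t\}}\,|\,{\cal G}_t]=k\centerdot D_t+\E[h_\tau\,|\,{\cal G}_t],
\]
using the inclusion ${\cal G}_t\cap\{\tau>t\}\subset{\cal F}_\tau$ to replace $k^{(1)}_\tau$ by $h_\tau$ inside the conditional expectation on $\{\tau>t\}$, whereas you invoke the uniqueness of a $\mathbb G$-martingale stopped at $\tau$ with given terminal value---these are the same argument, and your ``main obstacle'' dissolves once that inclusion is used (so you do not need the orthogonality machinery of Theorem \ref{the3spaces} or Remark \ref{orthogonalityofMhat} here at all).
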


\begin{proof} Let $M^{\mathbb G}$ be a $\mathbb G$-martingale. Then, on the one hand, there exists (unique up to
 $P\otimes D$-a.e.) $k^{(1)}\in L^1\left({\rm{Prog}}(\mathbb F),P\otimes D\right)$ such that $M^{\mathbb G}_{\tau}=k^{(1)}_{\tau}$ \(P\)-a.s. on $\{\tau<+\infty\}$, see  \cite[Lemma B.1]{aksamit2017} for details about this fact. Thanks to Proposition \ref{ConditionalExpectation}--(b), there exists $h\in L^1\left({\cal O}(\mathbb F),P\otimes D\right)$ such that $\E[k^{(1)}_{\tau}|{\cal F}_{\tau}]=h_{\tau}$ \(P\)-a.s. on $\{\tau<+\infty\}$. On the other hand, remark that ${\cal G}_t\cap\left(\tau>t\right)\subset{\cal F}_{\tau}$ and put $k:=k^{(1)}-h$. Then we conclude that
$k\in L^1\left({\rm{Prog}}(\mathbb F),P\otimes D\right)$ and satisfies $\E[k_{\tau}|{\cal F}_{\tau}]=0$ \(P\)-a.s. on $\{\tau<+\infty\}$. This implies that $E(k_{\tau}\big|{\cal G}_t)I_{\{\tau>t\}}= E(E(k_{\tau}\big|{\cal F}_{\tau})\big|{\cal G}_t)I_{\{\tau>t\}}=0$, and 
\begin{eqnarray*}
	M^{\mathbb G}_{t\wedge\tau}&=&\E\big[M^{\mathbb G}_{\tau}\ \big|\ {\cal G}_t\big]=\E\big[k^{(1)}_{\tau}|{\cal G}_t\big]=\E\big[k_{\tau}|{\cal G}_t\big]+\E\big[h_{\tau}|{\cal G}_t\big]\\
&=&k_{\tau}I_{\Lbrack \tau,+\infty\Lbrack}(t)+\E\left[k_{\tau}I_{\{\tau>t\}}\big|{\cal G}_t\right]+\E\big[h_{\tau}|{\cal G}_t\big]=k\is D_t+\E\left[h_{\tau}\big|{\cal G}_t\right].\end{eqnarray*}
Hence, a direct application of Theorem \ref{TheoRepresentation} to $\E\left[h_{\tau}\big|{\cal G}_t\right]$,
 the decomposition (\ref{Representation22}) follows immediately, and the proof of theorem is completed.\end{proof}

Theorem \ref{TheoRepresentation2} can be slightly reformulated as follows.

\begin{theorem}\label{theo4MartingaleDecomposGeneral} Let  $M^{\mathbb G}$ be a $\mathbb G$-martingale and $R$ be given in (\ref{timetildeR}).
 Then there exists a unique triplet $(M^{\mathbb F},\varphi^{(o)}, \varphi^{(pr)}) $ that belongs to $ {\cal M}_{0,loc}(\mathbb F)\times {\cal I}^o_{loc}\left(N^{\mathbb G},\mathbb G\right)\times  L^1_{loc}\left(\widetilde\Omega, {\rm{Prog}}(\mathbb F), P\otimes D\right)$  and satisfies  
 \begin{eqnarray}\label{Condition1}
M^{\mathbb F}=(M^{\mathbb F})^R,\quad  \Delta M^{\mathbb F}I_{\{\widetilde G=0\}}=0,\quad  \varphi^{(o)}=\varphi^{(o)}I_{\Lbrack 0,R\Lbrack},\quad \E\left[\varphi^{(pr)}_{\tau}\ \big|\ {\cal F}_{\tau}\right]I_{\{\tau<+\infty\}}=0,\quad  P\mbox{-a.s.},
\end{eqnarray}
and
\begin{equation}\label{MartingaleDecomposGeneral}
\left(M^{\mathbb G}\right)^{\tau}=M^{\mathbb G}_0+G_{-}^{-2}I_{\Rbrack 0,\tau\Rbrack}\is\widehat{ M^{\mathbb F}}
+\varphi^{(o)}\is N^{\mathbb G}+\varphi^{(pr)}\is D.\end{equation}
\end{theorem}

\begin{proof} It is clear that the existence of the triplet $\left(M^{\mathbb F}, \varphi^{(o)}, \varphi^{(pr)}\right)$, for which
the decomposition (\ref{MartingaleDecomposGeneral}) holds, follows immediately from Theorem \ref{TheoRepresentation2}
  by putting $M^{\mathbb F}:=G_{-}\is M^h-\left(M^h_{-}-(h \is D^{o,\mathbb F})_{-}\right)\is m$. Therefore, the proof of the existence follows immediately as soon as we prove that this $M^{\mathbb F}$ satisfies the first two equalities of (\ref{Condition1}).  To this end, we remark that $I_{\Rbrack R,+\infty\Lbrack}\is G=I_{\Rbrack R,+\infty\Lbrack}\is D^{o,\mathbb F}=0$, due to $\E[I_{\Rbrack R,+\infty\Lbrack}\is D^{o,\mathbb F}_{\infty}]= P(R<\tau<+\infty)=0$. This implies that $I_{\Rbrack R,+\infty\Lbrack}\is m=0$ on the one hand. On the other hand, thanks to (\ref{processesMhandJ}), it is clear that for any $\mathbb F$-stopping time $\sigma$
  \begin{equation}\label{MhConstant}
   M^h_{\sigma}=(h\is D^{o,\mathbb F})_{\sigma}+E(h_{\tau}I_{\{\sigma<\tau<+\infty\}}\big|\ {\cal F}_{\sigma})=(h\is D^{o,\mathbb F})_R\quad P\mbox{-a.s.}\quad \mbox{on}\quad\{\sigma\geq R\}.\end{equation}
   As a result, we also get  $I_{\Rbrack R,+\infty\Lbrack}\is M^h=0$, and deduce that $M^{\mathbb F}=(M^{\mathbb F})^R$. Furthermore, we calculate 
\begin{eqnarray*}
  \Delta M^{\mathbb F}I_{\{\widetilde G=0\}}&=&\Bigl[G_{-}\Delta M^h-(M^h_{-}-(h \is D^{o,\mathbb F})_{-})\Delta m\Bigr]I_{\{\widetilde G=0\}}\\
  &=&G_{-}\Bigl[M^h-(h \is D^{o,\mathbb F})_{-}\Bigr]I_{\{\widetilde G=0\}}= G_{-}h (\widetilde G-G)I_{\{\widetilde G=0\}}=0.\end{eqnarray*}
  
  The second equality is due to $\Delta m=\widetilde G-G_{-}$, while the third equality follows from combining (\ref{MhConstant}) and $\{\widetilde G=0\}\subset \{G=0\}=\Lbrack R,+\infty\Lbrack$. This proves the existence of $\left(M^{\mathbb F}, \varphi^{(o)}, \varphi^{(pr)}\right)$ that satisfies both (\ref{Condition1}) and (\ref {MartingaleDecomposGeneral}). Thus, the remaining part of this proof focuses on the uniqueness of this triplet. To this end, we suppose the existence of such triplet  $\left(M^{\mathbb F}, \varphi^{(o)}, \varphi^{(pr)}\right)$ satisfying (\ref{Condition1}) and 
\begin{equation}\label{UniqunessEquation}
0=G_{-}^{-2}I_{\Rbrack 0,\tau\Rbrack}\is\widehat{ M^{\mathbb F}}
+\varphi^{(o)}\is N^{\mathbb G}+\varphi^{(pr)}\is D.\end{equation}
For $n\geq 1$, we put
$$\Gamma_n:=\left\{{\widetilde G}^{-1}+\vert\Delta M^{\mathbb F}\vert+\vert \varphi^{(o)}\vert
+\vert \Delta\left(\Delta M^{\mathbb F}_{\widetilde R}I_{\Lbrack\widetilde{R},+\infty\Lbrack}\right)^{p,\mathbb F}\vert\leq n\right\},$$
and by utilizing (\ref{UniqunessEquation}), we conclude that 
\begin{eqnarray*}
I_{\Gamma_n}\is [\varphi^{(pr)}\is D,\varphi^{(pr)}\is D]=
-I_{\Gamma_n}\Bigl({{\Delta\widehat{ M^{\mathbb F}}}\over{G_{-}^2}}+\varphi^{(o)}\Delta N^{\mathbb G}\Bigr)\varphi^{(pr)}\is D
= -I_{\Gamma_n}\Bigl({{\Delta M^{\mathbb F}}\over{G_{-}\widetilde G}}+
{{G}\over{\widetilde G}}\varphi^{(o)}\Bigr)\varphi^{(pr)}\is D\end{eqnarray*}
is a local $\mathbb G$-martingale.Thus, $I_{\Gamma_n}\is [\varphi^{(pr)}\is D,\varphi^{(pr)}\is D]$
 is a null process since it is nondecreasing. By combining this with the fact that $\Gamma_n\cap \Lbrack0,\tau\Rbrack$
increases to $\Lbrack0,\tau\Rbrack$,  we deduce that $[\varphi^{(pr)}\is D,\varphi^{(pr)}\is D]\equiv 0$
 or equivalently $\varphi^{(pr)}\equiv 0$ $P\otimes D$-a.e.. 
Similarly, we derive
\begin{equation*}
I_{\Gamma_n}\is [\varphi^{(o)}\is N^{\mathbb G}, \varphi^{(o)}\is N^{\mathbb G}] =
-I_{\Gamma_n}\Bigl({{\Delta\widehat{ M^{\mathbb F}}}\over{G_{-}^2}}\Bigr)\varphi^{(o)}\is N^{\mathbb G}=-
 I_{\Gamma_n}\Bigl({{\Delta M^{\mathbb F}}\over{G_{-}\widetilde G}}\Bigr)\varphi^{(o)}\is N^{\mathbb G},
\end{equation*}
which is a local $\mathbb G$-martingale due to Theorem \ref{theorem2.13}. Thus, again due to
$\cup_n \Gamma_n\cap \Rbrack0,\tau\Rbrack=\Rbrack0,\tau\Rbrack$, we conclude that the martingale $\varphi^{(o)}\is N^{\mathbb G}$ is null,
 or equivalently $\varphi^{(o)}\equiv 0$ $P\otimes D$-a.e. To end the proof of the uniqueness, we will prove that $\widehat M\equiv 0$ implies that $M\equiv 0$ for any $M\in{\cal M}_{0,loc}(\mathbb F)$ satisfying the first and second conditions of (\ref{Condition1}). Indeed, consider such $M$, and remark that in this case we have 
 \begin{eqnarray*}\label{Mhat=0}
 \widehat M=M^{\tau}-\widetilde G^{-1}I_{\Rbrack 0,\tau\Rbrack}\is[m,M]=0.\end{eqnarray*}
 Thus, we get $G_{-}{\widetilde G}^{-1}I_{\Rbrack 0,\tau\Rbrack}\is[M,M]=[M,\widehat M]\equiv 0$ which implies $0=I_{\{\widetilde G>0\}}\is [M,M] \geq  [M,M]^{\widetilde R-}$. Hence, since $\widetilde R\geq R$, $\Delta M I_{\{\widetilde G=0\}}=0$ and $M=M^R$, conclude that $$[M,M]=[M,M]^{\widetilde R}=[M,M]^{\widetilde R-}+\Delta M I_{\Lbrack\widetilde R\Rbrack}=[M,M]^{\widetilde R-}+\Delta M I_{\{\widetilde G=0<G_{-}\}}=0.$$This proves that $M\equiv 0$, and the proof of the theorem is complete.
\end{proof}

 The representation (\ref{MartingaleDecomposGeneral}) was derived in \cite{azemaetal93}, 
 for the Brownian setting and when $\tau$ is an honest time avoiding $\mathbb F$-stopping times. These two features are vital in their proof. Then \cite{blanchetjeanblanc04} extended \cite{azemaetal93} to the case where either all $\mathbb F$-martingales are continuous or $\tau$ avoids $\mathbb F$-stopping times,
  and for a specific family of $\mathbb G$-martingales only. Unfortunately, these assumptions on $(\tau,\mathbb F)$ fail for many popular models in finance and/or insurance. Another attempt for  (\ref{MartingaleDecomposGeneral}) was considered in \cite[Chapitre V,Th\'eor\`eme 5.12]{jeulin80}, where it is proved that the space of square integrable $\mathbb G$-martingales is
 generated by $Y$ given by 
\begin{eqnarray*}\label{Yprocess}
Y=M^{\tau}-G_{-} ^{-1}I_{\Rbrack 0,\tau\Rbrack}\is\langle M,m\rangle^{\mathbb F}+H\is \overline{N}^{\mathbb G}+k\is D+
(L-L^{\tau})+
(1-G_{-})^{-1}I_{\Rbrack\tau,+\infty\Rbrack}\is\langle L,m\rangle^{\mathbb F}.
\end{eqnarray*}
Here $M$ and $L$ are two $\mathbb F$-local martingales, $H\in L^2({\cal P}(\mathbb F), P\otimes D)$ and
$k\in L^2({\rm{Prog}}(\mathbb F), P\otimes D)$ satisfying $\E\left[k_{\tau}\ \big|\ {\cal F}_{\tau-}\right]I_{\{\tau<+\infty\}}=
0,\ P\mbox{-a.s.}$. This result is far less precise than our optional representation. Furthermore, in contrast to our result, the first two local martingale components  in $Y$ are not orthogonal in general. It is worth mentioning that the orthogonality feature  among risks is highly important for risk management, as one can simply deal with each risk individually in that case.\\

 The following illustrates Theorem \ref{TheoRepresentation2} for the popular case when $\mathbb F$ is the complete and right continuous natural filtration of a Brownian motion and a Poisson process.
 \begin{corollary}
Suppose $\mathbb F$ is the augmented filtration of the filtration generated by $(W,p)$, where $W$ is a standard  Brownian motion and $p$ is the Poisson process with intensity one.   Put $N_t^{\mathbb F}:=p_t-t$, and consider a $\mathbb G$-martingale $M^{\mathbb G}$. Then there exists a unique tuple $(\phi,\psi,\varphi^{(o)},\varphi^{(pr)})$ that belongs to the set $L^1_{loc}(W,\mathbb F)\times L^1_{loc}(N^{\mathbb F},\mathbb F)\times {\cal I}^o_{loc}\left(N^{\mathbb G},\mathbb G\right)\times
  L^1_{loc}\left(\widetilde\Omega, {\rm{Prog}}(\mathbb F), P\otimes D\right)$  and satisfies
  $$\left(M^{\mathbb G}\right)^{\tau}=M^{\mathbb G}_0+\phi G_{-}^{-2}I_{\Rbrack0,\tau\Rbrack}\is\widehat{W}+\psi G_{-}^{-2}I_{\Rbrack0,\tau\Rbrack}\is\widehat{N^{\mathbb F}}+\varphi^{(o)}\is N^{\mathbb G}+\varphi^{(pr)}\is D.$$
 \end{corollary}
 
The proof of this corollary follows immediately from combining Theorem \ref{TheoRepresentation2}, and the fact that any $\mathbb F$-local martingale, $M$, there exists a unique pair
 $(\varphi_1,\varphi_2)\in L^1_{loc}(W,\mathbb F)\times L^1_{loc}(N^{\mathbb F},\mathbb F)$ such that $M=M_0+\varphi_1\is W+\varphi_2\is N^{\mathbb F}.$\\

We end this section by showing how Theorem \ref{theo4MartingaleDecomposGeneral} (or Theorem \ref{TheoRepresentation2}) allows us to answer (\ref{QPMM2}).

\begin{corollary}\label{FullsetofPMM} Let $N$ be a $\mathbb G$-local martingale. Then $N$ is a pure default local martingale if and only if 
 there exists a unique pair $(\xi^{(o)},\xi^{(pr)})$ that belongs to ${\cal I}^o_{loc}\left(N^{\mathbb G},\mathbb G\right)\times
  L^1_{loc}\left(\widetilde\Omega, {\rm{Prog}}(\mathbb F), P\otimes D\right) $ satisfying $\E(\xi^{(pr)}_{\tau}|{\cal F}_{\tau})=0$ $P$-a.s. on $\{\tau<+\infty\}$ and 
  \begin{eqnarray*}\label{PMMrepresentation}
  N=N_0+\xi^{(o)}\is N^{\mathbb G}+\xi^{(pr)}\is D.
  \end{eqnarray*}
  As a result, there are only two orthogonal types of  pure default (local) martingales.
\end{corollary}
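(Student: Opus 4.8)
The plan is to deduce Corollary \ref{FullsetofPMM} directly from Theorem \ref{theo4MartingaleDecomposGeneral}. First I would prove the ``if'' direction: suppose $N=N_0+\xi^{(o)}\centerdot N^{\mathbb G}+\xi^{(pr)}\centerdot D$ with $\xi^{(o)}\in{\cal I}^o_{loc}(N^{\mathbb G},\mathbb G)$ and $\xi^{(pr)}\in L^1_{loc}({\cal P}_{rog}(\mathbb F),P\otimes D)$ satisfying $E(\xi^{(pr)}_{\tau}|{\cal F}_{\tau})=0$ $P$-a.s.\ on $\{\tau<+\infty\}$. By Theorem \ref{NGmartingaleproperties}-(c) the term $\xi^{(o)}\centerdot N^{\mathbb G}$ is a pure mortality local martingale, and by Theorem \ref{the3spaces}-(a) the term $\xi^{(pr)}\centerdot D$ belongs to ${\cal M}^{(2)}_{loc}(\mathbb G)$, hence is a pure mortality local martingale too. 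Both are stopped at $\tau$ (since $N^{\mathbb G}$ is and $D$ is), so their sum is stopped at $\tau$; and each is orthogonal to every $\mathbb F$-locally bounded local martingale by the cited results, so the sum is as well. Thus $N$ satisfies both conditions (a) and (b) of Definition \ref{PureMortalityMartingales}, provided it is non-constant, which it is precisely when $(\xi^{(o)},\xi^{(pr)})\neq(0,0)$ $P\otimes D$-a.e.

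Next I would prove the ``only if'' direction. Let $N$ be a pure mortality local martingale. Applying Theorem \ref{theo4MartingaleDecomposGeneral} (after a standard localization to reduce to the martingale case, or noting the theorem extends to $\mathbb G$-local martingales) to $N=N^{\tau}$, we obtain the decomposition $N=N_0+G_{-}^{-2}I_{\Rbrack 0,\tau\Rbrack}\centerdot\widehat{M^{\mathbb F}}+\xi^{(o)}\centerdot N^{\mathbb G}+\xi^{(pr)}\centerdot D$ with $\E[\xi^{(pr)}_{\tau}|{\cal F}_{\tau}]=0$ $P$-a.s.\ on $\{\tau<+\infty\}$ and the uniqueness of $(\xi^{(o)},\xi^{(pr)})$ already in hand. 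The whole point is to show that the ``financial'' piece $Y:=G_{-}^{-2}I_{\Rbrack 0,\tau\Rbrack}\centerdot\widehat{M^{\mathbb F}}$ vanishes. Since $N$ is orthogonal to every $\mathbb F$-locally bounded local martingale, and since the other two pieces $\xi^{(o)}\centerdot N^{\mathbb G}$ and $\xi^{(pr)}\centerdot D$ are also orthogonal to every such $M$ (by the ``if'' direction argument), we conclude that $Y$ itself is orthogonal to every $\mathbb F$-locally bounded local martingale. But $Y=G_{-}^{-2}I_{\Rbrack 0,\tau\Rbrack}\centerdot\widehat{M^{\mathbb F}}$ is, by the structure of $\widehat{M^{\mathbb F}}$ recalled in Remark \ref{orthogonalityofMhat}, precisely the kind of $\mathbb G$-local martingale whose jumps on $\Rbrack 0,\tau\Rbrack$ agree with the $\mathbb F$-optional process $G_{-}^{-2}\widetilde K$ attached to $M^{\mathbb F}$; testing orthogonality against $M^{\mathbb F}$ itself (localized) forces $[Y,M^{\mathbb F}]$ to be a $\mathbb G$-local martingale, and combined with the expression for this bracket one gets $Y\equiv 0$ much as in the uniqueness argument of Theorem \ref{theo4MartingaleDecomposGeneral}.

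The main obstacle I anticipate is this last point: rigorously extracting ``$Y=0$'' from ``$Y$ is a $\mathbb G$-martingale built from an $\mathbb F$-local martingale and orthogonal to all $\mathbb F$-locally bounded local martingales.'' One clean way is to observe that, up to localization, $Y$ has the form $\widehat{M^{\mathbb F}}$-type integral, so that $\langle Y,\widehat{N}\rangle$ type computations against $N=M^{\mathbb F}$ itself yield, via the explicit bracket formula $[\widehat{M},\widehat{L}]$ available from \cite{aksamitetal15} or via the predictable-projection characterization in Remark \ref{orthogonalityofMhat}, an identity whose only solution is the null process; alternatively one notes that if $Y\neq 0$ then $M^{\mathbb F}\notin{\cal N}(\mathbb F)$, and replaying the localization with the sets $\Gamma_n$ exactly as in the proof of Theorem \ref{theo4MartingaleDecomposGeneral} produces a contradiction with self-orthogonality of $Y$. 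Once $Y\equiv 0$ is secured, the representation (\ref{PMMrepresentation}) follows, the uniqueness of $(\xi^{(o)},\xi^{(pr)})$ is inherited from Theorem \ref{theo4MartingaleDecomposGeneral}, and the final sentence --- that there are exactly two orthogonal types of pure mortality local martingales, namely those of the form $\xi^{(o)}\centerdot N^{\mathbb G}$ (first type) and those of the form $\xi^{(pr)}\centerdot D$ with $\E[\xi^{(pr)}_{\tau}|{\cal F}_{\tau}]=0$ (second type) --- is an immediate reading of (\ref{PMMrepresentation}) together with the orthogonality established in Theorem \ref{the3spaces}-(b).
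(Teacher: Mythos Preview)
Your overall strategy matches the paper's: the ``if'' direction is exactly as you describe, and for ``only if'' one applies Theorem \ref{theo4MartingaleDecomposGeneral} and must show that the $\widehat{M^{\mathbb F}}$-piece vanishes. The gap is in your execution of this last step.

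Your proposal to test orthogonality against $M^{\mathbb F}$ itself does not work as stated: Definition \ref{PureMortalityMartingales} only gives orthogonality against \emph{locally bounded} $\mathbb F$-local martingales, and $M^{\mathbb F}$ need not be locally bounded (localization by stopping times does not bound the jumps). The paper's argument is more delicate. It first tests against the continuous part $M^c$ (always locally bounded) to kill $(M^{\mathbb F})^c$. Then, for the purely discontinuous part, it truncates jumps by testing against
\[
M^{(\alpha)}:=M-\sum \Delta M\, I_{\{\vert \Delta M\vert>\alpha\}}+\Bigl(\sum \Delta M\, I_{\{\vert \Delta M\vert>\alpha\}}\Bigr)^{p,\mathbb F},
\]
which has bounded jumps hence is locally bounded. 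But this alone is still not enough, because $\widehat{M}$ carries the correction term $\,^{p,\mathbb F}(\Delta M\, I_{\Lbrack\widetilde R\Rbrack})$, which does not disappear when you compute $\langle\widehat M,M^{(\alpha)}\rangle^{\mathbb G}$. The key device you are missing is a second auxiliary $\mathbb F$-local martingale with bounded jumps,
\[
N^{(\alpha)}:=G_{-}\centerdot(V-V^{p,\mathbb F})+\,^{p,\mathbb F}(\Delta V)\centerdot m,\qquad V:=\Delta M^{(\alpha)}_{\widetilde R}\,I_{\Lbrack\widetilde R,+\infty\Lbrack},
\]
against which one tests first to deduce $\,^{p,\mathbb F}(\Delta M\, I_{\Lbrack\widetilde R\Rbrack})\,I_{\Rbrack0,\tau\Rbrack}\equiv 0$. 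Only once that correction is killed does testing against $M^{(\alpha)}$ reduce $[\widehat M,M^{(\alpha)}]$ to (a positive multiple of) $I_{\{|\Delta M|\le\alpha\}}\centerdot[\widehat M,\widehat M]$, and letting $\alpha\to\infty$ forces $\widehat M\equiv 0$. Your alternative suggestion of ``replaying the $\Gamma_n$ localization from Theorem \ref{theo4MartingaleDecomposGeneral}'' concerns uniqueness of the decomposition, not this vanishing, and does not close the gap.
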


\begin{proof}
Let $N$ be a pure default local martingale. By stopping, there is no loss of generality in assuming $N$ to be a martingale. Then a direct application of Theorem 2.20 to $N$ leads to the existence of $M\in {\cal M}_{0,loc}(\mathbb F)$, $\varphi^{(o)}\in  {\cal I}^o_{loc}(N^{\mathbb G}, \mathbb G)$ and $\varphi^{(pr)}\in L^1_{loc}({\rm{Prog}}(\mathbb F),P\otimes D) $ such that  
$$N=N_0+G_{-}^{-2}I_{\Rbrack0,\tau\Rbrack}\is{\widehat M}+\varphi^{(o)}\is N^{\mathbb G}+\varphi^{(pr)}\is D,\quad M=M^R,\quad \quad\mbox{and}\quad \Delta M I_{\{\widetilde G=0\}}\equiv 0.$$
Hence the proof of the corollary will be completed as soon as we prove that $\widehat M\equiv 0$. Thus, the rest of the proof concentrates on proving this fact. To this end, for any $\alpha>0$, we consider
$$M^{(\alpha)}:=M-\sum \Delta M I_{\{\vert \Delta M\vert>\alpha\}}+\left(\sum \Delta M I_{\{\vert \Delta M\vert>\alpha\}}\right)^{p,\mathbb F},$$
which is an $\mathbb F$-local martingale with bounded jumps. Since $N$ is a pure default martingale, then $[N, M^{(\alpha)}]$ is a  $\mathbb G$-local martingale, or equivalently $[\widehat M,  M^{(\alpha)}]$ is  a $\mathbb G$-local martingale, or equivalently $[\widehat M,  M-\sum \Delta M I_{\{\vert \Delta M\vert>\alpha\}}]$ is a $\mathbb G$-local martingale. Since $\Delta M I_{\Lbrack\widetilde R\Rbrack}=\Delta M I_{\{\widetilde G=0<G_{-}\}}\equiv0$, we get 
\begin{eqnarray*}
[\widehat M,  M-\sum \Delta M I_{\{\vert \Delta M\vert>\alpha\}}]=G_{-}{\widetilde G}^{-1}I_{\Rbrack0,\tau\Rbrack}I_{\{\vert \Delta M\vert\leq\alpha\}}\is[M,  M],
\end{eqnarray*}
and conclude that $G_{-}{\widetilde G}^{-1}I_{\Rbrack0,\tau\Rbrack}I_{\{\vert \Delta M\vert\leq \alpha\}}\is[M,  M]\equiv 0$ for any $\alpha>0$. By letting $\alpha$ goes to infinity and taking the $\mathbb F$-optional projection afterwards, we deduce that $I_{\{G_{-}>0\}}\is M\equiv 0$, and hence $\widehat M\equiv 0$.  This ends the proof of the corollary.
\end{proof}

\section{Valuation of defaultable securities}\label{subsection4Riskdecomp}
This section constitutes our second main contribution. It answers positively (\ref{Question1}), under some mild condition, by describing the stochastic structures (dynamics) of  some defaultable securities, while allowing the default time $\tau$ to have an arbitrary model. These defaultable securities have their counterpart in life insurance, which are popular mortality securities. Throughout the rest of the paper, a probability measure $Q$ is called {\it  a risk-neutral probability for the model
$(\Omega,\mathbb G)$}, if all discounted price processes of traded securities in the market $(\Omega,\mathbb G)$ are local martingales under $Q$.

\begin{theorem}\label{contractsStructures} Suppose that $P$ is a risk-neutral probability for $(\Omega, \mathbb G)$,  and consider $T\in (0,+\infty)$, $g\in L^1({\cal F}_T)$ and $K\in L^1({\cal O}(\mathbb F), P\otimes D)$. Then the following assertions hold.\\
{\rm{(a)}} The value process of the security with payoff $gI_{\{\tau>T\}}$ at time $T$, is given by
\begin{equation}\label{PureEndowment}
P^{(g)}=P^{(g)}_0+{{I_{\Rbrack 0,\tau\wedge T\Rbrack}}\over{G_{-}}}\is\widehat{M^{(g)}}
-{{M^{(g)}_{-}}\over{G_{-}^2}}I_{\Rbrack 0,\tau\wedge T\Rbrack}\is\widehat m -{{ M^{(g)}}\over{G}}I_{\Rbrack 0,R\Lbrack}\is (N^{\mathbb G})^T,
\  \mbox{with}\ M^{(g)}_t:=\E\left[gG_T\ \big|\ {\cal F}_t\right].
\end{equation}
{\rm{(b)}} The value process of the security with payoff  $K_{\tau}I_{\{\tau\leq T\}}$ at time $T$, is given by
\begin{eqnarray}\label{TermInsurance}
I^{(K)}=I^{(K)}_0+{{I_{\Rbrack 0,\tau\wedge T\Rbrack}}\over{G_{-}}}\is\widehat{M^{(K)}}-
{{Y^{(K)}_{-}}\over{G_{-}^2}}I_{\Rbrack 0,T\wedge\tau\Rbrack}\is\widehat m +
{{KG- Y^{(K)}}\over{G}}I_{\Rbrack 0,T\Rbrack\cap\Lbrack 0,R\Lbrack}\is N^{\mathbb G},
\end{eqnarray}
where
\begin{eqnarray*}\label{M(K)andY(K)}
M^{(K)}_t:=\E\left[\int_0^T K_u dD^{o,\mathbb F}_u\ \big|\ {\cal F}_t\right]\ \ \ \mbox{and}\ \ \ \
Y^{(K)}:=M^{(K)}-K I_{\Lbrack 0,T\Rbrack}\is D^{o,\mathbb F}.\end{eqnarray*}
{\rm{(c)}} The value process of the security with payoff $gI_{\{\tau>T\}}+K_{\tau}I_{\{\tau\leq T\}}$ at time $T$, is given by
\begin{eqnarray*}\label{EndowmentInsurance}
E^{(g,K)}=P^{(g)}+I^{(K)}.
\end{eqnarray*}
Here $P^{(g)}$ and $I^{(K)}$ are given by (\ref{PureEndowment}) and (\ref{TermInsurance}) respectively.\\
 {\rm{(d)}} The value process $B$, of the security with payoff $G_T=P(\tau>T|{\cal F}_T)$ at time $T$, satisfies 
\begin{align}\label{LongevityBond}
B^{\tau}&= B_0+{{I_{\Rbrack 0,\tau\wedge T\Rbrack}}\over{G_{-}}}\is \widehat{M^{(B)}}
-{{M^{(B)}_{-}-{\overline D}^{o,\mathbb F}_{-}}\over{G_{-}^2}}I_{\Rbrack 0,T\wedge\tau\Rbrack}\is\widehat m
 +{{\xi^{(G)}G- M^{(B)}+{\overline D}^{o,\mathbb F}}\over{G}}I_{\Rbrack 0,R\Lbrack}I_{\Rbrack 0,T\Rbrack}\is N^{\mathbb G}\nonumber\\
&\quad +\left(\E[G_T\ \big|\ {\cal G}_{\tau}]-\xi^{(G)}_{\tau}\right) I_{\Lbrack\tau, +\infty\Lbrack},
\end{align}
where
\begin{equation}\label{xiGandM(G)}
M^{(B)}_t:=\E\left[{\overline D}^{o,\mathbb F}_{\infty}-{\overline D}^{o,\mathbb F}_0\ \big|\ {\cal F}_{t\wedge T}\right],
\ \ \ \ \xi^{(G)}:={{d{\overline D}^{o,\mathbb F}}\over{d{D}^{o,\mathbb F}}},\ \ \ \ {\overline D}^{o,\mathbb F}:=
\left(G_T I_{\Lbrack\tau, +\infty\Lbrack}\right)^{o,\mathbb F}.
\end{equation}
\end{theorem}

\begin{proof} This proof contains two parts. The first part proves assertions (a), (b) and (c), while the last part deals with assertion (d).\\
{\bf Part 1:} Remark that
$$
gI_{\{\tau>T\}}=h_{\tau},\ \ \ \ \mbox{where}\ \ h_t:=gI_{\Rbrack T,+\infty\Lbrack}(t)\ \mbox{is an $\mathbb F$-predictable process}.
$$
Thus, we get $P^{(g)}_t=\E[h_{\tau}\ \big|\ {\cal G}_t]$. As a result, we deduce that
$$P^{(g)}_t=P^{(g)}_{t\wedge\tau}=P^{(g)}_{t\wedge\tau\wedge T},\ \ \ \ \ \ \ h^T\equiv 0,\quad I_{\Rbrack 0,T\Rbrack} (h\is D^{o,\mathbb F})_{-}\equiv 0 \quad \mbox{and}\quad (h\is D^{o,\mathbb F})^T\equiv 0.$$
Therefore, by inserting these in (\ref{RepresentationofH}) and using $M^h_{t\wedge T}=M^{(g)}_t:=\E[gG_T\ \big|\ {\cal F}_t]$, assertion (a) follows immediately. 
Similarly, assertion (b) follows from combining Theorem {\ref{TheoRepresentation}}-(a), for the payoff process $h$ taking the form of
$h_t:=K_tI_{\Lbrack0,T\Rbrack}(t)$,  and the fact that $I^{(K)}_t=I^{(K)}_{t\wedge T}$.     It is clear that assertion (c) is a direct consequence of assertions (a) and (b).\\
{\bf Part 2:}  Herein, we prove assertion (d).  To this end, we put 
$$k_{\tau}:=\E[G_T\big|{\cal G}_{\tau}]-\E[G_T\big|{\cal F}_{\tau}]\quad\mbox{and}\quad {\overline D}:=G_TI_{\Lbrack\tau,+\infty\Lbrack},$$
and focus on proving 
\begin{equation}\label{equa3001}
\xi^{(G)}_{\tau}=\E[G_T\big|{\cal F}_{\tau}]\ \ P\mbox{-a.s. on}\quad\{\tau<+\infty\}.\end{equation}
Thus, we consider $O\in{\cal O}(\mathbb F)$ and  derive
\begin{align*}
\E\left[G_T I_O(\tau)I_{\{\tau<+\infty\}}\right]&=\E\left[\int_0^{+\infty} I_O(t) d{\overline D}_t\right]=
\E\left[\int_0^{+\infty} I_O(t) d{\overline D}^{o,\mathbb F}_t\right]\\
&=\E\left[\int_0^{+\infty} I_O(t)\xi^{(G)}_t d{D}^{o,\mathbb F}_t\right]=\E\left[I_O(\tau)\xi^{(G)}_{\tau}I_{\{\tau<+\infty\}}\right].
\end{align*}
This proves (\ref{equa3001}). Thus, by applying Theorem \ref{TheoRepresentation2} to the $\mathbb G$-martingale $M^{\mathbb G}=B:=\ ^{o,\mathbb G}(G_T)$ for which  $h=\xi^{(G)}$ (since $k_{\tau}+\xi^{(G)}_{\tau}=E(G_T|{\cal G}_{\tau})=M^{\mathbb G}_{\tau}$ on $\{\tau<+\infty\}$) and  using the fact that $B_{t\wedge\tau}= B_{t\wedge\tau\wedge T}$, the proof of assertion (d) follows immediately.  This ends the proof of the theorem.
\end{proof}

As it was aforementioned, the four defaultable securities of Theorem \ref{contractsStructures}  have their mortality and/or longevity securities counterparts. In fact, in life insurance, the security with payoff $gI_{\{\tau>T\}}$ is known as a pure endowment insurance with benefit $g$. This insurance contract pays $g$ dollars at time $T$ if the insured survives.  In life insurance, the security with payoff  $K_{\tau}I_{\{\tau\leq T\}}$ is called a term insurance contract with benefit process. This contract pays $K_{\tau}$ at $\tau$ if the insured dies before or at the term of the contract. The security with payoff $G_T=P(\tau>T|{\cal F}_T)$ is very popular in life insurance and is called a zero-coupon longevity bond. It is an insurance contract that pays the conditional survival probability at term $T$, and it plays important role in the securitization process. The security with  payoff $gI_{\{\tau>T\}}+K_{\tau}I_{\{\tau\leq T\}}$ is known as endowment insurance contract with benefit pair $(g,K)$.

\begin{remark}\label{Remark3.3}
{\rm{(a)}} The stochastic structures of the securities' value processes described in Theorem \ref{contractsStructures}
allow us to single out all types of risks that every security possesses. This is very important for the risk management of the extra uncertainty borne by $\tau$. In life insurance, this is important for the mortality and/or longevity securitization process. In fact, defaultable securities with no second type 
of pure default risk will be irrelevant in reducing this type of risk in the risk management process.\\
 {\rm{(b)}}  By comparing (\ref{PureEndowment}), (\ref{TermInsurance}) and (\ref{LongevityBond}), we conclude that both the first and the second type of defaultable securities possess the same type of risks,
 while the third type of defaultable security bears, in addition, the pure mortality risk of type two given by
 $\left(\E[G_T\ \big|\ {\cal G}_{\tau}]-\xi^{(G)}_{\tau}\right)I_{\Lbrack\tau, +\infty\Lbrack}$. As proved in the previous subsection, this
 risk is orthogonal to the other risks (i.e. the financial risk and the pure mortality risk of the first type). This second type of pure mortality risk, in the longevity bond, vanishes if and only if
 $$\E[G_T\ \big|\ {\cal G}_{\tau}]I_{\{\tau<T\}}=\E[G_T\ \big|\ {\cal F}_{\tau}]I_{\{\tau<T\}},\ \ \ \ \ P\mbox{-a.s.},$$
  due to the fact that we always have
  $$\E[G_T\ \big|\ {\cal G}_{\tau}]I_{\{\tau\geq T\}}=\E[G_T\ \big|\ {\cal F}_{\tau}]I_{\{\tau\geq T\}},\ \ \ \ \ P\mbox{-a.s.}.$$
  This means that this risk occurs only when  the random time (default or death) occurs before the maturity $T$.  In general, this risk vanishes when $\tau$ avoids $\mathbb F$-stopping times or when
   ${\cal G}_{\tau}(={\cal F}_{\tau+})$ coincides with ${\cal F}_{\tau}$. Thus, imposing these assumptions, as in \cite{blanchetjeanblanc04}, 
   boils down to neglecting this type of risk.\\
{\rm{(c)}} The assumption on the probability $P$ in Theorem \ref{contractsStructures} and in the following corollary is not a restriction in our view.
   It is assumed for the sake of easy exposition only, as one can calculate every process used in
   the theorem (starting with the processes $G,\widetilde G, G_{-}$) under a chosen risk-neutral measure, $Q$, for the informational model
   $(\Omega,\mathbb G)$. \end{remark}

\noindent The rest of this section illustrates Theorem \ref{contractsStructures} for the case of a pseudo-stopping time $\tau$.

\begin{corollary}\label{pseudocase} Suppose that $P$ is a risk neutral probability for $(\Omega, \mathbb G)$, and $\tau$ is a
pseudo-stopping time satisfying $G>0$. Then the processes $P^{(g)}$, $I^{(K)}$ and $B^{\tau}$ of Theorem \ref{contractsStructures} take the following forms.
\begin{align}
P^{(g)}&= P^{(g)}_0+{{I_{\Rbrack0,\tau\Rbrack}}\over{G_{-}}}\is M^{(g)}-{{ M^{(g)}}\over{G}}\is \left( N^{\mathbb G}\right)^T,\label{Pg4pseudo}\\
I^{(K)}&= I^{(K)}_0+{{I_{\Rbrack0,\tau\Rbrack}}\over{G_{-}}}\is M^{(K)}+
{{KG- Y^{(K)}}\over{G}}\is \left( N^{\mathbb G}\right)^T,\nonumber\\
B^{\tau}&= B_0+{{I_{\Rbrack0,\tau\Rbrack}}\over{G_{-}}}\is M^{(B)}
+\left[\xi^{(G)}+{{- M^{(B)}+{\overline D}^{o,\mathbb F}}\over{G}}\right]\is\left( N^{\mathbb G}\right)^T+
\left[\E[G_T\ \big|\ {\cal G}_{\tau}]-\xi^{(G)}_{\tau}\right]I_{\Lbrack\tau, +\infty\Lbrack},\nonumber
\end{align}
where $\left(M^{(B)},\xi^{(G)}, \overline{D}^{o,\mathbb F}\right)$ is given by (\ref{xiGandM(G)}).
\end{corollary}

\noindent The proof of the corollary follows immediately from combining Theorem \ref{contractsStructures}
with the fact that $m\equiv m_0$ whenever $\tau$ is a pseudo-stopping time, and will be omitted herein.\\

 \noindent  If $\tau$ is independent of ${\cal F}_{\infty}$ such that $P(\tau>T)>0$, then on the one hand (\ref{Pg4pseudo}) becomes
 \begin{eqnarray*}\label{Pg}
 P^{(g)}=P^{(g)}_0-{{ gP(\tau>T)}\over{P(\tau>\cdot)}}\is \left( N^{\mathbb G}\right)^T=
 P^{(g)}_0-{{ gP(\tau>T)}\over{P(\tau>\cdot)}}\is\left({\overline N}^{\mathbb G}\right)^T.
 \end{eqnarray*}
 On the other hand, the security with payoff $G_T$ has a constant value process equal to $G_T$, and hence it can not be used for hedging any risk! Thus,  under the independence condition between $\tau$ and $\mathbb F$, the security with payoff $I_{\{\tau>T\}}$, i.e. the contract pays one dollar to the beneficiary if no default occur over the term $[0,T]$) is more adequate to hedge defaultable liabilities.  

\begin{appendices}
	


\section{Proofs of Lemma \ref{LemmaTech} and  Theorem \ref{TheoRepresentation}}\label{SubsectionproofTheorem1}
This subsection is devoted to the proof of this main theorem, and its technical lemma. Thus, throughout this subsection, we consider
 $h\in L^1\left({\cal O}(\mathbb F),P\otimes D\right)$ and the associated $\mathbb G$-martingale
  $H_t:=\E[h_{\tau}\big|{\cal G}_t]$. Then remark that we can decompose this martingale as follows
\begin{equation}\label{DecompHexplicit}
H_t = h_\tau I_{\{\tau\leq t\}} +
{{ I_{\{\tau>t\}} }\over{G_t}} \E\left[h_\tau  I_{\{\tau>t\}} |{\cal F}_t\right] = (h\is D)_t +J^h_tI_{\Lbrack 0,\tau\Lbrack}(t)=\left((h-J^h)\is D\right)_t+J^h_{\tau\wedge t},
\end{equation}
where the process $J^h$ is defined by
\begin{equation}\label{Jprocess}
J^h := {{Y}\over{K}},\quad\quad Y_t := \E\left[h_\tau  I_{\Lbrack 0,\tau\Lbrack} (t) \big|{\cal F}_t\right],\quad\quad
K:=G+(G_{R-}+I_{\{G_{R-}=0\}})I_{\Lbrack R,+\infty\Lbrack}.
\end{equation}
It is easy to notice that $Y$ is  is a $\mathbb F$-semimartingale and satisfies
\begin{equation}\label{processY}
Y=M^h-h\is D^{o,\mathbb F},\end{equation}
where $M^h$ is defined in (\ref{processesMhandJ}), that we recall herein by $M^h:=\ ^{o,\mathbb F}(\int_0^{+\infty}h_udD^{o,\mathbb F}_u)$. 

\begin{proof}[Proof of Lemma  \ref{LemmaTech}:]  Due to $L^1_{loc}({\cal O}(\mathbb F),P\otimes D)\subset {\cal I}^o_{loc}(N^{\mathbb G}, \mathbb G)$ and 
$G^{-1}(M^h-h\is D^{o,\mathbb F})I_{\Lbrack0,R\Lbrack}=J^h I_{\Lbrack0,R\Lbrack}$, we remark that the proof of the lemma boils down to proving 
 $J^hI_{\Lbrack 0,R\Lbrack}\in {\cal I}_{\loc}^o(N^{\mathbb G},\mathbb G)$ . To prove this latter fact, we consider the sequence of
  $\mathbb F$-stopping times $(\sigma_n)_{n\geq 1}$ given by
$$
\sigma_n:=\inf\{t\geq 0\ :\ \vert J^h_t\vert>n \},\quad n\geq 1.$$
Since $J^h$ is a RCLL and $\mathbb F$-adapted process with real values, then the sequence $(\sigma_n)_{n\ge 1}$
 increases to infinity almost surely.  Then we calculate
\begin{align*}
	\E\Bigg[{{\vert J^h\vert G}\over{\widetilde G}}I_{\{\widetilde G>0\}}\is D_{\sigma_n}\Bigg]&\leq n
+\E\Bigg[{{\vert J^h_{\sigma_n}\vert G_{\sigma_n}}\over{\widetilde G_{\sigma_n}}}I_{\{\tau=\sigma_n<+\infty\}}\Bigg]
	=n +\E\Bigg[{{\vert J^h_{\sigma_n}\vert G_{\sigma_n}}\over{\widetilde G_{\sigma_n}}}(\widetilde G_{\sigma_n}-
G_{\sigma_n})I_{\{\widetilde G_{\sigma_n}>0\}}\Bigg]\\
	\\
	&\leq n +\E\big[\vert Y_{\sigma_n}\vert\big]\leq n +\E\big[\vert h_{\tau}\vert I_{\{\sigma_n<\tau<+\infty\}}\big]<+\infty . \end{align*}
This proves that $J^hI_{\Lbrack 0,R\Lbrack}\in {\cal I}_{\loc}^o(N^{\mathbb G},\mathbb G)$, and ends the proof of the lemma.
\end{proof}
 The rest of this section focuses on proving Theorem  \ref{TheoRepresentation}. This proof relies heavily on understanding the dynamics of the process $K$ and subsequently that of $J^h$.
 The following lemma addresses useful properties, of the process $K$, that will be used throughout the proof of the theorem.

\begin{lemma}\label{lemma4proof1} Let $K$ be given in (\ref{Jprocess}). Then the following assertions hold.\\
{\rm{(a)}} $K^{\tau}$ is a positive $\mathbb G$-semimartingale satisfying the following
\begin{equation}\label{prop4Ktau}
K^{\tau}=G^{\tau}+G_{-}I_{\Lbrack R\Rbrack}\is D,\ \ \  K_{-}^{\tau}=
G_{-}^{\tau},\ \ \inf_{t\geq 0}K_{t\wedge\tau}>0,\ \ KI_{\Lbrack 0,R\Lbrack}+KI_{\Lbrack \tau\Rbrack\cap\Lbrack R\Rbrack}=
GI_{\Lbrack 0,R\Lbrack}+G_{-}I_{\Lbrack \tau\Rbrack\cap\Lbrack R\Rbrack}.\end{equation}
{\rm{(b)}}  As a result, $\left(K^{\tau}\right)^{-1}$ is a positive $\mathbb G$-semimartingale admitting the following decomposition.
\begin{align}
d\left({1\over{K^{\tau}}}\right)&= -\left(G_{-}^{\tau}\right)^{-2}dm^{\tau}+(GG_{-}^2)^{-1} I_{\Rbrack0, \tau\Lbrack}d[m,m]+
  (G_{-}-\Delta m)(GG_{-}^2)^{-1}I_{\Lbrack0, \tau\Lbrack}dD^{o,\mathbb F}\nonumber\\
&\quad +\left\{{{G\Delta m-G_{-}\Delta G}\over{GG_{-}^2}}I_{\Rbrack0, R\Lbrack}
+{{\Delta m}\over{G_{-}^2}}I_{\Lbrack R\Rbrack}\right\}dD.\label{Decomp4K-1tau}
\end{align}
{\rm{(c)}}  For any $\mathbb G$-semimartingale $L$, we have
\begin{equation}\label{quadraticform}
d\Big[L,{1\over{K^{\tau}}}\Big]=-{1\over{GG_{-}}}I_{\Rbrack0, \tau\Lbrack}d[L,m]
+{{\Delta L}\over{GG_{-}}}I_{\Rbrack0, \tau\Lbrack}dD^{o,\mathbb F}-{{\Delta L\Delta G}\over{GG_{-}}}I_{\Rbrack0, R\Lbrack}dD.
\end{equation}
{\rm{(d)}} On $\{\widetilde R<+\infty\}$, we have
\begin{eqnarray*}\label{2termscanceled}
\Delta M^h_{\widetilde R}-J^h_{\widetilde R-}\Delta m_{\widetilde R}=0,\ \ \ P\mbox{-a.s.}.\end{eqnarray*}
\end{lemma}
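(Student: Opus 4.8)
\textbf{Plan of proof for Lemma \ref{lemma4proof1}.}

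The idea is to compute everything explicitly from the definition $K=G+(G_{R-}+I_{\{G_{R-}=0\}})I_{\Lbrack R,+\infty\Lbrack}$, using the Doob--Meyer decomposition $G=G_0+m-D^{o,\mathbb F}$ and the fact that $\Rbrack0,\tau\Rbrack\subset\{G_->0\}$. For assertion (a), I would first observe that on $\Lbrack0,\tau\Rbrack$ one has either $t<R$, on which $K=G$, or $t=\tau=R$ (since $\tau\le R$ $P$-a.s. and $G$ does not vanish strictly before $R$ except possibly by a jump to $0$ at $R$), on which $K_R=G_{R-}>0$ because $\widetilde G_\tau>0$ $P$-a.s.\ on $\{\tau<+\infty\}$ by \cite[Lemma (4,3)]{jeulin80}. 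This gives $K^\tau=G^\tau+G_-I_{\Lbrack R\Rbrack}\centerdot D$ and $K_-^\tau=G_-^\tau$ directly, and then $\inf_{t\ge0}K_{t\wedge\tau}>0$ follows because $G_-^\tau$ is a c\`adl\`ag strictly positive process on the stochastic interval $\Lbrack0,\tau\Rbrack$ which is bounded below away from zero on $\{\tau<+\infty\}$; the last displayed identity is just a case split on whether $R$ is reached before or at $\tau$.

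For assertions (b) and (c) the plan is to apply It\^o's formula for the function $x\mapsto 1/x$ to the semimartingale $K^\tau$. Writing $dK^\tau=dm^\tau-dD^{o,\mathbb F}I_{\Rbrack0,\tau\Lbrack}+G_-I_{\Lbrack R\Rbrack}dD$ (the stopped Doob--Meyer pieces plus the boundary jump term), the continuous-part contribution produces the $-(G_-^\tau)^{-2}dm^\tau$ and $(GG_-^2)^{-1}I_{\Rbrack0,\tau\Lbrack}d[m]$ terms, while the jump correction $\Delta(1/K^\tau)+\ (1/K_-^\tau)^2\Delta K^\tau - \cdots$ must be reorganized, distinguishing the jumps of $D^{o,\mathbb F}$ (carrying the factor $\widetilde G-G=\Delta D^{o,\mathbb F}$) from the jump at $\tau$ where $\Delta D=1$ and $\Delta K^\tau$ equals $G_--G$ on $\{\tau<R\}$ and $G_--G_{R-}=0$ contribution adjusted on $\{\tau=R\}$. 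Collecting the coefficients of $dm^\tau$, $d[m]$, $dD^{o,\mathbb F}$ and $dD$ yields \eqref{Decomp4K-1tau}; \eqref{quadraticform} then follows by computing $[L,1/K^\tau]=\sum \Delta L\,\Delta(1/K^\tau)$ plus the continuous covariation $-(GG_-)^{-1}d\langle L,m\rangle^c$, and substituting the jump formula for $\Delta(1/K^\tau)$ from part (b), again splitting according to jumps of $D^{o,\mathbb F}$ versus the jump at $\tau$. The bookkeeping of the $I_{\Rbrack0,R\Lbrack}$ versus $I_{\Lbrack R\Rbrack}$ indicators is the fiddly part here, but it is purely mechanical once the case split is set up.

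Assertion (d) is of a different nature and is, I expect, the only genuinely non-routine step. On $\{\widetilde R<+\infty\}$ one has $\widetilde R=R$, $G_{R-}>0$ and $\widetilde G_R=0$; I would argue that $\tau<R$ $P$-a.s.\ on this set (since $\widetilde G_\tau>0$ but $\widetilde G_R=0$), so $J^h_{R-}=Y_{R-}/G_{R-}$ with $Y_{R-}=\E[h_\tau I_{\Lbrack0,\tau\Lbrack}(R)\mid{\cal F}_{R-}]$ in the appropriate limiting sense. The key point is that $\widehat{M^h}$ and $\widehat m$ are $\mathbb G$-local martingales stopped at $\tau$ and hence do not jump at $\widetilde R$ (as $\tau<\widetilde R$ there); comparing the jump of the representation and using that the term $I_{\Lbrack0,\tau\Rbrack}\centerdot(\Delta M^h_{\widetilde R}I_{\Lbrack\widetilde R,+\infty\Lbrack})^{p,\mathbb F}$ appearing in $\widehat{M^h}$ must be compensated, the relation $\Delta M^h_{\widetilde R}=J^h_{\widetilde R-}\Delta m_{\widetilde R}$ drops out as the consistency condition. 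Concretely, I would invoke the optional projection identity $M^h_{\widetilde R}=\ ^{o,\mathbb F}(\int h_u\,dD^{o,\mathbb F}_u)_{\widetilde R}$ together with $Y=M^h-h\centerdot D^{o,\mathbb F}$ and the definition of $\widetilde R$ via $\widetilde G_R=0<G_{R-}$, rewriting $\Delta M^h_{\widetilde R}$ as $\widetilde G_R J^h_{\widetilde R}-G_{R-}J^h_{\widetilde R-}+\Delta(h\centerdot D^{o,\mathbb F})_{\widetilde R}$ and using $\widetilde G_R=0$ and $\Delta D^{o,\mathbb F}_{\widetilde R}=\widetilde G_R-G_{R-}=-G_{R-}$, $\Delta m_{\widetilde R}=\widetilde G_R-G_{R-}=-G_{R-}$. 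Then $\Delta M^h_{\widetilde R}=-G_{R-}J^h_{\widetilde R-}+h_{\widetilde R}(-G_{R-})\cdot 0$-type cancellation gives $\Delta M^h_{\widetilde R}=J^h_{\widetilde R-}\Delta m_{\widetilde R}$, i.e.\ \eqref{2termscanceled}. The main obstacle is getting the measurability/limit arguments for $J^h_{\widetilde R-}$ correct and ensuring the cancellation is exact rather than just up to a martingale increment; I would lean on the stopped-at-$\tau$ structure and $\tau<\widetilde R$ on $\{\widetilde R<+\infty\}$ to make this rigorous.
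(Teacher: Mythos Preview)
Your plan for parts (a)--(c) matches the paper's proof essentially line for line: both derive $K^\tau=G^\tau+G_-I_{\Lbrack R\Rbrack}\centerdot D$ from the case split $\tau<R$ versus $\tau=R$ (using $\widetilde G_\tau>0$ on $\{\tau<+\infty\}$), and both obtain (b) and (c) from It\^o's formula applied to $x\mapsto 1/x$. The paper organises the bookkeeping a bit more cleanly by first recording the identity $[G]=[m]-(\Delta G+\Delta m)\centerdot D^{o,\mathbb F}$ and, for (c), using $[L,1/K^\tau]=-(K^\tau K_-^\tau)^{-1}\centerdot[L,K^\tau]$ directly rather than splitting into continuous and jump parts; but these are stylistic differences, not substantive ones.

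For part (d) the paper's proof actually gives no argument at all, so your direct computation is the right thing to do. However, two points. First, your initial ``consistency condition'' paragraph is circular: assertion (d) is precisely what is invoked in the proof of Theorem~\ref{TheoRepresentation} to make the $(\ \cdot\ )^{p,\mathbb F}$ terms cancel, so you cannot extract (d) from the representation. Second, in your concrete computation you have confused $\Delta D^{o,\mathbb F}$ with $\Delta m$: one has $\Delta D^{o,\mathbb F}=\widetilde G-G$ (not $\widetilde G-G_-$), so on $\{\widetilde R<+\infty\}$ in fact $\Delta D^{o,\mathbb F}_{\widetilde R}=\widetilde G_{\widetilde R}-G_{\widetilde R}=0-0=0$, whereas $\Delta m_{\widetilde R}=\widetilde G_{\widetilde R}-G_{\widetilde R-}=-G_{\widetilde R-}$. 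With this correction the argument is short and clean: since $G_{\widetilde R}=0$ implies $P(\tau>\widetilde R\mid{\cal F}_{\widetilde R})=0$, we get $Y_{\widetilde R}=0$, hence
\[
\Delta M^h_{\widetilde R}=\Delta Y_{\widetilde R}+h_{\widetilde R}\,\Delta D^{o,\mathbb F}_{\widetilde R}=-Y_{\widetilde R-}+0=-G_{\widetilde R-}J^h_{\widetilde R-}=J^h_{\widetilde R-}\,\Delta m_{\widetilde R},
\]
which is \eqref{2termscanceled}. Your ``$h_{\widetilde R}(-G_{R-})\cdot 0$-type cancellation'' gets the right answer by luck, but the mechanism you wrote is not the correct one.
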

\begin{proof} The proof is achieved in four steps, where we prove assertions (a), (b), (c) and (d) respectively.\\
{\bf Step 1:} Thanks to \cite[XX.14]{dellacheriemeyer92} , we have
$$\Lbrack0,\tau\Rbrack\subset\{G_{-}>0\}\cap\{\widetilde G>0\}\ \ \ \ \mbox{and}\ \ \ \ \ \ \ \ \tau\leq R\ \ \ P\text{-a.s}.$$
As a result, we get
\begin{align*}
K^{\tau}&=G^{\tau}+\left[G_{R-}+I_{\{G_{R-}=0\}}\right]I_{\{\tau=R\}}I_{\Lbrack R,+\infty\Lbrack}
=G^{\tau}+G_{R-}I_{\{\tau=R\}}  I_{\Lbrack R,+\infty\Lbrack}\\
\\
& =G^{\tau}+G_{-}I_{\Lbrack R\Rbrack}\is I_{\Lbrack \tau,+\infty\Lbrack} =G^{\tau}+G_{-}I_{\Lbrack R\Rbrack}\is D.
\end{align*}
This proves the first equality in (\ref{prop4Ktau}). The proofs of the second and the last equalities in (\ref{prop4Ktau})
 follow immediately from the facts that $K=G$ on $\Rbrack0,R\Lbrack\supset\Rbrack0,\tau\Lbrack$ and $K=G_{-}$ on $\Lbrack R\Rbrack\cap\Lbrack\tau\Rbrack$. Again, thanks to these two facts and the first equality in (\ref{prop4Ktau}), we deduce that $K^{\tau}$ is a $\mathbb G$-semimartingale such that $K^{\tau}>0$ and $K^{\tau}_{-}>0$. Hence, $\displaystyle\inf_{t\geq 0}K^{\tau}_t>0\ \ P$-a.s. and the proof of assertion (a) is completed.\\
{\bf Step 2:}  It is obvious, from assertion (a), that $(K^{\tau})^{-1}$ is a well-defined positive $\mathbb G$-semimartingale.
 Then a direct application of Ito's formula leads to
\begin{equation}\label{ItoforK-1}
d\left({1\over{K^{\tau}}}\right)=-{1\over{(K^{\tau}_{-})^2}}dK^{\tau}+{1\over{K^{\tau}(K^{\tau}_{-})^2}}d[K^{\tau}].
\end{equation}
Thanks to (\ref{prop4Ktau}), $(\Delta G)^2I_{\Lbrack R\Rbrack}=G_{-}^2 I_{\Lbrack R\Rbrack}$ and $[G,G]=[m,m]-(\Delta G+\Delta m)\is D^{o,\mathbb F}$,  we derive
$$
d[K^{\tau},K^{\tau}]=I_{\Rbrack0,\tau\Lbrack}d [m,m]-(\Delta G+\Delta m)I_{\Rbrack0,\tau\Lbrack}dD^{o,\mathbb F}+
(\Delta G)^2I_{\Rbrack0,R\Lbrack}dD.$$
Thus, by inserting this equality together with $KI_{\Rbrack0,R\Lbrack}=GI_{\Rbrack0,R\Lbrack}$ and $K_{-}I_{\Rbrack0,R\Lbrack}=G_{-}I_{\Rbrack0,R\Lbrack}$, in (\ref{ItoforK-1}), the proof of assertion (b)
follows immediately.\\
{\bf Step 3:}  Let $L$ be a $\mathbb G$-semimartingale. Then, by using (\ref{ItoforK-1}),
\[
\Big[L,\frac{1}{K^\tau}\Big] = -\frac{1}{K^\tau K_-^\tau} \is  [L,K^\tau],
\]
and
$$
[L,K^{\tau}]=I_{\Lbrack0,\tau\Lbrack}\is [m,L]  - I_{\Lbrack 0, \tau \Lbrack } \is [L,D^{o,\mathbb F}]  +
(\Delta G)(\Delta L)I_{\Rbrack0,R\Lbrack}\is D,$$
 we easily derive (\ref{quadraticform}). \\
 {\bf Step 4:}  Herein, we prove assertion  (d).  To this end, we remark that $\{\widetilde R<+\infty\}=\{\widetilde R=R<+\infty\}$, $Y=GJ^h$ on $\Rbrack0,R\Lbrack$ which implies $Y_{-}=G_{-}J_{-}^h$ on $\Rbrack0,R\Rbrack$, and $Y_{\widetilde R}=\Delta D^{o,\mathbb F}_{\widetilde R}=0$ $P$-a.s. on $\{\widetilde R<+\infty\}$. By combining all these with (\ref{processY}) and $\Delta m=\widetilde G-G_{-}$, on $\{\widetilde R<+\infty\}$ we get
 $$
 \Delta M^h_{\widetilde R}-J^h_{\widetilde R-}\Delta m_{\widetilde R}=\Delta Y_{\widetilde R}+h_{\widetilde R}\Delta D^{o,\mathbb F}_{\widetilde R}+J^h_{\widetilde R-}G_{\widetilde R-}=\Delta Y_{\widetilde R}+J^h_{\widetilde R-}G_{\widetilde R-}=- Y_{\widetilde R-}+J^h_{\widetilde R-}G_{\widetilde R-}=0.$$
 This ends the proof of the lemma.
\end{proof}

Now, we are in the stage of proving Theorem \ref{TheoRepresentation}. 

\begin{proof}[Proof of Theorem \ref{TheoRepresentation}] The proof of the theorem will be given in three steps where
 we prove the three assertions respectively.\\
 {\bf Step 1:} Here, we prove assertion (a).  Thanks to Lemma \ref{LemmaTech}, $J^hI_{\Lbrack 0,R\Lbrack}\in {\cal I}_{\loc}^o(N^{\mathbb G},\mathbb G)$, 
 and  remark that $Y$, defined in (\ref{Jprocess}), is a $\mathbb G$-semimartingale and satisfies (\ref{processY}). Then, thanks to Ito's calculus, we derive
\begin{equation}\label{Jtau1}\displaystyle
d(J^h)^{\tau}=d\left({{Y^{\tau}}\over{K^{\tau}}}\right)={1\over{K_{-}^{\tau}}}dY^{\tau}+Y_{-}^{\tau}d\left({1\over{K^{\tau}}}\right)+
d\displaystyle\left[{1\over{K^{\tau}}},Y^{\tau}\right].
\end{equation}
Thus, the proof of assertion (a) of the theorem boils down to calculating each of the three terms in the RHS of the above equality,
 and to simplifying them afterwards.\\
  By combining $dY^{\tau}=d(M^h)^{\tau}-hI_{\Rbrack 0,\tau\Rbrack}dD^{o,\mathbb F}$,
  (\ref{prop4Ktau}) and (\ref{processMhat}), we write

\begin{align}\label{Term1}
{{1}\over{K_{-}^{\tau}}}dY^{\tau} &= {{1}\over{G_{-}^{\tau}}}d{\widehat M^h}
+{{1}\over{{\widetilde G}G_{-}^{\tau}}}I_{\Rbrack 0,\tau\Rbrack}d[m,M^h]
 -{{hI_{\Rbrack 0,\tau\Rbrack}}\over{G_{-}^{\tau}}}dD^{o,\mathbb F}
  -{1\over{G_{-}^\tau}}I_{\Rbrack 0,\tau\Rbrack}d\left(\Delta M^h_{\widetilde R}I_{\Lbrack\widetilde R,
  +\infty\Lbrack}\right)^{p,\mathbb F}\nonumber \\
&= {{1}\over{G_{-}^{\tau}}}d{\widehat M^h}+{{1}\over{{\widetilde G}G_-}}I_{\Rbrack 0,\tau\Lbrack}d[m,M^h]
-{{hI_{\Rbrack 0,\tau\Lbrack}}\over{G_-}}dD^{o,\mathbb F}
 -{1\over{G_{-}}}I_{\Rbrack 0,\tau\Rbrack}d\left(\Delta M^h_{\widetilde R}I_{\Lbrack\widetilde R,+\infty\Lbrack}\right)^{p,\mathbb F} \\
& \quad +\Bigl[{{\Delta m \Delta M^h}\over{{\widetilde G}G_{-}^{\tau}}}-{{ h \Delta D^{o,\mathbb F}}\over{G_{-}^{\tau}}}\Bigr]d D  . \nonumber
\end{align}

Thanks to (\ref{Decomp4K-1tau}) and again (\ref{processMhat}) (recall that $Y^{\tau}_{-}/K^{\tau}_{-}=Y^{\tau}_{-}/G^{\tau}_{-}=(J^h)^{\tau}_{-}$),
 we calculate

\begin{align}\label{Term2}
& Y_{-}^{\tau} d\left({1\over{K^{\tau}}}\right)= -{{(J^h)^{\tau}_{-}}\over{G_{-}^{\tau}}}dm^{\tau}+
{{(J^h)^{\tau}_{-}}\over{GG_{-}^{\tau}}}I_{\Rbrack 0,\tau\Lbrack}d[m,m]+
{{(J^h)^{\tau}_{-}(G_{-}-\Delta m)}\over{GG_{-}^{\tau}}}I_{\Rbrack 0,\tau\Lbrack}dD^{o,\mathbb F}\nonumber\\
\nonumber\\
&\qquad \qquad \qquad \ +\left[{{(J^h)^{\tau}_{-}(G\Delta m-G_{-}\Delta G)}\over{GG_{-}^{\tau}}}I_{\Rbrack 0,R\Lbrack}+
{{(J^h)^{\tau}_{-} \Delta m}\over{G_{-}^{\tau}}}I_{\Lbrack R\Rbrack}\right]dD\nonumber\\
\nonumber\\
& =-{{(J^h)^{\tau}_{-}}\over{G_{-}^{\tau}}}d{\widehat m}+
{{J^h_{-}}\over{G_{-}}}I_{\Rbrack 0,\tau\Rbrack}d\left(\Delta m_{\widetilde R}I_{\Lbrack\widetilde R,+\infty\Lbrack}\right)^{p,\mathbb F}
+{{J^h_{-}(\Delta m)^2}\over{\widetilde G GG_{-}}}I_{\Rbrack 0,\tau\Lbrack}dD^{o,\mathbb F}
+{{J^h_{-}(G_{-}-\Delta m)}\over{GG_{-}}}I_{\Rbrack 0,\tau\Lbrack}dD^{o,\mathbb F}\nonumber\\
\nonumber\\
&\quad   +\left[{{(J^h)^{\tau}_{-}(G\Delta m-G_{-}\Delta G)}\over{GG_{-}^{\tau}}}I_{\Rbrack 0,R\Lbrack}-
{{(J^h)_{-}^{\tau}(\Delta m)^2}\over{\widetilde G G_{-}^{\tau}}}+{{(J^h)_-^\tau\Delta m}\over{G_{-}^{\tau}}}I_{\Lbrack R\Rbrack}\right]dD\nonumber\\
\nonumber\\
& =-{{(J^h)^{\tau}_{-}}\over{G_{-}^{\tau}}}d{\widehat m}+
{{J^h_{-}}\over{G_{-}}}I_{\Rbrack 0,\tau\Rbrack}d\left(\Delta m_{\widetilde R}I_{\Lbrack\widetilde R,+\infty\Lbrack}\right)^{p,\mathbb F}
+{{J^h_{-}G_{-}}\over{\widetilde G G}}I_{\Rbrack 0,\tau\Lbrack}dD^{o,\mathbb F}\nonumber\\
\nonumber\\
&\quad +\left[{{(J^h)^{\tau}_{-}(G\Delta m-G_{-}\Delta G)}\over{GG_{-}^{\tau}}}I_{\Rbrack 0,R\Lbrack}-
{{(J^h)_{-}^{\tau}(\Delta m)^2}\over{\widetilde G G_{-}^{\tau}}}+{{(J^h)_-^\tau\Delta m}\over{G_{-}^{\tau}}}I_{\Lbrack R\Rbrack}\right]dD.
\end{align}

By applying (\ref{quadraticform}) to $L=Y^{\tau}=(M^h)^{\tau}-(h\is D^{o,\mathbb F})^{\tau}$, the last term in the RHS of (\ref{Jtau1}) becomes
\begin{align}\label{Term3}
d\left[Y^{\tau} ,{1\over{K^{\tau}}}\right]&=-\frac{1}{G G_-} I_{\Rbrack 0, \tau \Lbrack } d [Y,m] +
 \frac{\Delta Y}{G G_-} I_{\Rbrack 0, \tau \Lbrack } d D^{o,\mathbb F} -
 \frac{\Delta Y \Delta G}{G G_-} I_{\Rbrack 0, R \Lbrack} d D \nonumber\\
\nonumber\\
&=-{{1}\over{GG_{-}}}I_{\Rbrack 0,\tau\Lbrack}d[M^h,m]+{{\Delta Y+h\Delta m}\over{GG_{-}}}I_{\Rbrack 0,\tau\Lbrack} dD^{o,\mathbb F}-
\frac{\Delta Y \Delta G}{G G_-} I_{\Rbrack 0, R \Lbrack} d D .
\end{align}
Thanks to Lemma \ref{lemma4proof1}-(d), we conclude that
\[
-{1\over{G_{-}}}I_{\Rbrack 0,\tau\Rbrack} \is \left(\Delta M^h_{\widetilde R}I_{\Lbrack\widetilde R,+\infty\Lbrack}\right)^{p,\mathbb F}
 + {{J^h_{-}}\over{G_{-}}}I_{\Rbrack 0,\tau\Rbrack}\is\left(\Delta m_{\widetilde R}I_{\Lbrack\widetilde R,
 +\infty\Lbrack}\right)^{p,\mathbb F} = 0.
\]
By taking this equality into consideration, after inserting (\ref{Term1}), (\ref{Term2}) and (\ref{Term3}) in (\ref{Jtau1}), we get
\begin{align}\label{JtauMainEquation}
d(J^h)^{\tau} & = {1\over{G_{-}^{\tau}}}d{\widehat M^h} -{{(J^h)^{\tau}_{-}}\over{G_{-}^{\tau}}}d{\widehat m} +
 \left\{{{\Delta Y+h\Delta m}\over{GG_{-}}} + {{J^h_{-}G_{-}}\over{\widetilde G G}}
-{{h}\over{G_{-}}}-{{\Delta m\Delta M^h}\over{\widetilde G G G_{-}}}\right\}I_{\Rbrack 0,\tau\Lbrack}dD^{o,\mathbb F}\nonumber\\
\nonumber\\
&\quad +\left[{{(J^h)^{\tau}_{-}(G\Delta m-G_{-}\Delta G)}\over{GG_{-}^\tau}}I_{\Rbrack 0,R\Lbrack}-{{(J^h)_{-}^{\tau}(\Delta m)^2-
\Delta m\Delta M^h}\over{\widetilde G G_{-}^{\tau}}}+{{(J^h)_{-}^{\tau}\Delta m}\over{G_{-}^{\tau}}}I_{\Lbrack R\Rbrack} \right. \nonumber \\
& \qquad \quad  \left. -{{h\Delta D^{o,\mathbb F}}\over{G_{-}^{\tau}}}-\frac{\Delta Y \Delta G}{G G_-} I_{\Rbrack 0, R \Lbrack}\right]dD\nonumber\\
\nonumber\\
 &=: {1\over{G_{-}^{\tau}}}d\widehat{M^h} -{{(J^h)^{\tau}_{-}}\over{G_{-}^{\tau}}}d{\widehat m}+
 \xi^{(1)}I_{\Rbrack 0,\tau\Lbrack}dD^{o,\mathbb F}+\left[\xi^{(2)}I_{\Rbrack 0,R\Lbrack}+\xi^{(3)}I_{\Lbrack R\Rbrack}\right]dD.
 \end{align}

 Now, we need to simplify the expressions $\xi^{(i)}$ for $i=1,2,3$. In fact, on $\Rbrack 0,\tau\Lbrack$, we calculate
 \begin{equation}\label{xi1}
 \xi^{(1)}={{\Delta Y+h\Delta m}\over{GG_{-}}} + {{J^h_{-}G_{-}}\over{\widetilde G G}}
 -{{h}\over{G_{-}}}-{{\Delta m\Delta M^h}\over{\widetilde G G G_{-}}}={{J^h-h}\over{\widetilde G}}.
 \end{equation}

 Similarly, on $\Rbrack 0,R\Lbrack\cap\Rbrack 0,\tau\Rbrack$, we use $\Delta Y=GJ^h-G_{-}J^h_{-}$, $\Delta M^h=
 \Delta Y+h\Delta D^{o,\mathbb F}$, $\Delta D^{o,\mathbb F}=\widetilde G-G$, and $\Delta m=\widetilde G-G_{-}$, and we derive
 \begin{align}\label{xi2}
 \xi^{(2)}&=\Bigl[{{J^h_{-}(G\Delta m-G_{-}\Delta G)}\over{GG_{-}}}-
 {{h\Delta D^{o,\mathbb F}}\over{G_{-}}}-{{\Delta Y\Delta G}\over{GG_{-}}}\Bigr]-{{J^h_{-}(\Delta m)^2-\Delta m\Delta M^h}\over{\widetilde G G_{-}}}\nonumber\\
 &={{{\widetilde G}G(J^h_{-}-h)-G^2(J^h-h)+  GG_{-}\Delta J^h}\over{GG_{-}}}-{{\Delta m\left[\widetilde G(J^h_{-}-h)-
 G(J^h-h)\right]}\over{\widetilde G G_{-}}}
 \nonumber\\
 &={{J^h-h}\over{\widetilde G}}\Delta D^{o,\mathbb F}.
 \end{align}
By using $YI_{\Lbrack R\Rbrack}=0$ (since $\tau\leq R\ P$-a.s.) and $\Delta D^{o,\mathbb F}I_{\Lbrack R\Rbrack}=\widetilde GI_{\Lbrack R\Rbrack}$, on
 $\Lbrack R\Rbrack\cap\Rbrack 0,\tau\Rbrack$, we get

 \begin{align}\label{xi3}
 \xi^{(3)}&=-{{J^h_{-}(\Delta m)^2-\Delta m\Delta M^h}\over{\widetilde G G_{-}}}-{{h\Delta D^{o,\mathbb F}}\over{G_{-}}}+
 {{J^h_{-}\Delta m}\over{G_{-}}}\nonumber\\
 &={{-\Delta m\widetilde G(J^h_{-}-h)+{\widetilde G}^2(J^h_{-}-h)-{\widetilde G}G_{-}J^h_{-}}\over{\widetilde G G_{-}}}=-h
 \end{align}

Thus, by inserting (\ref{xi1}), (\ref{xi2}) and (\ref{xi3}) in (\ref{JtauMainEquation}), we obtain
$$
d(J^h)^{\tau}={1\over{G_{-}^{\tau}}}d\widehat{M^h} -{{(J^h)^{\tau}_{-}}\over{G_{-}^{\tau}}}d{\widehat m}+
{{J^h-h}\over{\widetilde G}}I_{\Rbrack 0,\tau\Lbrack}dD^{o,\mathbb F}+
{{J^h-h}\over{\widetilde G}}\Delta D^{o,\mathbb F}I_{\Rbrack 0,R\Lbrack}dD   -hI_{\Lbrack R\Rbrack}dD.$$
Thanks to the facts that $J^hI_{\Lbrack R\Rbrack}=0$ (due to $YI_{\Lbrack R\Rbrack}=0$),
$\Rbrack0, R\Lbrack\cap{\Rbrack 0,\tau\Rbrack}={\Rbrack 0,\tau\Lbrack}\cup\left({\Rbrack 0,R\Rbrack} \cap{\Lbrack\tau\Rbrack}\right)$,
and ${{J^h-h}\over{\widetilde G}}\Delta D^{o,\mathbb F}I_{\Rbrack 0,R\Lbrack}dD=
{{J^h-h}\over{\widetilde G}}I_{\Rbrack 0,R\Lbrack}I_{\Lbrack \tau\Rbrack}d D^{o,\mathbb F}$,
 we conclude that the above equality takes the form of
\begin{eqnarray*}\label{Jtaulastequartion}
d(J^h)^{\tau}={1\over{G_{-}^{\tau}}}d\widehat{M^h} -{{(J^h)^{\tau}_{-}}\over{G_{-}^{\tau}}}d{\widehat m}+
(J^h-h)I_{\Rbrack 0,R\Lbrack}{1\over{\widetilde G}}I_{\Rbrack 0,\tau\Rbrack}dD^{o,\mathbb F} -hI_{\Lbrack R\Rbrack}dD.
\end{eqnarray*}
Hence, by combining this with (\ref{processNG}) and (\ref{DecompHexplicit}),
 and using $J^h I_{\Lbrack R\Rbrack}=0$ (since $YI_{\Lbrack R\Rbrack}=0$), we get
\begin{align*}
dH&=(h-J^h)dD+d(J^h)^{\tau}\\
&=(h-J^h)dD+{1\over{G_{-}^{\tau}}}d\widehat{M^h} -{{(J^h)^{\tau}_{-}}\over{G_{-}^{\tau}}}d{\widehat m}+
(J^h-h)I_{\Rbrack 0,R\Lbrack}{1\over{\widetilde G}}I_{\Rbrack 0,\tau\Rbrack}dD^{o,\mathbb F} -hI_{\Lbrack R\Rbrack}dD\\
&={1\over{G_{-}^{\tau}}}d\widehat{M^h} -{{(J^h)^{\tau}_{-}}\over{G_{-}^{\tau}}}d{\widehat m}+(h-J^h)I_{\Rbrack 0,R\Lbrack}dN^{\mathbb G}.
 \end{align*}
This ends the proof of assertion (a).\\
 {\bf Step 2.} To prove assertion (b) it is enough to remark that $H$ is a $\mathbb G$-martingale uniformly integrable, and
  $h\in L^1({\cal O}(\mathbb F),P\otimes D)\subset {\cal I}^o(N^{\mathbb G}, \mathbb G)$. Thus, it is sufficient to prove that
   $J^hI_{\Lbrack 0,R\Lbrack}\in {\cal I}^o(N^{\mathbb G}, \mathbb G)$ when $h\in L\log L({\cal O}(\mathbb F),P\otimes D)$.
   This latter fact requires the following inequality, which holds due to $ J^h_tI_{\Lbrack 0,\tau\Lbrack}(t)=
   \E \big[ h_{\tau} | {\cal G}_t\big]I_{\Lbrack 0,\tau\Lbrack}(t)$,
\begin{align*}
	&\E\left[\int_0^{\infty}\vert J^h_t\vert G_t{\widetilde G_t}^{-1}I_{\Lbrack 0,R\Lbrack}(t)dD_t\right]=
	\E\left[\int_0^{\infty}\vert J^h_t\vert{\widetilde G_t}^{-1}I_{\Lbrack 0,\tau\Lbrack}(t)dD^{o,\mathbb F}_t\right]\\
	&\leq  \E\left[\int_0^{\infty}\E\big[\vert h_{\tau}\vert \mid{\cal G}_t\big]{\widetilde G_t}^{-1}
I_{\Lbrack 0,\tau\Lbrack}(t)dD^{o,\mathbb F}_t\right]=: \E\left[\int_0^{\infty}K^{\mathbb G}_t dV^{\mathbb G}_t\right],
\end{align*}
where
$$K^{\mathbb G}_t:=\E\big[\vert h_{\tau}\vert \mid{\cal G}_t\big] \quad \mbox{and} \quad dV^{\mathbb G}_t:=
\left(\widetilde G_t\right)^{-1}I_{\Lbrack 0,\tau\Lbrack}(t)dD^{o,\mathbb F}_t.$$
Due to $\Delta V^{\mathbb G}=\widetilde G^{-1} \Delta D^{o,\mathbb F} I_{\Lbrack 0,\tau\Lbrack}\leq 1$, we deduce that $$ \E [V^{\mathbb G}_{\infty}-V^{\mathbb G}_{t-}|{\cal G}_t]= \Delta V^{\mathbb G}+  \E [V^{\mathbb G}_{\infty}-V^{\mathbb G}_{t}|{\cal G}_t]\leq 1+\E[V^{\mathbb G}_{\infty}-
V^{\mathbb G}_t|{\cal F}_t](G_t)^{-1}I_{\{t<\tau\}}\leq 1+I_{\{t<\tau\}}\leq 2 .$$ Therefore,  by combining this with Doob's
 inequality for the $\mathbb G$-martingale $K^{\mathbb G}$, we derive
\begin{align*}
	& \E\left[\int_0^{\infty}K^{\mathbb G}_t dV^{\mathbb G}_t\right]\leq \E\left[\int_0^{\infty}\sup_{u\leq t} K^{\mathbb G}_u
 dV^{\mathbb G}_t\right]
	= \E\left[\int_0^{\infty} (V^{\mathbb G}_{\infty}-V^{\mathbb G}_{t-}) d(\sup_{u\leq t} K^{\mathbb G}_u)\right]\\
	&= \E\left[\int_0^{\infty} \E[V^{\mathbb G}_{\infty}-V^{\mathbb G}_{t-}|{\cal G}_t] d(\sup_{u\leq t} K^{\mathbb G}_u)\right]
\leq  2\E\left[\sup_{u\geq 0} K^{\mathbb G}_u\right]\leq 2C \E\left[ K^{\mathbb G}_{\infty}\ln (K^{\mathbb G}_{\infty})\right] + 2C <+\infty,
\end{align*}
where $C$ is a universal positive constant.  This proves that $(h-J^h)I_{\Lbrack 0,R\Lbrack}\is N^{\mathbb G}$
  belongs to ${\cal M}(\mathbb G)$. Hence the remaining process does also belong to ${\cal M}(\mathbb G)$. This ends the proof of assertion (b).\\
{\bf Step 3.} Herein, we prove assertion (c). To this end, we consider $h\in L^2({\cal O}(\mathbb F), P\otimes D)$, and put
\begin{eqnarray*}
M:=M^h-J^h_{-}\is m,\ \ \ \ \Gamma_n:=\left\{\min(G_{-},\widetilde G)\geq n^{-1}\ \&\ \vert \Delta M\vert\leq n\ \&\
\vert \Delta \left(\Delta M_{\widetilde R}I_{\Rbrack\widetilde{R},+\infty\Rbrack}\right)^{p,\mathbb F}\vert\leq n\right\}.
 \end{eqnarray*}
 Then, from the decomposition (\ref{RepresentationofH}), we calculate
  \begin{eqnarray}
  I_{\Gamma_n}\is [H,H]&=&G_{-}^{-2}I_{\Gamma_n}I_{\Rbrack 0,\tau\Rbrack}\is [\widehat{M},\widehat{M}]+(h-J^h)^2
  I_{\Gamma_n}I_{\Rbrack 0,{\widetilde R}\Lbrack}\is [N^{\mathbb G}, N^{\mathbb G}]\nonumber\\
  &+&2(h-J^h)\left({{\Delta M}\over{\widetilde G}}+G_{-}^{-1}\Delta\left(\Delta M_{\widetilde R}
  I_{\Lbrack\widetilde{R},+\infty\Lbrack}\right)^{p,\mathbb F}\right)
  I_{\Gamma_n}I_{\Rbrack 0,{\widetilde R}\Lbrack}\is N^{\mathbb G}.\label{quadratic}
  \end{eqnarray}
  Since $(h-J^h)I_{\Lbrack 0,{\widetilde R}\Lbrack}\in {\cal I}^o(N^{\mathbb G},\mathbb G)$ and $\left({\widetilde G}^{-1}\Delta M+
  G_{-}^{-1}\Delta\left(\Delta M_{\widetilde R}I_{\Lbrack\widetilde{R},+\infty\Lbrack}\right)^{p,\mathbb F}\right)
  I_{\Gamma_n}$ is $\mathbb F$-optional and bounded, then Theorem \ref{theorem2.13} and Davis' inequality (i.e. $\E[\sup_{0\leq t\leq T}\vert M_t\vert]\leq C\E [M,M]_T^{1/2}$ for any martingale $M$) imply that the last process
  in the RHS term of (\ref{quadratic}) is a uniformly integrable martingale. Thus, on the one hand, we obtain
  \begin{equation}\label{quadraticExpectation}
\displaystyle  \E\Bigl[I_{\Gamma_n}\is [H,H]_{\infty}\Bigr]=\displaystyle
\E\Bigl[G_{-}^{-2}I_{\Gamma_n}I_{\Rbrack 0,\tau\Rbrack}\is [\widehat{M},\widehat{M}]_{\infty}\Big]+
\displaystyle \E\Bigl[(h-J^h)^2
  I_{\Gamma_n}I_{\Rbrack 0,{\widetilde R}\Lbrack}\is [N^{\mathbb G}, N^{\mathbb G}]_{\infty}\Bigr],
  \end{equation}
  and on the other hand, $I_{\Gamma_n}I_{\Rbrack 0,\tau\Rbrack}$ increases to $I_{\Rbrack 0,\tau\Rbrack}$ almost surely.
  Therefore, the proof of assertion (c) follows from a combination of Fatou's lemma, $\E[H,H]_{\infty}<+\infty$, and
  (\ref{quadraticExpectation}).  This proves that the two \(\mathbb G\)-local martingales
 $$L^{\mathbb G}:=\left(hG-M^h +h \is D^{o,\mathbb F}\right)G^{-1}I_{\Rbrack 0,R\Lbrack} \is N^{\mathbb G}\ \mbox{and}\ 
 M^{\mathbb G}:= {1\over{G_-}}I_{\Rbrack 0,\tau\Rbrack} \is  \widehat{M^h}
 -{{M^h_{-} - (h \is D^{o,\mathbb F})_{-}}\over{G_{-}^2}}I_{\Rbrack 0,\tau\Rbrack} \is \widehat{m}$$
 are square integrable. The orthogonality between these martingales  follows from combining the facts that $L^{\mathbb G}$ is a pure default martingale 
 (of the first type) and $M^{\mathbb G}$ takes the form of $M^{\mathbb G}=G_{-}^{-2}I_{\Rbrack0,\tau\Rbrack}\is\widehat M$, where $M\in {\cal M}_{loc}(\mathbb F)$, 
 Remark \ref{orthogonalityofMhat}, and $[L^{\mathbb G}, M^{\mathbb G}]\in{\cal A}(\mathbb G)$. This ends the proof of the theorem.
\end{proof} 


\end{appendices}

\end{document}